\newtheorem{thm}{Theorem}[section]
\newtheorem{conj}[thm]{Conjecture}
\newtheorem{prop}[thm]{Proposition}
\newtheorem{lem}[thm]{Lemma}
\newtheorem{cor}[thm]{Corollary}
\newtheorem{rmk}[thm]{Remark}
\newtheorem{thmA}{Theorem}
\newtheorem{corA}[thmA]{Corollary}
\newtheorem{propA}[thmA]{Proposition}
\theoremstyle{definition}
\newtheorem{defn}[thm]{Definition}
\newcommand{\T}{\mathbb{T}}
\newcommand{\R}{\mathbb{R}}
\newcommand*\circled[1]{\tikz[baseline=(char.base)]{
            \node[shape=circle,draw,inner sep=2pt] (char) {#1};}}
\title[]{An analytical study of flatness and intermittency through Riemann's non-differentiable functions}
\author[D. Eceizabarrena]{Daniel Eceizabarrena}
 \address[D. Eceizabarrena]{Department of Mathematics and Statistics, Univeristy of Massachusetts Amherst, Amherst, MA 01003-9305, United States.}
\email{eceizabarrena@math.umass.edu}
\author[V. Vila\c ca Da Rocha]{Victor Vila\c ca Da Rocha}
 \address[V. Vila\c ca Da Rocha]{School of Mathematics, Georgia Institute of Technology, Atlanta, GA 30332-0160, United States.}
\email{vrocha3@gatech.edu}
\date{\today}
\begin{document}

\begin{abstract}
In the study of turbulence, intermittency is a measure of how much Kolmogorov's theory of 1941 deviates from experimental measurements. It is quantified with the flatness of the velocity of the fluid, usually based on structure functions in the physical space. However, it can also be defined with Fourier high-pass filters. Experimental and numerical simulations suggest that the two approaches do not always give the same results. Our purpose is to compare them from the analytical point of view of functions. We do that by studying generalizations of Riemann's non-differentiable function, yielding computations that are related to some classical problems in Fourier analysis. The conclusion is that the result strongly depends on regularity. To visualize this, we establish an analogy between these generalizations and the influence of viscosity in turbulent flows. 
This article is motivated by the mathematical works on the multifractal formalism and the discovery of Riemann's non-differentiable function as a trajectory of polygonal vortex filaments.
\end{abstract}

\subjclass{76F99, 42A16, 33E20, 28A80}
\keywords{Intermittency, flatness, turbulence, multifractal formalism, structure functions, Riemann's non-differentiable function}

\maketitle

\section{Introduction}

Small-scale intermittency is one of the properties observed in turbulence and it measures how much experimental results deviate from Kolmogorov's statistical theory of 1941 (shortened K41). 
Given that turbulence is characterized by the irregular motion of the fluid, this theory is based in studying the variations of the velocity field in small scales. 
In the three dimensional setting, let us denote the position by $x \in \mathbb R^3$ and the velocity vector by $u(x) \in \mathbb R^3$.
Then, the velocity increments are
\begin{equation}\label{Velocity_Increments}
    \delta_l u(x) = u(x+l) - u(x), \qquad \text{ for small } l .
\end{equation}
If the flow is supposed to be homogeneous, the statistical properties of these increments do not depend on the position~$x$. Besides, if the flow is isotropic, the increments depend only on the scale parameter $\ell = |l|$, not on the direction of $l$. 
This way, the structure functions 
\begin{equation}\label{StructureFunctionPhysical}
    S_p(\ell) = \langle |\delta_\ell u|^p \rangle, \qquad \ell \ll 1,
\end{equation}
which are the statistical $p$-moments of the increments $\delta_l u$ and contain much information of the motion of the fluid, depend only on the scale parameter $\ell$.
In \eqref{StructureFunctionPhysical}, $|\cdot |$ should be understood as some vector norm.
Kolmogorov deduced that the second moment should satisfy 
\begin{equation}\label{Kolmogorov2nd}
    S_2(\ell) \simeq \ell^{2/3}, %\qquad \qquad \text{ when } \ell \ll 1 \text{ is in the inertial range},
\end{equation}
the so-called two-thirds law, when $\ell$ is in the inertial range. 
When $p$ is odd, it is usual to work with the longitudinal structure functions 
\begin{equation}
 S_p^{||}(\ell) = \Big\langle \left( \delta_\ell u  \cdot \frac{\ell}{|\ell|} \right)^p  \Big\rangle ,
\end{equation}
which account for possible cancellations. 
In fact, for the  third-order longitudinal structure function, Kolmogorov derived the celebrated four-fifths law, 
which states that in the limit of zero viscosity we have
\begin{equation}\label{Kolmogorov3rd}
    S_3^{||}(\ell) = -\frac45 \varepsilon\, \ell, \qquad \ell \ll 1,
\end{equation}
where $\epsilon$, the mean energy dissipation per unit mass, is assumed to be finite and non-zero.
In general, for higher order moments, with the additional assumption that the flow is statistically self-similar in small scales, he deduced
\begin{equation}\label{KolmogorovHigher}
    S_p(\ell) \simeq \ell^{p/3}, \qquad  \forall p \geq 4.
\end{equation}
While experimental data match \eqref{Kolmogorov2nd} and \eqref{Kolmogorov3rd}, they deviate considerably from \eqref{KolmogorovHigher}, which causes Kolmogorov's self-similar hypothesis being questioned. This departure from K41 and its self-similarity  is called intermittency, which is typically quantified using the flatness
\begin{equation}\label{FlatnessPhysical1}
    F(\ell) =  \frac{S_4(\ell)}{S_2(\ell)^2} = \frac{\langle |\delta_\ell u|^4 \rangle}{\langle |\delta_\ell u|^2 \rangle^2} , \qquad 0 < \ell \ll  1.
\end{equation}
Indeed, based on \eqref{KolmogorovHigher}, $F(\ell)$ should be constant, but $F(\ell) \simeq \ell^{-0.1}$ has been observed in experiments with turbulent flows like air and helium jets or in wind tunnels \cite{ChevillardEtAl2012}. In this context, when the flatness $F(\ell)$ grows without bound when $\ell \ll 1$, the flow is said to be intermittent in small scales.

The velocity increments \eqref{Velocity_Increments} are a useful way to describe the variations of the velocity, but not the only one: Fourier high-pass filters have also been used as an alternative in the physical literature (see \cite{BrunPumir,ChevillardEtAl} and \cite[Chapter 8.2]{Frisch}). Indeed, if $\widehat{u}$ denotes the Fourier transform, high-pass filters $u_{\geq N}(x) = \int_{|\xi| > N } \widehat{v}(\xi) e^{ix\xi}\, d\xi$ measure the effect of the frequencies of the signal that are higher than $N$, which have effects in scales smaller than $N^{-1}$.  Thus, the flatness
\begin{equation}\label{FlatnessPhysical2}
    F(N) = \frac{\langle |u_{ \geq N}|^4 \rangle}{ \langle |u_{\geq N}|^2 \rangle^2}, \qquad N \gg 1,
\end{equation}
is an analog to \eqref{FlatnessPhysical1}.
Both \eqref{FlatnessPhysical1} and \eqref{FlatnessPhysical2} have a probabilistic origin motivated by 
the concept of kurtosis. If~$X$ is a random variable of expected value $\mu$ and standard deviation $\sigma$, its kurtosis is defined by
\begin{equation}\label{Kurtosis}
    \operatorname{Kurt}[X] = \mathbb E \left[\left( \frac{X-\mu}{\sigma}\right)^4\right] = \frac{\mathbb E [(X-\mu)^4] }{ \mathbb E[(X-\mu)^2]^2},
\end{equation}
and it measures the ``tailedness" of its probability distribution \cite{Westfall}. This means that a large kurtosis represents a large probability of having outliers. In view of \eqref{FlatnessPhysical1}, the flatness of a turbulent flow is the kurtosis of $\delta_\ell u$, so a diverging flatness represents a large probability of outliers of the velocity increments at scale $\ell$. This inspires the name of \textit{intermittency at small scales}. A similar argument goes for the high-pass filter flatness \eqref{FlatnessPhysical2}. This probabilistic interpretation also caused the identification of intermittency with non-Gaussianity. Indeed, an intermittent fluid with growing flatness cannot have normally distributed increments since normal distributions of any mean and variance have a constant kurtosis equal to 3.

One could expect that the two approaches \eqref{FlatnessPhysical1} and \eqref{FlatnessPhysical2} give similar results after identifying the small scale parameters $\ell \simeq N^{-1}$. However, experimental and numerical data have shown that this may not be the case \cite{BrunPumir,ChevillardEtAl}. To clarify this analytically, in \cite{BoritchevEceizabarrenaVilacaDaRocha} we computed the two flatnesses for the classical Riemann's non-differentiable function (shown to be connected to fluids in  \cite{Jaffard1996,DelaHozVega,BanicaVega2020}), but the two have the same logarithmic growth.  
In this paper we delve deeper and study \eqref{FlatnessPhysical1} and \eqref{FlatnessPhysical2} for the functions
\begin{equation}\label{RiemannGeneralization}
    R_s(x) = \sum_{n=1}^{\infty}{\frac{e^{2\pi i n^2 x}}{n^{2s}}}, \quad x\in [0,1]. \qquad s > 1/2,
\end{equation}
These are generalizations of Riemann's non-differentiable function, which is recovered when $s=1$. With our computations, related to Rudin's $\Lambda(p)$-set problem for the set of squares, we prove that the two flatnesses are different overall, but also that they coincide for small $s$. We establish an analogy 
with the velocity of a turbulent fluid in which $s$ plays the role of viscosity, so that  the two flatnesses coincide in the range of $s$ that corresponds to the inertial (turbulent) range, but not in the dissipation (non-turbulent) range.  
Based on results of Jaffard \cite{Jaffard1}, we will also sketch the analytic reasons behind this, namely that the two flatnesses have the same leading order as long as the function is non-differentiable.

We also prove that all $R_s$ satisfy the multifractal formalism \eqref{MultifractalFormalism1}, based on \cite{Jaffard1996}. Indeed, this work is inspired by the mathematical adaptation of the multifractal formalism
\begin{equation}\label{MultifractalFormalism1}
    \zeta_p = \inf_{\alpha} \left\{ \alpha p + 3 - d(\alpha) \right\}, 
    \qquad \text{ or } \qquad 
    d(\alpha) = \inf_p \left\{ \alpha p + 3 - \zeta_p \right\},
\end{equation}
one of the central pieces of the multifractal models. These models were introduced by Frisch and Parisi \cite{FrischParisi}, \cite[Section 8.5]{Frisch} as a correction to \eqref{KolmogorovHigher} from K41 with the aim to capture the phenomenon of intermittency as observed in experiments. For \eqref{MultifractalFormalism1}, it is assumed that structure functions \eqref{StructureFunctionPhysical} follow a power law $S_p(\ell) \simeq \ell^{\zeta_p}$, and $d(\alpha) = \dim_{\mathcal H}D_\alpha$ is the spectrum of singularities of the velocity $u$, where $D_\alpha = \{ x : |u(x+\ell) - u(x)| \simeq \ell^\alpha \}$ is the set of points where $u$ is $\alpha$-H\"older regular 
and $\operatorname{dim}_{\mathcal H}$ denotes the Hausdorff dimension.
However, the derivation of \eqref{MultifractalFormalism1} was heuristic, which motivated its rigorous study in the setting of functions \cite{DaubechiesLagarias,Eyink,Jaffard1992,Jaffard1996,Jaffard1,Jaffard2,BarralSeuret}.
Our use of $R_s$ comes from Jaffard proving that Riemann's non-differentiable function $R_1$ satisfies the multifractal formalism \cite{Jaffard1996}. Riemann's function also appears as the trajectory of polygonal vortex filaments driven by the binormal flow \cite{DelaHozVega,BanicaVega2020}, and has thus been geometrically studied \cite{ChamizoCordoba1999,Eceizabarrena2019,Eceizabarrena1,Eceizabarrena2020}. Further generalizations of $R_s$ have also been studied \cite{ChamizoCordoba1999,ChamizoUbis2007,ChamizoUbis2014}.

According to all this, the functions $R_s$ capture some of the properties observed in experimental turbulence. 
However, 
while the chaotic structure of turbulence is usually studied from a statistical-probabilistic point of view, the functions $R_s$ are completely deterministic. 
Therefore, their power as models for turbulent phenomena is limited, and we should not expect them to capture all properties of turbulence. 
In this sense, it would be very interesting to develop a rigorous mathematical study of properties that we treat in this article, such as the flatness and the multifractal formalism, for other models that have been proposed for turbulence and intermittency with a random component.

\section{Results and discussion}\label{SECTION_Definitions_And_Results}

\subsection{Adaptation to the analytical setting}

Let us adapt the concepts presented in the introduction to the setting of functions, following Jaffard \cite{Jaffard1996,Jaffard1,Jaffard2}.
Let $\mathbb T = \mathbb R / \mathbb Z$ be the torus, $a_n \in \mathbb{C}$ and $f:\mathbb{T} \to \mathbb{C}$ a periodic function with Fourier series $f(x) =\sum_{n\in\mathbb{Z}}{a_n\,e^{2\pi i n x}}$. 
The usual way \cite[p. 946]{Jaffard1} to adapt structure functions from \eqref{StructureFunctionPhysical} to $f$ is 
\begin{equation}\label{StructureFunction}
    S_{f,p}(\ell) = \int_{\T}{|f(x+\ell)-f(x)|^p\,dx}, \qquad \ell \in [0,1], \qquad p \geq 0,
\end{equation} and high-pass filters from \eqref{FlatnessPhysical2} as 
\begin{equation}\label{HighPassFilterDefinition}
    f_{\geq N}(x) = \sum_{|n| \geq N}{a_n e^{2\pi i n x}}, \qquad N \in \mathbb N.
\end{equation}
To avoid $f_{\geq N} \equiv 0$, we assume that $f$ has infinitely many nonzero $a_n$ coefficients.
Low-pass filters are defined analogously by $ f_{\leq N}(x) = \sum_{|n| \leq N}{a_n e^{2\pi i n x}}$. 

\begin{rmk}
In the statistical analysis of the velocity field $u$ of a fluid, 
assuming homogeneity means that its statistical properties do not depend on the position. 
Thus, the statistical average in the definition of the structure functions in \eqref{StructureFunctionPhysical} can be interpreted as an average in space.
This motivates the definition in \eqref{StructureFunction}.
\end{rmk}

In the same way that the deviation of $S_4(\ell)$ from Kolmogorov's theory is measured by the flatness \eqref{FlatnessPhysical1}, the generalized flatness
\begin{equation}
    F_p(\ell) = \frac{S_p(\ell)}{S_2(\ell)^{p/2}} \qquad \qquad \text{ coming from } \quad \mathbb E \left[\left(\frac{|X-\mu|}{\sigma}\right)^p\right] = \frac{\mathbb E [|X-\mu|^p]}{\mathbb E [(X-\mu)^2]^{p/2}}
\end{equation}
is the analogue measure corresponding to $S_p(\ell)$, see \cite[(8.6)]{Frisch}.
Moreover, the generalized kurtosis on the right hand-side is an alternative measure of the thickness of the tails of a probability distribution. Thus, we adapt the flatness from \eqref{FlatnessPhysical1} and \eqref{FlatnessPhysical2} as follows.

\begin{defn}[Flatness in the sense of high-pass filters]\label{DefinitionFlatnessHP}
Let $2 < p < \infty$ and $N > 1$. We define the $p$-flatness
of $f$ at scale $N^{-1}$ in the sense of high-pass filters as
\begin{equation}\label{pqFlatness_HP}
    	F_{f,p}(N) = \frac{ \lVert f_{\geq N} \rVert_{L^p(\mathbb{T})}^p }{ \lVert f_{\geq N} \rVert_{L^2(\mathbb{T})}^p }. %\qquad \text{ and } \qquad 	F_{f,(p,q)}(N) = \frac{ \lVert f_{\geq N} \rVert_{L^p(\mathbb{T})} }{ \lVert f_{\geq N} \rVert_{L^q(\mathbb{T})} },
    	%\qquad \qquad N > 1,
\end{equation}
\end{defn}

\begin{defn}[Flatness in the sense of structure functions]\label{DefinitionFlatnessSF}
Let $2 < p < \infty$ and $0 < \ell < 1$. We define the $p$-flatness
of $f$ at scale $\ell$ in the sense of  structure functions as
\begin{equation}\label{pqFlatness_SF}
    	G_{f,p}(\ell) = \frac{ S_{f,p}(\ell) }{S_{f,2}(\ell)^{p/2} }. 
    	%\qquad \text{ and } \qquad 	G_{f,(p,q)}(\ell) = \frac{ S_{f,p}(\ell)^{1/p} }{S_{f,q}(\ell)^{1/q} },
    	%\qquad \qquad 0 < \ell < 1.
\end{equation}
\end{defn}

\subsection{Results}

In \cite{BoritchevEceizabarrenaVilacaDaRocha},  filaments, 
we proved that
\begin{equation}
    F_{R_1,4}(N) \simeq \log N, \quad (N \gg 1)  \qquad \qquad   G_{R_1,4}(\ell) \simeq \log \ell^{-1}, \quad (\ell \ll 1)
\end{equation}
where $R_1$ is Riemann's non-differentiable function. 
Here we study the generalizations $R_s$ defined in \eqref{RiemannGeneralization} to compare the two flatness $F_{f,p}(N)$ and $G_{f,p}(\ell)$ in a more general situation. 
These functions are absolutely convergent when $s>1/2$, but can be shown to be in $L^p(\mathbb T)$ for lower values of $s$, so for $p\geq2$ we define the threshold 
\begin{equation}\label{defsstar}
 s_p^\star=
 \left\{  \begin{array}{ll}
        1/4, & p \leq 4, \\
        1/2-1/p, & p>4.
        \end{array}
    \right.
\end{equation}

\begin{thmA}\label{Theorem_FlatnessHP}
Let $p\geq2$, $s>s_p^\star$ defined in \eqref{defsstar} and $R_s$ defined in \eqref{RiemannGeneralization}. 
Then, as $N \to \infty$,
\begin{equation}
    F_{R_s,p}(N) \simeq \left\{  \begin{array}{ll}
        1, & p < 4, \\
        \log N, & p=4, \\
        N^{p/4 - 1}, & p > 4.
        \end{array}
    \right.
\end{equation}
\end{thmA}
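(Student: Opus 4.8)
The plan is to estimate the numerator and denominator of $F_{R_s,p}(N)$ separately, exploiting that $R_s$ has Fourier spectrum on the squares, with $a_{n^2}=n^{-2s}$. Since $|m|\geq N$ with $m=n^2$ forces $n\geq\sqrt N$, the high-pass filter from \eqref{HighPassFilterDefinition} is
\[
(R_s)_{\geq N}(x)=\sum_{n\geq\sqrt N}\frac{e^{2\pi i n^2 x}}{n^{2s}}.
\]
The denominator is immediate: the $e^{2\pi i n^2 x}$ are orthonormal in $L^2(\mathbb{T})$, so Parseval gives $\lVert (R_s)_{\geq N}\rVert_{L^2(\mathbb{T})}^2=\sum_{n\geq\sqrt N}n^{-4s}\simeq N^{1/2-2s}$, the sum converging with this asymptotic because $s>s_p^\star\geq 1/4$ in every range of \eqref{defsstar}; hence $\lVert (R_s)_{\geq N}\rVert_{L^2(\mathbb{T})}^p\simeq N^{p/4-sp}$, and everything reduces to estimating $\lVert (R_s)_{\geq N}\rVert_{L^p(\mathbb{T})}^p$. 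The unifying principle is that this numerator is governed by the classical $L^p$ moment behaviour of quadratic Weyl sums,
\[
\Big\lVert \sum_{M\le n<2M} e^{2\pi i n^2 x}\Big\rVert_{L^p(\mathbb{T})}^p \simeq
\begin{cases} M^{p/2}, & 2\le p<4,\\ M^2\log M, & p=4,\\ M^{p-2}, & p>4,\end{cases}
\]
whose three regimes produce the three regimes of the theorem.

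For $2\le p<4$ I would use directly that the squares form a $\Lambda(p)$-set (Rudin): there is $C_p$, independent of the frequencies, with $\lVert g\rVert_{L^p}\le C_p\lVert g\rVert_{L^2}$ for every $g$ with spectrum in the squares. Applied to the partial sums of $(R_s)_{\geq N}$ and passed to the limit, together with the trivial $\lVert\cdot\rVert_{L^p}\geq\lVert\cdot\rVert_{L^2}$ (Jensen on $\mathbb{T}$), this yields $\lVert (R_s)_{\geq N}\rVert_{L^p}\simeq \lVert (R_s)_{\geq N}\rVert_{L^2}$ and hence $F_{R_s,p}(N)\simeq 1$. For the critical $p=4$ I would expand
\[
\lVert (R_s)_{\geq N}\rVert_{L^4(\mathbb{T})}^4=\sum_{\substack{n_1,n_2,n_3,n_4\geq\sqrt N\\ n_1^2+n_2^2=n_3^2+n_4^2}}\frac{1}{(n_1n_2n_3n_4)^{2s}},
\]
split off the diagonal $\{n_1,n_2\}=\{n_3,n_4\}$, which reproduces $\lVert (R_s)_{\geq N}\rVert_{L^2}^4\simeq N^{1-4s}$, and control the genuine solutions by the second-moment bound $\sum_{m\le X}r_2(m)^2\simeq X\log X$ for representations as sums of two squares. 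A dyadic summation then produces an off-diagonal contribution $\simeq N^{1-4s}\log N$ (dominated by the innermost scale $\sqrt N$ since $2-8s<0$), so $F_{R_s,4}(N)\simeq\log N$.

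For $p>4$ the $L^p$ mass concentrates at the origin. For the lower bound, on an interval of length $\simeq 1/N$ around $0$ the phases of the block $n\in[\sqrt N,2\sqrt N)$ are $O(1)$, so a coherent-block argument gives $|(R_s)_{\geq N}(x)|\gtrsim \sum_{\sqrt N\le n<2\sqrt N}n^{-2s}\simeq N^{1/2-s}$ there; integrating, $\lVert (R_s)_{\geq N}\rVert_{L^p}^p\gtrsim N^{-1}\big(N^{1/2-s}\big)^p=N^{p/2-1-sp}$, whence $F_{R_s,p}(N)\gtrsim N^{p/4-1}$. For the matching upper bound I would decompose $(R_s)_{\geq N}=\sum_{j\geq 0}g_j$ into dyadic blocks $g_j$ supported on $n\in[2^j\sqrt N,2^{j+1}\sqrt N)$, apply $\lVert (R_s)_{\geq N}\rVert_{L^p}\leq\sum_j\lVert g_j\rVert_{L^p}$, and on each block use summation by parts to factor out the coefficient size $\simeq (2^j\sqrt N)^{-2s}$ and reduce to the Weyl-sum moment $M^{p-2}$ above. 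This gives $\lVert g_j\rVert_{L^p}\lesssim (2^j\sqrt N)^{\,1-2/p-2s}$, and the geometric series in $j$ converges exactly when $1-2/p-2s<0$, i.e.\ when $s>1/2-1/p=s_p^\star$, and is dominated by $j=0$, producing $\lVert (R_s)_{\geq N}\rVert_{L^p}^p\lesssim N^{p/2-1-sp}$. Combining the two bounds gives $F_{R_s,p}(N)\simeq N^{p/4-1}$.

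The main obstacle is this sharp upper bound for $p>4$: its whole content is the moment $\lVert\sum_{M\le n<2M}e^{2\pi i n^2 x}\rVert_{L^p}^p\simeq M^{p-2}$, a major-arc (circle-method) statement expressing that the singularity at the origin, together with its rescaled copies at rationals $a/q$ governed by the Gauss sums $\sum_{r} e^{2\pi i a r^2/q}$ of size $\simeq\sqrt q$, dominates the moment without any logarithmic loss. It is precisely here that the threshold $s_p^\star$ of \eqref{defsstar} is forced: it is the convergence condition of the dyadic series, equivalently the $L^p$-integrability threshold of the cusp $\simeq|x-a/q|^{s-1/2}$, which is the same as requiring $R_s\in L^p$. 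By contrast, the naive interpolation $\lVert g\rVert_{L^p}^p\leq\lVert g\rVert_{L^\infty}^{p-4}\lVert g\rVert_{L^4}^4$ only yields $N^{p/2-1-sp}\log N$ and is unavailable once $s\leq 1/2$, which is exactly why the per-block analysis, rather than a soft argument, is needed.
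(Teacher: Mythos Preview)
Your overall strategy matches the paper's: both reduce the flatness to the $L^p$ asymptotics of the high-pass filter, decompose dyadically in the index $n$, and on each block invoke the $L^p$ moments of quadratic Weyl sums (packaged in the paper as Zalcwasser's Theorem~\ref{TheoremZalcwasser}). Your upper bound for $p>4$ via the triangle inequality and Abel summation is essentially the paper's argument (the paper uses a Fourier-multiplier lemma, Lemma~\ref{lemmaluis}, in place of summation by parts, with the same effect), and your treatments of $p<4$ via the $\Lambda(p)$ property of the squares and of $p=4$ via the $r_2$ second moment are, if anything, more direct than the paper's uniform Littlewood--Paley route.

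The genuine gap is in your lower bound for $p>4$. The coherent-phase argument on $|x|\lesssim 1/N$ shows that the \emph{first block} $g_0$ is pointwise $\gtrsim N^{1/2-s}$ there, but you assert this for the full tail $(R_s)_{\ge N}=\sum_{j\ge 0}g_j$, and nothing you wrote prevents the higher blocks from cancelling $g_0$ on that interval. For $s>1/2$ one can salvage the pointwise claim by a variation estimate around $x=0$, but in the range $s_p^\star<s\le 1/2$ (nonempty whenever $p>4$) the high-pass filter is not even in $L^\infty$, so no pointwise argument can work---you yourself flag exactly this obstruction when discussing the upper bound. The paper sidesteps the issue by invoking the Littlewood--Paley inequality, which gives $\lVert (R_s)_{\ge N}\rVert_p\gtrsim\lVert \Delta_{k+1}\rVert_p$ in one line (see \eqref{LowerBoundWithLP}). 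An alternative repair within your framework is to combine your upper bound with the reverse triangle inequality $\lVert(R_s)_{\ge N}\rVert_p\ge\lVert g_0\rVert_p-\lVert(R_s)_{\ge 4N}\rVert_p$ and enlarge the dyadic ratio until the constants separate, which is precisely the mechanism of Remark~\ref{RemarkLittlewoodPaley}.
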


\begin{thmA}\label{Theorem_FlatnessSF}
Let $p\geq2$, $s>s_p^\star$ defined in \eqref{defsstar} and $R_s$ defined in \eqref{RiemannGeneralization}. Then, as $\ell \to 0$,
\begin{itemize}
    \item when $s < 5/4$, 
    \begin{equation}
    G_{R_s,p}(\ell) \simeq \left\{  \begin{array}{ll}
        1, & p < 4, \\
        \log (\ell^{-1}), & p=4, \\
        \ell^{-(p/4 - 1)}, & p > 4.
    \end{array}
    \right.
   % \qquad \text{ when } s < 5/4,
\end{equation}

    \item when $s=5/4$, 
\begin{equation}
    G_{R_s,p}(\ell) \simeq \left\{  \begin{array}{ll}
        1, & p < 4, \\
        \ell^{-(p/4 - 1)}\, (\log (\ell^{-1}))^{-p/2}, & p > 4,
    \end{array}
    \right. 
    %\quad \text{ and } \quad 1 \lesssim G_{R_s,4}(\ell) \lesssim (\log\ell^{-1})^{1/4}. 
   % \qquad \text{ when } s = 5/4,
\end{equation}
and when $p=4$ we have $1 \lesssim G_{R_s,4}(\ell) \lesssim \log(\ell^{-1})$.

    \item when $5/4 < s < 3/2$, 
\begin{equation}
    G_{R_s,p}(\ell) \simeq \left\{  \begin{array}{ll}
        1, & p <  2/(3-2s), \\
        \ell^{-(3p/2 - 1 - ps)}, & p > 2/(3-2s),  
        \end{array}
    \right. 
    %\quad \text{ and } \quad  1 \lesssim G_{R_s, \frac{2}{3-2s}}(\ell) \lesssim (\log\ell^{-1})^{1/2}.
    % \qquad \text{ when } 5/4 < s < 3/2,
\end{equation}
and when $p=2/(3-2s)$ we have $1 \lesssim G_{R_s, p}(\ell) \lesssim (\log(\ell^{-1}))^{p/2}$.

    \item when $s \geq 3/2$, we have $G_{R_s,p}(\ell) \simeq 1$ for all $p>0$.

\end{itemize}
\end{thmA}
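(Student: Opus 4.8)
The plan is to compute the structure functions $S_{R_s,p}(\ell)$ directly and then form the quotient $G_{R_s,p}(\ell)$. The key observation is that the increments have an explicit Fourier series: writing $R_s(x+\ell)-R_s(x) = \sum_{n\geq 1} n^{-2s}(e^{2\pi i n^2\ell}-1)\,e^{2\pi i n^2 x}$, the $L^p$ norm of the increment is governed by the coefficients $b_n(\ell) = n^{-2s}(e^{2\pi i n^2\ell}-1)$, whose modulus is $\simeq n^{-2s}\min(n^2\ell,1)$. First I would handle the denominator: since $L^2$ is a Plancherel computation, $S_{R_s,2}(\ell) = \sum_{n\geq 1} |b_n(\ell)|^2 \simeq \sum_n n^{-4s}\min(n^2\ell,1)^2$, which splits at the cutoff $n \simeq \ell^{-1/2}$ into a low-frequency sum $\ell^2 \sum_{n \lesssim \ell^{-1/2}} n^{4-4s}$ and a high-frequency tail $\sum_{n \gtrsim \ell^{-1/2}} n^{-4s}$. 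Evaluating these geometric-type sums against the exponents produces the transition at $s=5/4$ (where $4-4s=-1$) and determines the leading power of $\ell$ in $S_{R_s,2}(\ell)$; this is exactly the threshold appearing in the statement.

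The numerator $S_{R_s,p}(\ell) = \lVert R_s(\cdot+\ell)-R_s \rVert_{L^p}^p$ for $p>2$ is the genuinely hard part, because there is no Plancherel identity. Here I would invoke the connection to Rudin's $\Lambda(p)$-problem for the squares that the authors advertise in the introduction: the set $\{n^2\}$ is a $\Lambda(p)$-set for all $p$, but \emph{not} uniformly, and the relevant $\Lambda(p)$-constants grow. Concretely, for lacunary-type or square-frequency series one has two-sided bounds comparing $\lVert \sum b_n e^{2\pi i n^2 x}\rVert_{L^p}$ to a combination of the $\ell^2$ norm and the $\ell^p$ norm of the coefficients, with the $\ell^p$ contribution carrying the extra growth responsible for the $p>4$ regime. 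I expect the governing estimate to be of the form $S_{R_s,p}(\ell) \simeq \big(\sum_n |b_n(\ell)|^2\big)^{p/2} + (\text{correction from high frequencies})$, where the correction again splits at $n\simeq \ell^{-1/2}$ and, after summation, contributes a term scaling like $\ell^{\,3p/2-1-ps}$ (up to logarithmic factors). Matching which of the two terms dominates yields the critical exponent $p = 2/(3-2s)$ in the regime $5/4<s<3/2$, and the crossover $p=4$ in the regime $s<5/4$.

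With both quantities in hand, the theorem reduces to dividing $S_{R_s,p}(\ell)$ by $S_{R_s,2}(\ell)^{p/2}$ and tracking the competing powers of $\ell$ across the parameter ranges. The four bullet points correspond to the regions $s<5/4$, $s=5/4$, $5/4<s<3/2$, and $s\geq 3/2$, each distinguished by which sum (low- or high-frequency, $\ell^2$ or $\ell^p$) dominates. The case $s\geq 3/2$ is easiest: then $R_s$ is regular enough (Lipschitz or better) that $S_{R_s,p}(\ell)\simeq \ell^{p}$ for every $p$, so the quotient is $\simeq 1$ identically, matching the idea that sufficient regularity removes all intermittency. The boundary cases $s=5/4$ and $p$ equal to the critical exponent are where the power-law sums degenerate into logarithms, so I would treat these separately by retaining the $\log(\ell^{-1})$ factors dropped elsewhere, which accounts for the mixed upper/lower bounds stated at $p=4$ (when $s=5/4$) and at $p=2/(3-2s)$.

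The main obstacle is the sharp two-sided $L^p$ estimate for the square-frequency series in the supercritical range $p>4$: the upper bound follows from $\Lambda(p)$-type inequalities, but the matching lower bound requires exhibiting that the $\ell^p$ coefficient contribution is actually attained, which typically demands a more delicate argument (for instance testing against a carefully chosen dual function, or exploiting near-resonances among the $n^2\ell$ phases). I expect this lower bound, together with the bookkeeping of logarithmic corrections at the critical exponents, to be the technical heart of the proof; everything else is summation of explicit power series against the threshold $n\simeq \ell^{-1/2}$.
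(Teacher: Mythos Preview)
Your overall architecture is right and matches the paper: compute $S_{R_s,p}(\ell)$, split the increment into low frequencies $n\lesssim\ell^{-1/2}$ and a high-frequency tail, estimate each piece, and divide by $S_{R_s,2}(\ell)^{p/2}$. The $L^2$ computation and the identification of the thresholds $s=5/4$ and $s=3/2$ are correct.

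The gap is in your proposed $L^p$ mechanism for $p>4$. Your claim that ``$\{n^2\}$ is a $\Lambda(p)$-set for all $p$'' is false: squares are $\Lambda(p)$ only for $p<4$ (this is Bourgain's theorem; for $p\ge4$ they are not). More importantly, the $L^p$ norm of a square-frequency trigonometric polynomial is \emph{not} governed by a combination of the $\ell^2$ and $\ell^p$ norms of its coefficients. The correct input is Zalcwasser's theorem: $\bigl\|\sum_{n=1}^N e^{2\pi i n^2 x}\bigr\|_p \simeq N^{1-2/p}$ for $p>4$ (with a logarithm at $p=4$), which is a Weyl-sum / number-theoretic estimate and does not reduce to any coefficient-space norm. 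The paper builds on this via a Littlewood-Paley decomposition to get sharp two-sided bounds on $\bigl\|\sum_{n=1}^N n^{-2\sigma} e^{2\pi i n^2 x}\bigr\|_p$ for all $\sigma$, and those bounds, applied with $\sigma=s-1$ to the low-pass piece, are what drive the numerator.

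Your proposed route to the lower bounds (dual functions, near-resonances) is also not what works here. The paper's trick is: for even $p=2k$, expand $|I_s(\ell)|^{p}=|I_s(\ell)^k|^2$ and use Plancherel together with the positivity of the low-frequency Fourier coefficients (since $\sin(\pi n^2\ell)>0$ for $n^2\ell<1/2$) to show $\|I_s(\ell)\|_p\ge\|(I_s)_{\le(p\ell)^{-1}}(\ell)\|_p$. This gives sharp lower bounds at all even integers; the remaining $p$ are then obtained by H\"older interpolation between consecutive even integers, using the already-established upper bounds. Without this even-$p$ positivity argument, you have no mechanism to produce matching lower bounds in the supercritical range.
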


Theorem~\ref{Theorem_FlatnessHP} follows from the behavior of $\lVert (R_s)_{\geq N} \rVert_{L^p(\mathbb{T})}$, which we compute in Theorem~\ref{TheoremHighPassFilters}.  In the same way, Theorem~\ref{Theorem_FlatnessSF} is a consequence of the behavior of the structure functions $S_p(\ell)$, which we give in Theorem~\ref{TheoremStructureFunctions}. Both Theorems~\ref{TheoremHighPassFilters} and \ref{TheoremStructureFunctions} are a consequence of estimates for the Littlewood-Paley blocks  and low-pass filters of $R_s$ that we compute in Proposition~\ref{PropDeltak} and in Theorem~\ref{TheoremNewZalcwasser}, respectively.  They could be of independent interest; for instance, they are related to Rudin's $\Lambda(p)$-set problem for the set of squares \cite{Rudin,Bourgain1989}, which is relevant among others in the study of the nonlinear Schr\"odinger equation \cite{Bourgain1993Part1}.

\begin{corA}\label{Corollary_Discordance}
The flatness in the sense of high-pass filters (Definition~\ref{DefinitionFlatnessHP}) and the flatness in the sense of structure functions (Definition~\ref{DefinitionFlatnessSF}) are not equivalent in general, not even for the standard case of $p=4$, as shown by $R_s$ with $s > 5/4$. 
\end{corA}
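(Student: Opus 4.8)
The plan is to derive Corollary~\ref{Corollary_Discordance} directly from the asymptotics already established in Theorems~\ref{Theorem_FlatnessHP} and~\ref{Theorem_FlatnessSF}, by exhibiting a single value of $s$ and a single exponent $p$ for which the two flatnesses have genuinely different growth rates as the scale tends to zero. First I would fix $s > 5/4$ and recall that, under the identification $\ell \simeq N^{-1}$ of the small-scale parameters, equivalence of the two notions of flatness would require $F_{R_s,p}(N)$ and $G_{R_s,p}(N^{-1})$ to have the same order of growth for each admissible $p$. The strategy is then to find a $p$ at which these orders provably diverge.

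The key computation is to compare the two theorems in the regime $5/4 < s < 3/2$, since Theorem~\ref{Theorem_FlatnessSF} shows that in this range the structure-function flatness saturates at the constant~$1$ once $p < 2/(3-2s)$ and only begins to grow for $p > 2/(3-2s)$, with the growth rate $\ell^{-(3p/2-1-ps)}$ depending on $s$. By contrast, Theorem~\ref{Theorem_FlatnessHP} asserts that the high-pass flatness $F_{R_s,p}(N)$ is governed by the $s$-independent dichotomy: constant for $p<4$, logarithmic at $p=4$, and $N^{p/4-1}$ for $p>4$. I would therefore select, for a fixed $s$ in $(5/4,3/2)$, an exponent $p$ with $p > 4$ but $p < 2/(3-2s)$; such a $p$ exists precisely because $2/(3-2s) > 4$ when $s > 5/4$. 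For this choice the high-pass flatness grows like $N^{p/4-1}$, an unbounded power of $N$, while the structure-function flatness stays bounded, equal to~$1$ up to constants. This is the sharpest possible form of discordance, and it already settles the general (non-$p=4$) part of the statement.

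To also cover the standard case $p=4$ highlighted in the corollary, I would take $s$ slightly above $5/4$ and invoke Theorem~\ref{Theorem_FlatnessSF} at $p=4$. For $5/4 < s < 3/2$ the threshold $2/(3-2s)$ exceeds $4$, so $p=4$ falls in the regime $p < 2/(3-2s)$ and hence $G_{R_s,4}(\ell) \simeq 1$. On the other hand, Theorem~\ref{Theorem_FlatnessHP} at $p=4$ gives $F_{R_s,4}(N) \simeq \log N$, which is unbounded. Since a bounded quantity cannot be equivalent to an unbounded one under $\ell \simeq N^{-1}$, the two flatnesses differ already for the classical flatness exponent. This handles the clause "not even for the standard case of $p=4$" and completes the proof.

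The main obstacle is not analytic but definitional: I must be careful about what "equivalent" means so that the conclusion is unambiguous. The natural and intended reading is that equivalence would mean $F_{R_s,p}(N) \simeq G_{R_s,p}(N^{-1})$ as $N \to \infty$ for the relevant range of $p$, and under this reading the mismatch between a bounded and an unbounded function is decisive. I would state this identification explicitly at the start of the proof to forestall any ambiguity, and then the argument reduces to the elementary observation, already guaranteed by the two theorems, that for $s>5/4$ there is a range of $p$ (including $p=4$) where one flatness stays bounded while the other diverges. No genuinely new estimate is required; the corollary is a direct logical consequence of the quantitative asymptotics proved earlier.
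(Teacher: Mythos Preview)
Your proposal is correct and matches the paper's approach: the corollary is stated without a separate proof, as an immediate consequence of comparing Theorems~\ref{Theorem_FlatnessHP} and~\ref{Theorem_FlatnessSF}, and your argument does exactly this comparison. One small remark: you restrict to $5/4<s<3/2$, while the statement says $s>5/4$; but for $s\geq 3/2$ Theorem~\ref{Theorem_FlatnessSF} gives $G_{R_s,4}(\ell)\simeq 1$ directly, so that case is even easier and your argument extends trivially.
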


We also prove the multifractal formalism for the functions $R_s$. Let us briefly discuss how \eqref{MultifractalFormalism1} is adapted to the setting of functions; for a detailed account we refer the reader to \cite{Jaffard1,Jaffard2}. A function $f$ is said to be $\alpha$-H\"older at the point $x_0$, and written $f\in C^\alpha(x_0)$, if there is a polynomial $P$ of degree at most $\alpha$ such that $\left| f(x_0+h) - P(h)  \right| \leq C\, h^\alpha$, and the H\"older exponent of $f$ at $x_0$ is $\alpha_f(x_0) = \sup\{ \alpha \mid f \in C^\alpha(x_0)\}$. The $\alpha$-H\"older regularity sets are $D_\alpha = \{ x \mid \alpha_f(x) = \alpha \}$, and $d(\alpha) = \dim_\mathcal{H} D_\alpha$ is the Hausdorff dimension of these sets, called the spectrum of singularities of $f$. On the other hand, the exponent $\zeta_p$ can be defined in several ways \cite{Jaffard1} which we discuss later in \eqref{Exponents}.

\begin{thmA}\label{Theorem_MultifractalFormalism}
Let $s > 1/2$, $p>0$ and $\eta_s(p) =  \sup \{ \sigma \mid R_s \in B^{\sigma/p}_{p,\infty}  \}$, where $B^\sigma_{p,q}$ denote Besov spaces.
Then, the functions $R_s$ defined in \eqref{RiemannGeneralization} satisfy the multifractal formalism in dimension 1, that is, \begin{equation}
    d_s(\alpha) = \inf_{p>0} \{ \alpha p - \eta_s(p) + 1  \},
\end{equation}
for the values of $\alpha$ where $d_s(\alpha)$ is increasing. 
\end{thmA}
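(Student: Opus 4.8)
The plan is to prove the multifractal formalism for $R_s$ by combining three ingredients: a sharp computation of the Besov-regularity exponent $\eta_s(p)$, the exact spectrum of singularities $d_s(\alpha)$, and then a direct verification that the Legendre-type transform in the statement recovers $d_s(\alpha)$ on the increasing branch. First I would compute $\eta_s(p)$ explicitly. Since $R_s \in B^{\sigma/p}_{p,\infty}$ is governed by the $L^p$-norms of the Littlewood--Paley blocks $\Delta_k R_s$, and these are precisely the quantities estimated in Proposition~\ref{PropDeltak}, I would read off $\|\Delta_k R_s\|_{L^p(\mathbb T)}$ and translate the dyadic decay rate into the Besov exponent. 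Concretely, $R_s \in B^{\sigma/p}_{p,\infty}$ iff $2^{k\sigma/p}\|\Delta_k R_s\|_{L^p} \lesssim 1$ uniformly in $k$, so $\eta_s(p)$ is determined by the exponent in $\|\Delta_k R_s\|_{L^p} \simeq 2^{-k\,\theta_s(p)}$; this gives a piecewise-linear-in-$p$ formula for $\eta_s(p)$ whose break-points mirror the thresholds $p=4$ and $p=2/(3-2s)$ already visible in Theorem~\ref{Theorem_FlatnessSF}.

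Next I would identify the spectrum of singularities $d_s(\alpha)$. For $s=1$ this is Jaffard's theorem \cite{Jaffard1996}, and the spectral analysis there relies on the Diophantine structure of the exponentials $e^{2\pi i n^2 x}$ and the Duistermaat-type decomposition of $R_1$ near rationals $p/q$. The generalization to arbitrary $s>1/2$ rescales the local H\"older exponents: at a rational $p/q$ (in lowest terms) the function $R_s$ should behave like $|x-p/q|^{\,\alpha(q)}$ with a regularity that increases linearly in $s$, so the range of attained H\"older exponents and the Hausdorff dimension of each level set $D_\alpha$ can be written down as an $s$-dependent affine image of Jaffard's spectrum. The upper bound for $\dim_{\mathcal H} D_\alpha$ comes from a covering by neighborhoods of rationals with controlled denominators (a standard Jarn\'{i}k/Besicovitch computation), and the matching lower bound from a mass-distribution/ubiquity argument on the same set of rationals.

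With both $\eta_s(p)$ and $d_s(\alpha)$ in hand, the final step is to verify the identity
\begin{equation}
    d_s(\alpha) = \inf_{p>0}\{\alpha p - \eta_s(p) + 1\}
\end{equation}
on the increasing part of $d_s$. Since $\eta_s$ is concave in $p$ (being a sup over an increasing family of Besov conditions, equivalently the $L^p$-block exponents are concave by interpolation), its Legendre transform is genuinely involutive on the relevant range, so the right-hand side is exactly the concave envelope of the graph $(\eta_s'(p),\, 1+\eta_s(p)-p\,\eta_s'(p))$. I would check that this envelope coincides with the affine pieces of $d_s(\alpha)$ precisely where $d_s$ is increasing, using the one general inequality $d_s(\alpha)\le \inf_p\{\alpha p-\eta_s(p)+1\}$, which holds for any function (this is the easy direction, following from the Besov embedding and a counting of wavelet/Fourier coefficients), and then matching the computed formulas to obtain equality.

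The main obstacle I expect is the sharp lower bound for the spectrum $d_s(\alpha)$, i.e.\ showing that the H\"older exponent equals the predicted value at \emph{enough} points to realize the full Hausdorff dimension. Upper bounds on $\alpha_f$ and on dimensions are comparatively soft, but producing a set of the correct dimension on which $R_s$ has exactly the prescribed irregularity requires the ubiquity/mass-transference machinery and a careful analysis of the oscillatory sums $\sum n^{-2s} e^{2\pi i n^2 x}$ near rationals. Fortunately, for $s=1$ this is exactly Jaffard's achievement, so the real work is to verify that his local expansions near $p/q$ survive the replacement of $n^{-2}$ by $n^{-2s}$ with the exponents scaling affinely in $s$; once that local behavior is pinned down, the dimension computation and the Legendre-transform check are routine, and the restriction to the increasing branch of $d_s$ handles the usual failure of the formalism on the decreasing branch.
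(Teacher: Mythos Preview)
Your strategy is essentially the same as the paper's: compute $\eta_s(p)$ from the Littlewood--Paley block estimates in Proposition~\ref{PropDeltak}, take the spectrum $d_s(\alpha)$ from Jaffard, and then check the Legendre-transform identity directly. Two points, however, deserve correction.

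First, you are making the second step much harder than it needs to be. You write that Jaffard's theorem covers only $s=1$ and that ``the real work is to verify that his local expansions near $p/q$ survive the replacement of $n^{-2}$ by $n^{-2s}$''. In fact Jaffard already carried this out for all $s>1/2$: \cite[Corollary~2]{Jaffard1996} gives exactly the formula \eqref{SpectrumOfSingularitiesOfRs} for $d_s(\alpha)$, and the paper simply quotes it. So there is no need for you to redo the Diophantine/ubiquity analysis; you can invoke Jaffard's result directly and the proof reduces to computing $\eta_s(p)$ and a short piecewise-linear Legendre transform.

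Second, there is a small confusion in your description of $\eta_s(p)$. You say its break-points ``mirror the thresholds $p=4$ and $p=2/(3-2s)$ already visible in Theorem~\ref{Theorem_FlatnessSF}''. The threshold $p=2/(3-2s)$ is a feature of the \emph{structure functions}, not of the Besov exponent: $\eta_s(p)$ depends only on $\|\Delta_k R_s\|_{L^p}$, and by Proposition~\ref{PropDeltak} this has a single break at $p=4$, independent of $s$. The resulting $\eta_s$ is therefore piecewise linear with exactly one corner, and the Legendre transform is a two-case computation (slopes $s-1/4$ and $s-1/2$), which immediately reproduces the increasing branch of $d_s$. Your invocation of the general inequality $d_s(\alpha)\le\inf_p\{\alpha p-\eta_s(p)+1\}$ is fine but unnecessary once both sides are explicit.
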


In \cite[Corollary 2]{Jaffard1996}, Jaffard proved that
\begin{equation}\label{SpectrumOfSingularitiesOfRs}
    d_s(\alpha) = \left\{  \begin{array}{ll}
        4\alpha - 4s + 2, & \alpha \in [s-1/2,s-1/4], \\
        0, & \alpha = 2s - 1/2, \\
        - \infty, & \text{otherwise},
    \end{array} \right. \qquad \forall s > 1/2,
\end{equation}
where by convention the dimension is $-\infty$ when the set is empty.
Thus, in order to prove Theorem~\ref{Theorem_MultifractalFormalism}, we compute $\eta_s(p)$:
\begin{propA}\label{Proposition_MultifractalEta}
Let $s>1/2$, $p >0$ and $\eta_s(p) =  \sup \{ \sigma \mid R_s \in B^{\sigma/p}_{p,\infty}  \}$. Then, 
\begin{equation}
\eta_s(p) = \left\{
    \begin{array}{ll}
        p(s-1/4), & p \leq 4, \\
        1 + p(s - 1/2), & p > 4.
    \end{array}
    \right.
\end{equation}
\end{propA}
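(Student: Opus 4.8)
The plan is to invoke the Littlewood--Paley characterization of Besov spaces. For $\beta>0$ one has $R_s\in B^{\beta}_{p,\infty}(\T)$ precisely when $\sup_k 2^{k\beta}\,\|\Delta_k R_s\|_{L^p(\T)}<\infty$, where $\Delta_k$ is the $k$-th Littlewood--Paley block. Writing $\beta=\sigma/p$, this identifies $\eta_s(p)=p\,\beta_c$ as $p$ times the critical smoothness $\beta_c=\sup\{\beta : \sup_k 2^{k\beta}\|\Delta_k R_s\|_{L^p(\T)}<\infty\}$, so the whole proposition reduces to the two-sided asymptotics of $\|\Delta_k R_s\|_{L^p(\T)}$ as $k\to\infty$.

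First I would record the arithmetic structure of a block. Since the Fourier support of $R_s$ is the squares with $a_{n^2}=n^{-2s}$, the block $\Delta_k R_s$ only involves frequencies $n^2\in[2^k,2^{k+1})$, that is $n$ ranging over an interval of length $M_k\simeq 2^{k/2}$ on which $n^{-2s}\simeq 2^{-ks}$ up to a bounded factor. Consequently $\|\Delta_k R_s\|_{L^p(\T)}\simeq 2^{-ks}\,\|D_{M_k}\|_{L^p(\T)}$, where $D_M(x)=\sum_{n\le M}e^{2\pi i n^2 x}$ is the Weyl sum over squares; this is exactly the content of Proposition~\ref{PropDeltak}, which I would quote.

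Next I would substitute the known $L^p$ behavior of $D_M$. Parseval gives $\|D_M\|_{L^2}\simeq M^{1/2}$, and since $\{n^2\}$ is a $\Lambda(p)$-set for $p<4$ the same order $M^{1/2}$ holds there; at $p=4$ the fourth moment counts solutions of $n_1^2+n_2^2=n_3^2+n_4^2$, giving $\|D_M\|_{L^4}\simeq M^{1/2}(\log M)^{1/4}$; and for $p>4$ the $\Lambda(p)$ upper bound for squares together with a major-arc lower bound gives $\|D_M\|_{L^p}\simeq M^{1-2/p}$. With $M_k\simeq 2^{k/2}$ this yields $\|\Delta_k R_s\|_{L^p}\simeq 2^{-k(s-1/4)}$ for $p<4$, $\simeq 2^{-k(s-1/4)}k^{1/4}$ for $p=4$, and $\simeq 2^{-k(s-1/2+1/p)}$ for $p>4$. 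Reading off $\beta_c=s-1/4$ for $p\le4$ and $\beta_c=s-1/2+1/p$ for $p>4$, and multiplying by $p$, produces $\eta_s(p)=p(s-1/4)$ for $p\le4$ and $\eta_s(p)=1+p(s-1/2)$ for $p>4$.

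The main obstacle is the sharp two-sided control of $\|D_M\|_{L^p}$ for $p>4$: the upper bound is the $\Lambda(p)$ inequality for the squares, while the matching lower bound follows from the concentration $|D_M|\gtrsim M$ on a major arc of measure $\simeq M^{-2}$. Granting Proposition~\ref{PropDeltak}, the work that remains is bookkeeping: verifying that the smooth frequency cutoff defining $\Delta_k$ alters the block norm only by bounded factors, that the sub-polynomial factor $k^{1/4}$ at $p=4$ obstructs the endpoint $\beta=s-1/4$ but does not shift the supremum $\eta_s(p)$, and that the lower bounds are what force $R_s\notin B^{\beta}_{p,\infty}$ for $\beta>\beta_c$, so that $\eta_s(p)$ is pinned exactly rather than merely bounded above.
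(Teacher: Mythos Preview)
Your proposal is correct and follows essentially the same route as the paper: both reduce the computation of $\eta_s(p)$ to the Littlewood--Paley characterization of $B^{\sigma/p}_{p,\infty}$, invoke Proposition~\ref{PropDeltak} for the two-sided asymptotics $\|\Delta_k R_s\|_p\simeq 2^{-k\beta_c}$ (with the harmless $k^{1/4}$ correction at $p=4$), and read off the critical exponent. Your additional remarks on the $\Lambda(p)$ inequality and major-arc lower bound for the Weyl sum are already absorbed into Proposition~\ref{PropDeltak} via Zalcwasser's theorem, so once you grant that proposition the remaining bookkeeping is exactly what the paper does.
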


The proof of Proposition~\ref{Proposition_MultifractalEta} is directly linked to the proof of Theorem~\ref{Theorem_FlatnessHP}, since $\eta_s(p)$ is related to the Littlewood-Paley decomposition of $R_s$ and, thus, to the $L^p$ estimates of the high-pass filters $\lVert (R_s)_{\geq N} \rVert_{L^p(\mathbb T)}$ which we compute in Theorem~\ref{TheoremHighPassFilters}.

\subsection{Discussion of results}

Theorems~\ref{Theorem_FlatnessHP} and \ref{Theorem_FlatnessSF} show that the two flatnesses $F_{s,p}$ and $G_{s,p}$ do not always coincide. A visual representation of this discrepancy is shown in Figure~\ref{Figure_LevelCurves}. While the flatness of high-pass filters (Figure~\ref{Figure_LevelCurvesHP}) does not depend on $s$, structure functions produce a handful of regions (Figure~\ref{Figure_LevelCurvesSF}).

\begin{figure}[h!]
  \makebox[\textwidth][c]{ %
   \begin{subfigure}{0.60\linewidth}
    \includegraphics[width=\linewidth]{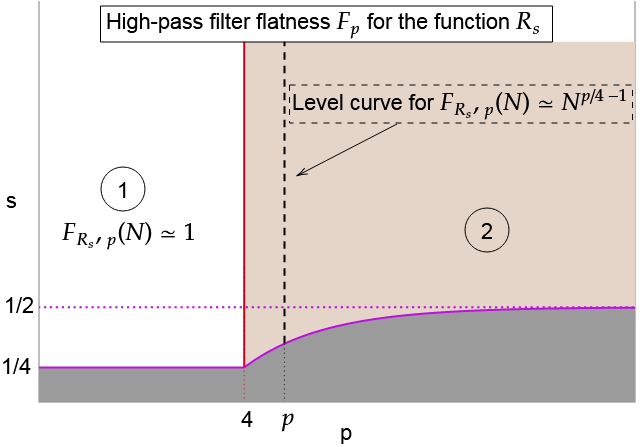}
    \caption{In the sense of high-pass filters (Theorem~\ref{Theorem_FlatnessHP})}
    \label{Figure_LevelCurvesHP}
    \end{subfigure}
    \begin{subfigure}{0.60\linewidth}
   % \vspace{-0.2cm}
    \includegraphics[width=\linewidth]{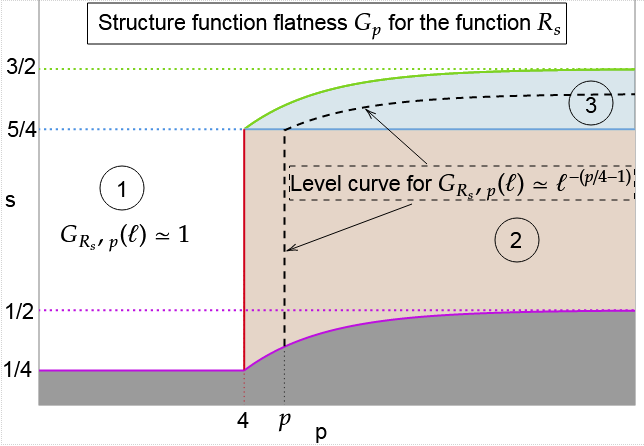}
    \caption{In the sense of structure functions (Theorem~\ref{Theorem_FlatnessSF})}
    \label{Figure_LevelCurvesSF}
    \end{subfigure}
  }
  \caption{Comparison of flatnesses $F_{R_s,p}$ and $G_{R_s,p}$. Dashed lines are level curves for the powers of the flatness.}
  \label{Figure_LevelCurves}
\end{figure}

Up to the behavior in the boundaries, these regions are:
\begin{itemize}
    \item Region $\circled{1}$. 
    The non-intermittent region, where the flatness is constant. The regions coincide only when $p<4$. 
    \item Region $\circled{2}$. 
    We have intermittent, divergent power laws $F_{R_s,p}(N) \simeq N^{p/4-1}$ and $G_{R_s,p}(\ell) \simeq \ell^{-(p/4 - 1)}$. The regions match when $p>4$ and $s<5/4$.
    \item Region $\circled{3}$. Exclusive of structure functions, 
    where $G_{R_s,p}(\ell) \simeq \ell^{-(3p/2 - 1 - ps)}$. For fixed $p$, intermittency is weaker than in region $\circled{2}$, since 
    \[   \ell^{-(3p/2-1-ps)} \ll \ell^{-(p/4 - 1)}  \qquad \text{ when } s>5/4. \]
\end{itemize}

As we said in the introduction, these coincidences and discrepancies have also been observed in experimental turbulence \cite{BrunPumir,ChevillardEtAl}. Indeed, we can establish an analogy between our results for $R_s$ in Theorems~\ref{Theorem_FlatnessHP} and \ref{Theorem_FlatnessSF}  and  the effect of viscosity in a turbulent flow, where the parameter $s$ plays the role of viscosity. For that, in Figure~\ref{Figure_Plots} we plot the complex functions $R_s$ \eqref{RiemannGeneralization} on the left column and the classical real-valued $\operatorname{Im}R_s$ on the right column. Different $s$ produce quite different situations, larger $s$ showing less oscillations and more regularity. When $s<5/4$, Figures~\ref{Figure_B} and \ref{Figure_D} are similar to the typical irregular velocity profile of turbulent flows in the inertial range, where the effect of viscosity is small. When $s \in (5/4,3/2)$, Figure~\ref{Figure_F} is smoother, representing a transition to the dissipation range where viscosity starts having an effect;  we further discuss this range in the end of the section. Finally, when $s > 3/2$,  Figure~\ref{Figure_H} represents the dissipation range where viscosity dominates and completely smooths out the velocity. Thus, the two flatnesses coincide only for $s$ corresponding to the inertial range, so this analogy suggests that in experimental observations the two approaches should be interchangeable in the inertial range, but not in the dissipation range.

\begin{figure}

\begin{subfigure}{0.38\linewidth}
\includegraphics[width=\linewidth]{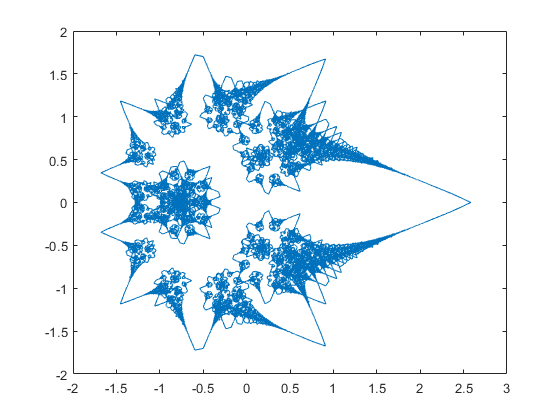}
\caption{$s=0.75$}
\label{Figure_A}
\end{subfigure}
\begin{subfigure}{0.38\linewidth}
\includegraphics[width=\linewidth]{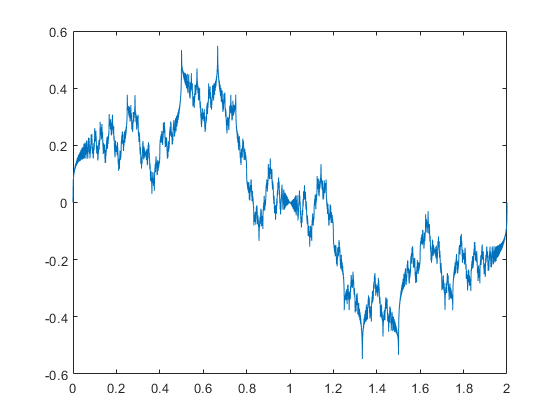}
\caption{$s=0.75$}
\label{Figure_B}
\end{subfigure}

\begin{subfigure}{0.38\linewidth}
\includegraphics[width=\linewidth]{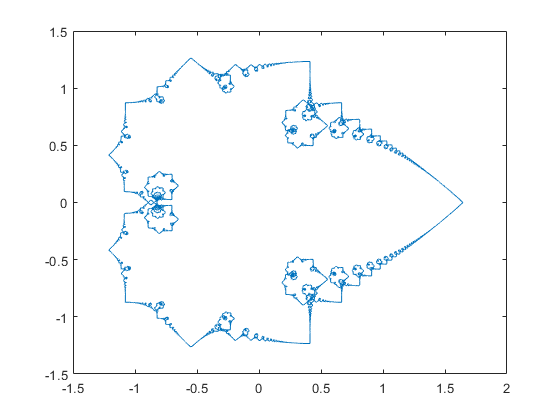}
\caption{$s=1$}
\label{Figure_C}
\end{subfigure}
\begin{subfigure}{0.38\linewidth}
\includegraphics[width=\linewidth]{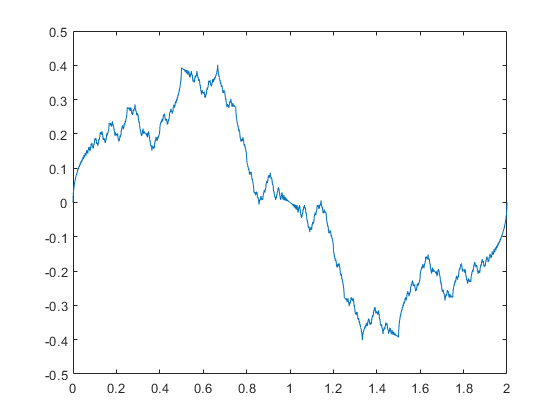}
\caption{$s=1$}
\label{Figure_D}
\end{subfigure}

\begin{subfigure}{0.38\linewidth}
\includegraphics[width=\linewidth]{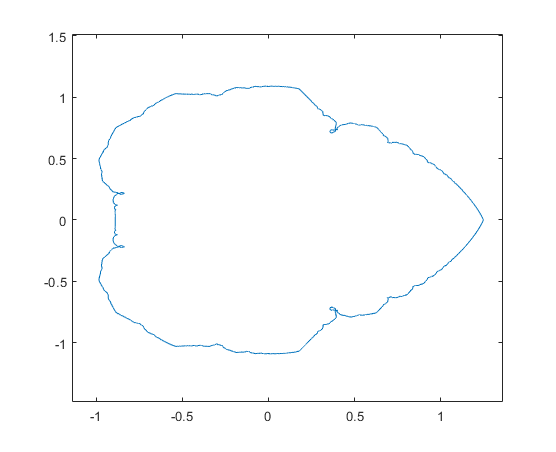}
\caption{$s=1.4$}
\label{Figure_E}
\end{subfigure}
\begin{subfigure}{0.38\linewidth}
\includegraphics[width=\linewidth]{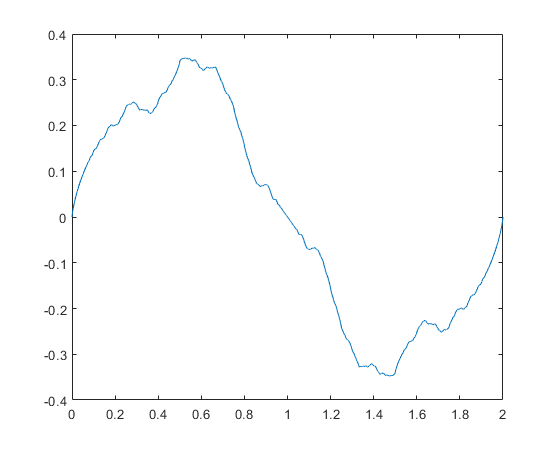}
\caption{$s=1.4$}
\label{Figure_F}
\end{subfigure}

\begin{subfigure}{0.38\linewidth}
\includegraphics[width=\linewidth]{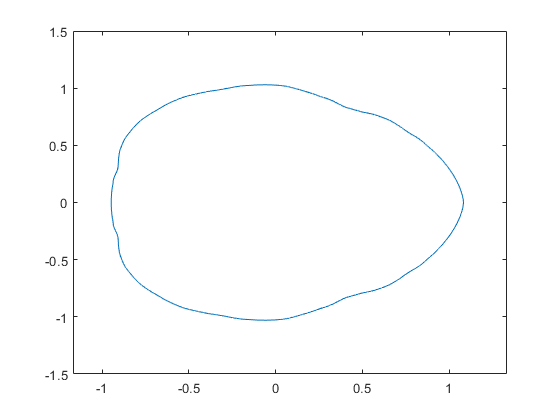}
\caption{$s=2$}
\label{Figure_G}
\end{subfigure}
\begin{subfigure}{0.38\linewidth}
\includegraphics[width=\linewidth]{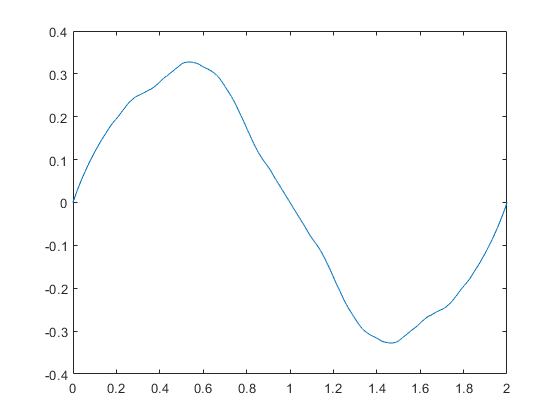}
\caption{$s=2$}
\label{Figure_H}
\end{subfigure}

\caption{Left column: the sets $R_s([0,1]) \subset \mathbb C$ for different values of $s$. Right column: corresponding graphs of $\operatorname{Im} R_s(x/2) / \pi $ for $x\in[0,2]$.}
\label{Figure_Plots}
\end{figure}

This analogy can be further supported analytically. The main difference between the inertial and the dissipation range is that the flow is extremely irregular in the former, but smooth in the latter. The same goes for $R_s$, since Jaffard's result in \eqref{SpectrumOfSingularitiesOfRs} implies that it is almost nowhere differentiable when $s<5/4$, but almost everywhere differentiable when $5/4 < s < 3/2$ and everywhere differentiable when $s\geq 3/2$. Thus, flatnesses 
coincide only in the irregular inertial range, when $R_s$ is almost nowhere differentiable.
Up to lower order corrections, this is a general fact that is connected to the exponent $\zeta_p$ in the multifractal formalism \eqref{MultifractalFormalism1}. According to \cite[Proposition 3.1]{Jaffard1}, $\zeta_p$ may be equivalently defined in terms of several Sobolev-like spaces, such as
\begin{equation}\label{Exponents}
\begin{split}
    \zeta_p & = \sup\{ \sigma \mid f \in B^{\sigma/p}_{p,\infty} \} = \sup\{ \sigma \mid f \in H^{\sigma/p,p} \} \\
    & = \sup\{ \sigma \mid f \in L^{\sigma/p,p} \} = \sup\{ \sigma \mid f \in W^{\sigma/p,p} \}, 
\end{split}
\qquad \text{ for } p > 1,
\end{equation}
where $B^\sigma_{p,q}$ are Besov spaces, $H^{\sigma,p}$ are Nikolskii spaces, $L^{\sigma,p}$ are Bessel potential spaces and $W^{\sigma,p}$ are Sobolev-Slobodeckij spaces. The first two are directly connected to Fourier filters and structure functions respectively, as we explain in the following lines.

Besov spaces in $\mathbb T$ are based in the Littlewood-Paley decomposition of a function,
\begin{equation}\label{Besov}
    f(x) = \sum_{n\in \mathbb Z} \widehat{f}_n \, e^{2\pi i n x}  = \sum_{k=0}^\infty \left(  \sum_{|n|=2^k}^{2^{k+1}-1} \widehat{f}_n \, e^{2\pi i n x} \right) = \sum_{k=0}^\infty P_k f(x),
\end{equation}
since $f \in B^\sigma_{p,q}$ if the sequence $\left( 2^{k\sigma} \lVert  P_k f \rVert_p \right)_k$ is in $\ell^q$. That means that 
\begin{equation}
    f \in B^{\sigma/p}_{p,\infty} \quad \Longleftrightarrow \quad \sup_k 2^{k\sigma} \lVert  P_k f \rVert_p^p \leq C.
\end{equation}
Denoting $N=2^k$, we see that $P_k f = f_{\simeq N}$ is a band-pass filter of $f$. From \eqref{Besov} and the definition of $\zeta_p$ in \eqref{Exponents}, formally we have $\lVert f_{\simeq N} \rVert_p^p \simeq N^{-\zeta_p}$, where lower order corrections like logarithms are not accounted for. Also, in many situations like for $R_s$, $\lVert f_{\simeq N} \rVert_p \simeq \lVert f_{\geq N} \rVert_p$ holds.

Regarding Nikolskii spaces $H^{\sigma,p}$, the original purpose of $\zeta_p$ as the power law of structure functions \eqref{StructureFunction} is recovered. Indeed, if $\sigma = m + \sigma'$ with $m \in \mathbb N$ and $\sigma' \in [0,1)$, 
\begin{equation}\label{Nikolskii}
    f \in H^{\sigma,p} \quad \Leftrightarrow \quad f \in L^p \quad \text{ and } \quad \int \frac{|D^mf(x + \ell ) - D^mf(x)|^p}{\ell^{\sigma' p}} \, dx \leq C,
\end{equation}
where $D^m f$ is the $m$-th derivative of $f$. When $\sigma < 1$, there is a clear connection with $S_{f,p}(\ell)$, since if $\zeta_p < p$ the definition of $\zeta_p$ \eqref{Exponents} formally implies $S_{f,p}(\ell) \simeq \ell^{\zeta_p}$. 

Consequently, joining these two interpretations of $\zeta_p$, structure functions and band-pass filters follow the same power law up to lower order corrections if $\zeta_p < p$ (roughly, one may think of this situation as $f' \notin L^p$).
Thus, when $f$ is not regular enough, like the velocity of the flow in the inertial range, one should in general expect that $F_{f,p}$ and $G_{f,p}$ have a similar behavior. On the other hand, when the function is regular enough so that $\zeta_p > p$, the Besov space characterization still gives $\lVert f_{\simeq N} \rVert_p^p \simeq N^{-\zeta_p}$, but for the structure functions we may formally write
\begin{equation}
    S_{f,p}(\ell) = \int |f(x + \ell) - f(x)|^p \, dx \simeq \int |f'(x)|^p \, \ell^p \,dx = \ell^p \lVert f' \rVert_p^p \simeq \ell^p \gg N^{-\zeta_p},
\end{equation}
if $N^{-1} \simeq \ell$. The reason behind this is that, according to the definition $H^{\sigma,p}$ in \eqref{Nikolskii}, $\zeta_p$ describes no longer the structure functions of $f$ but of its derivatives.
Thus, we should expect the two flatnesses to coincide in irregular situations like in the inertial range of turbulence, but not when the element under consideration becomes regular, like in the dissipation range by the effect of viscosity.

We may as well draw an interesting analogy with the debate on the correctness of the multifractal formalism \eqref{MultifractalFormalism1} in the dissipation range.
Indeed, according to Theorem~\ref{Theorem_MultifractalFormalism}, the function $R_s$ satisfies the multifractal formalism with the exponent $\zeta_p$ defined as in \eqref{Exponents}, 
that is, with the exponent corresponding to high-pass filters.
But what about the exponent $\xi_p$ coming from the structure functions $S_p(\ell) \simeq \ell^{\xi_p}$, as proposed in the original multifractal formalism?

As we prove in  Theorems~\ref{Theorem_FlatnessSF} and \ref{TheoremStructureFunctions},
we have $\xi_p = \zeta_p$ for $s < 5/4$, 
in which case the multifractal formalism is satisfied also with structure functions.
However, when $s\geq 5/4$ the exponents $\xi_p$ and $\zeta_p$ are no longer equal,
and since by Theorem~\ref{Theorem_MultifractalFormalism} the multifractal formalism still holds with $\zeta_p$, it can not be satisfied with $\xi_p$ from structure functions. 
We can, though, separate two regimes:
\begin{itemize}

    \item if $s > 3/2$, then $R_s$ is differentiable, so from the point of view of turbulence there is not much left to understand in terms of irregularity. As suggested before, this is analog to the small scales where viscosity clearly dominates and the velocity of the fluid becomes smooth.
    In analytic terms, even if the structure functions of $R_s$ do not satisfy the multifractal formalism, it is in fact hidden in the derivatives of $R_s$. 
    Indeed, since $s-1 > 1/2$, then $R_s' = R_{s-1}$ is absolutely convergent and belongs to all $L^p$ spaces. 
    Thus, if $3/2 < s < 5/2$,  by \eqref{Nikolskii} $\zeta_p$ captures the behavior of the structure functions of $R'_s$. In other words, it is the structure functions of $R_s'$ which satisfy the multifractal formalism.
    For $s > 5/2$, the same works with higher order derivatives.

    \item 
    But if $5/4 < s < 3/2$, $R_s' = R_{s-1}$ is not absolutely convergent, it is not in all $L^p$ and $\xi_p$ is not defined for all $p$.
    In other words, there is no evident way to relate $\zeta_p$ with structure functions, so there is no direct way to justify the multifractal formalism from the approach of structure functions. 
    
    In turbulence, this is analog to the idea that the dissipation scale is not  a single scale, but rather a whole range of scales of transition from the inertial range to the dissipative range, where the multifractal formalism may not be accurate enough. 
    Corrections to the multifractal formalism in this dissipative transtion have been proposed in the physical literature (see the review in \cite[Chapter 8.5.5.]{Frisch})
    %, \cite{PaladinVulpiani}, \cite{Nelkin}
    It would be a very interesting task to further study this transition regime from a rigorous mathematical perspective. 
    
\end{itemize}

\subsection{Structure of the article}

We briefly describe the structure of this article. 

\begin{itemize}
    \item In Section~\ref{SECTION_AuxiliaryResults} we prove technical lemmas that we use throughout the paper.
    
    \item In Section~\ref{SECTION_LpNorms} we give $L^p$ estimates on the Littlewood-Paley blocks (Proposition~\ref{PropDeltak}) and low-pass filters (Theorem~\ref{TheoremNewZalcwasser}) of $R_s$. We use them to prove the results in Sections~\ref{SECTION_HighPassFilters} and \ref{SECTION_StructureFunctions}, but as mentioned, they could also be of independent interest. 
    
    \item In Section~\ref{SECTION_HighPassFilters} - Theorem~\ref{TheoremHighPassFilters}, we compute the behavior of the high-pass filters $\lVert (R_s)_{\geq N} \rVert_p$. For that, we use the Littlewood-Paley decomposition and Proposition~\ref{PropDeltak}. Theorem~\ref{Theorem_FlatnessHP} is a direct consequence of it.
    
    \item  In Section~\ref{SECTION_StructureFunctions} - Theorem~\ref{TheoremStructureFunctions}, we compute the behavior of the structure functions $S_{R_s,p}(\ell)$ using the low pass filters in Theorem~\ref{TheoremNewZalcwasser}. Theorem~\ref{Theorem_FlatnessSF} immediately follows. 
    
    \item In Section~\ref{SECTION_MultifractalFormalism} we prove Theorem~\ref{Theorem_MultifractalFormalism} and Proposition~\ref{Proposition_MultifractalEta} concerning the multifractal formalism  for $R_s$.
    
\end{itemize}

\subsection{Notation}
Throughout the paper, we adopt the following notation.

For inequalities:

\begin{itemize}
    \item We write $A \lesssim B$ when there is a constant $C >0$, independent of the parameters of the problem, such that $A \leq CB$.
    We write $A \simeq B$ when $A\lesssim B$ and $B \lesssim A$.
    The implicit constants in the former inequalities may change from line to line.
    If the constants of the former inequalities depend on a certain parameter, say $p$, then we write $\lesssim_p$ and $\simeq_p$.
\end{itemize}

For the limits of a sum:
\begin{itemize}
    \item If either of $A,B \in \mathbb R$ are not integers, we denote $\sum_{n=A}^B = \sum_{n \in [A,B] \cap \mathbb Z}$.
\end{itemize}

For $L^p$ norms and the exponential functions:
\begin{itemize}
    \item We write $\lVert \cdot \rVert_p$ instead of $\lVert \cdot \rVert_{L^p(\mathbb T)}$.
    
    \item  We denote $e_n(x) = e^{2\pi i n x}$.
    
    \item Many times, especially after Section~\ref{SECTION_LpNorms}, we will write $\left\lVert  \sum_{n=A}^{B} e^{2\pi i n^2 x}/n^{2s} \right\rVert_p$
    instead of $\left\lVert  \sum_{n=A}^{B} e_{n^2}/n^{2s} \right\rVert_p$.
\end{itemize}

For the Fourier transform:
\begin{itemize}
    \item We denote the Fourier transform $\mathcal F (f)$ by $\widehat{f}$ and the inverse Fourier transform $\mathcal F^{-1}(f)$ by $f^{\vee}$.
    \begin{equation}
        \widehat{f}(\xi) = \int_{\R} f(x) \, e^{-2\pi i \xi x} \, \mathrm dx, 
        \qquad \qquad 
        f^{\vee}(\xi) = \int_{\R} f(x) \, e^{2\pi i \xi x} \, \mathrm dx.
    \end{equation}
    
\end{itemize}

\section{Auxiliary results}\label{SECTION_AuxiliaryResults}

We begin with a lemma that allows us to extract power coefficients from the $L^p$ norms of localized Fourier series.

\begin{lem}\label{lemmaluis}
Let $p \geq 1$, $1\leq A<B$, $(a_n)\in\mathbb C^\mathbb N$ a sequence 
and $\gamma \geq 0$. Then, 
    \begin{equation} \label{gammapos}
          A^\gamma \, \Bigg\| \sum_{n=A}^{B} a_n e_n \Bigg\|_{p} \lesssim_{\gamma, \frac BA} \Bigg\| \sum_{n=A}^{B} n^\gamma a_n e_n \Bigg\|_{p}
         \lesssim_{\gamma, \frac BA} B^\gamma \, \Bigg\| \sum_{n=A}^{B} a_n e_n \Bigg\|_{p},
    \end{equation}
    and
    \begin{equation} \label{gammaneg}
         \frac{1}{B^\gamma} \, \Bigg\| \sum_{n=A}^{B} a_n e_n \Bigg\|_{p} \lesssim_{\gamma, \frac BA} \Bigg\| \sum_{n=A}^{B} \frac{a_n}{n^\gamma} \, e_n \Bigg\|_{p}
        \lesssim_{\gamma, \frac BA} \frac{1}{A^\gamma} \, \Bigg\| \sum_{n=A}^{B} a_n e_n \Bigg\|_{p}.
    \end{equation}
Moreover, if $\gamma \in \mathbb N$ or if $\gamma \geq 2$, then 
\begin{equation}
    \Bigg\| \sum_{n=A}^{B} n^\gamma a_n e_n \Bigg\|_{p}
         \lesssim_{\gamma} B^\gamma \, \Bigg\| \sum_{n=A}^{B} a_n e_n \Bigg\|_{p}.
\end{equation}
\end{lem}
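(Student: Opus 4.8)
The plan is to read the statement as a Fourier-multiplier bound. Setting $g = \sum_{n=A}^{B} a_n e_n$, which is a trigonometric polynomial with spectrum contained in $[A,B]$, the quantity $\sum_{n=A}^{B} n^\gamma a_n e_n$ is the image of $g$ under the multiplier with symbol $n \mapsto n^\gamma$. So the claim is precisely that this multiplier is bounded on $L^p(\T)$, for every $p \geq 1$, with operator norm $\lesssim_\gamma B^\gamma$ on the subspace of polynomials with spectrum in $[A,B]$ and, crucially, with a constant that no longer depends on the ratio $B/A$ as it did in \eqref{gammapos}. Note that only the upper bound is asserted, consistent with the fact that the symbol $n^\gamma/B^\gamma$ is increasing and bounded by $1$ on $[A,B]$. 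I would handle the two stated regimes by two different mechanisms: the integer case is elementary, while the real case $\gamma \geq 2$ needs a kernel estimate that is uniform in $A/B$.

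For $\gamma \in \N$ I would argue by iterated differentiation. Since $g$ has spectrum in $[A,B] \subset [-B,B]$, Bernstein's inequality for trigonometric polynomials gives $\lVert g' \rVert_p \lesssim B\, \lVert g \rVert_p$ for all $1 \leq p \leq \infty$ with a universal constant; as $g' = 2\pi i \sum_{n=A}^{B} n\, a_n e_n$, this is exactly $\gamma = 1$. Applying Bernstein $\gamma$ times to $g, g', \dots, g^{(\gamma-1)}$, each of which again has spectrum in $[-B,B]$, yields $\lVert g^{(\gamma)} \rVert_p \lesssim_\gamma B^\gamma \lVert g \rVert_p$, and since $g^{(\gamma)} = (2\pi i)^\gamma \sum_{n=A}^{B} n^\gamma a_n e_n$ this is the desired estimate, with a constant depending only on $\gamma$.

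For real $\gamma \geq 2$ I would instead realize the multiplier as convolution against an $L^1$ kernel and invoke Young's inequality, which is the natural tool at the endpoint $p=1$ (and $p=\infty$), where the projection/Marcinkiewicz machinery available for $1<p<\infty$ is unavailable. I would pick a smooth symbol $m \colon \R \to \R$ supported in $[A/2, 2B]$ with $m(\xi) = \xi^\gamma$ on $[A,B]$, and form the periodic kernel $K = \sum_{j \in \Z} \check m(\,\cdot\, + j)$. Since $m$ is supported away from the origin, $\widehat K(n) = m(n)$ agrees with $n^\gamma$ on $[A,B]$, so that $K * g = \sum_{n=A}^{B} n^\gamma a_n e_n$; Poisson summation gives $\lVert K \rVert_{1} \leq \lVert \check m \rVert_{L^1(\R)}$, whence $\lVert K * g \rVert_p \leq \lVert K \rVert_1 \lVert g \rVert_p$ for every $p \geq 1$. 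It then remains to show $\lVert \check m \rVert_{L^1(\R)} \lesssim_\gamma B^\gamma$ with a bound independent of $A/B$.

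This last point is the crux. Writing $m(\xi) = B^\gamma \Phi(\xi/B)$, where $\Phi(u) = u^\gamma$ on $[\rho,1]$ with $\rho := A/B \in (0,1)$, tapered smoothly to zero on $[\rho/2,\rho]$ and on $[1,2]$, scaling reduces the claim to $\lVert \check\Phi \rVert_{L^1(\R)} \lesssim_\gamma 1$ uniformly in $\rho$. Since $\Phi$ is supported in the fixed interval $[\rho/2,2]\subset[0,2]$, integrating by parts twice yields $|\check\Phi(x)| \lesssim \min(1,|x|^{-2})\,\lVert \Phi \rVert_{C^2}$, an integrable majorant, so the whole matter reduces to controlling $\lVert \Phi \rVert_{C^2} = \lVert\Phi\rVert_\infty + \lVert\Phi'\rVert_\infty + \lVert\Phi''\rVert_\infty$ uniformly in $\rho$. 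Here the hypothesis enters: on $[\rho,1]$ one has $\Phi''(u) = \gamma(\gamma-1)u^{\gamma-2}$, whose largest size is $\sim \rho^{\gamma-2}$ near the left endpoint, and this stays bounded as $\rho \to 0$ exactly when $\gamma \geq 2$ (one also checks the smooth taper near $u=\rho$ contributes the same order and not worse). I expect this uniform control of the rescaled symbol as $A/B \to \infty$ to be the main obstacle; it succeeds for $\gamma \geq 2$, while for non-integer $\gamma \in (1,2)$ the rescaled $C^2$ norm blows up, which is precisely why the $A/B$-independent bound is claimed only in the two regimes $\gamma \in \N$ and $\gamma \geq 2$.
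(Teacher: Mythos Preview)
Your proposal is correct and follows essentially the same strategy as the paper: realize $n\mapsto n^\gamma$ as a Fourier multiplier, transfer to $\mathbb T$ via Poisson summation, and bound the resulting convolution kernel in $L^1$ so that Young's inequality closes the estimate for every $p\geq 1$. The only differences are packaging---for $\gamma\in\mathbb N$ you invoke Bernstein's inequality where the paper instead observes that $\xi^\gamma\chi(\xi)$ is Schwartz once the cutoff $\chi$ is allowed to cover the origin, and for real $\gamma\geq 2$ you rescale to a symbol $\Phi$ on $[\rho/2,2]$ and bound $\lVert\Phi\rVert_{C^2}$ uniformly in $\rho=A/B$, whereas the paper keeps a fixed cutoff on $[-2,2]$, writes $m_\gamma(\xi)=|\xi|^\gamma\chi(\xi)$, and proves $|\widehat{m_\gamma}(y)|\lesssim_\gamma(1+|y|)^{-2}$ by splitting the oscillatory integral at $\xi=1/|y|$ and integrating by parts twice; both arguments rest on the same observation that $u\mapsto u^\gamma$ has bounded second derivative near $0$ precisely when $\gamma\geq 2$.
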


\begin{proof}
Let $\gamma \geq 0$ and define
\begin{equation}
    f =\sum_{n=A}^{B} n^\gamma a_n e_n, \qquad \text{ and } \qquad g =\sum_{n=A}^{B} a_n e_n.
\end{equation}
Let us prove the right-hand side inequality in \eqref{gammapos} first.
Let $\chi = \chi_{A/B}\in \mathcal S(\R)$ be a cut-off function depending on $A/B$
such that
\begin{equation}\label{CutoffLemmaLuis}
\chi(\xi) =  \begin{cases}
1 \quad \text{ if } \quad A/B \leq \xi \leq 1,\\
0 \quad \text{ if } \quad \xi \leq A/(2B) \quad \text{or} \quad \xi \geq 2,
\end{cases}
\end{equation}
and define 
%the Fourier multiplier 
$m_\gamma(\xi) = \xi^\gamma \, \chi(\xi)$ for $\xi\in\R$.
Since $0 \notin \operatorname{supp}\chi$, we have $m_\gamma\in\mathcal S(\R)$.
Define also the scaled function
%multiplier 
$m_{\gamma,B}(\xi) =m_\gamma\left(\xi / B \right)$ and write 
\begin{equation}
 f=\sum_{n=A}^{B}n^\gamma a_n e_n=\sum_{n=A}^{B}n^\gamma\chi\left(\frac{n}B\right) \,  a_n \, e_n
  =B^\gamma\sum_{n=A}^{B}m_{\gamma,B}(n)\, a_n \, e_n.
\end{equation}
Let us introduce the auxiliary function
\begin{equation}\label{AuxiliaryFunction}
 h_{\gamma,B} = \sum_{n \in \mathbb Z} m_{\gamma,B}(n) \, e_n = \sum_{n=A/2}^{2B}m_{\gamma,B}(n) \, e_n,
\end{equation}
so that $f = B^\gamma \left(g\ast h_{\gamma,B}\right)$, where $\ast$ denotes the convolution in $\mathbb{T}$.
Indeed, for $x\in\mathbb T$ we have
\begin{equation}
    \begin{split}
   g\ast h_{\gamma,B}(x)
   & = \int_{\mathbb T}g(y) \, h_{\gamma,B}(x-y)\mathrm dy =
   \sum_{j=A}^{B}\sum_{k=A/2}^{2B}a_jm_{\gamma,B}(k) \, e_k(x)\, \int_{\mathbb T} e^{2\pi i(j-k) y}\, \mathrm dy \\
   & = \sum_{j=A}^{B} a_j \, m_{\gamma,B}(j) \, e_j(x) 
   = B^{-\gamma} f(x).
    \end{split}
\end{equation}
Thus, by Young's inequality we get
\begin{equation}\label{Young}
   \left\| f \right\|_p\leq B^\gamma\left\| g \right\|_p\left\|  h_{\gamma,B} \right\|_1,
\end{equation}
so it suffices to bound $\left\|  h_{\gamma,B} \right\|_1$.
For that purpose, let us use Poisson's summation formula in \eqref{AuxiliaryFunction}, 
which we can use because the function $F_x(y) = m_{\gamma, B}(y)\, e^{2\pi i x y}$ is Schwartz for every $x \in \mathbb T$.
Therefore, 
\begin{equation}\label{PoissonSummationFormula1}
    h_{\gamma,B}(x) = \sum_{n \in \mathbb Z} F_x(n)  
    = \sum_{j \in \mathbb Z} \widehat {F_x} (j) 
    = \sum_{j \in \mathbb Z} \widehat{m_{\gamma, B}}(j-x)
    =\sum_{j \in \mathbb Z} (m_{\gamma, B})^{\vee}(x-j).
\end{equation}
Since $(m_{\gamma, B})^{\vee}(y) = \int_{\R} m_\gamma(\xi/B) \, e^{2\pi i \xi y} \, d\xi 
% \textcolor{red}{ = B \int_{\R} m_\gamma(\xi) \, e^{2\pi i \xi B y} \, d\xi } 
= B (m_\gamma)^{\vee}(B y)$, 
we get
\begin{equation}\label{PoissonSummationFormula2}
    \lVert h_{\gamma, B} \rVert_{1} \leq \sum_{j \in \mathbb Z} \int_0^1 \left|  (m_{\gamma, B})^{\vee}(x-j)  \right| dx =  \int_{\mathbb R} \left|  (m_{\gamma, B})^{\vee}(x)  \right| dx
    %= \int_{\mathbb R} |(m_\gamma)^{\vee}(x)| dx 
    = \left\lVert (m_\gamma)^{\vee} \right\rVert_{L^1(\mathbb R)},
\end{equation}
which depends on $\gamma$ and $A/B$ because $m_\gamma$ depends on $\gamma$ and $\chi = \chi_{A/B}$.
Thus, from \eqref{Young} we get 
\begin{equation}\label{IntermediateStepLemmaLuis}
\left\| f \right\|_p
\leq B^\gamma \, \left\lVert (m_\gamma)^{\vee} \right\rVert_{L^1(\mathbb R)} \, \left\| g \right\|_p
\lesssim_{\gamma,\frac AB} \, B^\gamma \,  \left\| g \right\|_p
\end{equation}
as desired.
Now, the left-hand side inequality in \eqref{gammaneg} follows immediately
because defining $\tilde a_n = a_n / n^\gamma $, 
we can use \eqref{IntermediateStepLemmaLuis} so that
\begin{equation}\label{EasyCaseInLemma}
 \left\| \sum_{n=A}^{B}a_n \, e_n \right\|_{p}=
 \left\| \sum_{n=A}^{B}n^{\gamma} \tilde{a}_n e_n \right\|_{p}
 \lesssim_{\gamma, \frac AB} B^{\gamma} \left\| \sum_{n=A}^{B} \tilde{a}_n \, e_n \right\|_{p}
 = B^{\gamma} \left\| \sum_{n=A}^{B} \frac{a_n}{n^\gamma} \,   e_n \right\|_{p}.
\end{equation}

The right hand-side inequality in \eqref{gammaneg} is proved similarly with an alternative cutoff function $\widetilde{\chi} = \widetilde\chi_{B/A} \in \mathcal S(\mathbb R)$ such that 
\begin{equation}\label{CutoffLemmaLuis2}
\widetilde\chi(x) =  \begin{cases}
1 \quad \text{ if } \quad 1 \leq x \leq B/A,\\
0 \quad \text{ if } \quad x \leq 1/2 \quad \text{or} \quad x\geq 2B/A,
\end{cases}
\end{equation}
Define the Schwartz functions $ \widetilde m_\gamma(\xi) = \xi^{-\gamma}\, \widetilde\chi(\xi)$ and $\widetilde m_{\gamma,A}(\xi) = \widetilde m_\gamma( \xi / A)$ for $\xi\in\R$, so
\begin{equation}
    f = \sum_{n=A}^B \frac{a_n}{n^\gamma} \,   e_n  = \sum_{n=A}^B \frac{1}{n^\gamma}\,  \tilde\chi \left(\frac{n}{A}\right) \, a_n \, e_n = \frac{1}{A^\gamma} \,  \sum_{n=A}^B   \widetilde{m}_{\gamma,A}(n)\, a_n \,e_n.
\end{equation}
Letting $\widetilde{h}_{\gamma,A} = \sum_{n\in\mathbb Z} \widetilde{m}_{\gamma,A}(n)\,e_n$,  we have $\lVert f \rVert_{L^p} \leq A^{-\gamma} \, \lVert \widetilde{h}_{\gamma,A} \rVert_{L^1} \, \rVert g \rVert_{L^p}$ as in \eqref{Young}.
The same procedure as in \eqref{PoissonSummationFormula1} and \eqref{PoissonSummationFormula2} allows us to bound $\lVert  \widetilde{h}_{\gamma, A} \rVert_1 \leq \lVert (\widetilde{m}_\gamma)^\vee \rVert_{L^1(\mathbb R)}$,
which depends on $\gamma$ and $B/A$, 
so the right-hand side of \eqref{gammaneg} is established.
The same trick as in \eqref{EasyCaseInLemma} implies the left-hand side of \eqref{gammapos}.

In the case that $\gamma \in \mathbb N$, the dependence on $A/B$ can be dropped from the right-hand side of \eqref{gammapos}. 
For that, we work with $\chi\in \mathcal S(\mathbb R)$ such that
\begin{equation}\label{CutoffLemmaLuis3}
\chi(x) =  \begin{cases}
1 \quad \text{ if } \quad |x| \leq 1,\\
0 \quad \text{ if } \quad |x| \geq 2,
\end{cases}
\end{equation}
independent of $A/B$. The difference is that even if now $0 \in \operatorname{supp} \chi$, 
the multiplier $m_\gamma(\xi) = \xi^\gamma\chi(\xi)$ is Schwartz. 
Then, the proof works exactly the same as above, so that we get
\begin{equation}\label{IntermediateStepLemmaLuis2}
    \left\| f \right\|_p \leq B^\gamma \, \left\|  (m_\gamma)^\vee \right\|_{L^1(\mathbb R)} \left\| g \right\|_p,
\end{equation}
in such a way that now $\left\|  (m_\gamma)^\vee \right\|_{L^1(\mathbb R)}$ depends only on $\gamma$.  
For $\gamma \in \mathbb R$ such that $\gamma\geq 2$, we can also drop the dependence on $A/B$ 
from the right-hand side of \eqref{gammapos}.
We slightly change the multiplier $m_\gamma(\xi) = |\xi|^\gamma\chi(\xi)$, 
with $\chi$ as in \eqref{CutoffLemmaLuis3}.
However, we need to adapt the proof above 
because now $m_\gamma(\xi) \notin \mathcal S(\R)$, 
and thus not all steps are properly justified.

The procedure is the same up to \eqref{Young}.
In order to estimate $\lVert h_{\gamma, B} \rVert_1$, 
we need to check that we can use the Poisson summation formula in \eqref{PoissonSummationFormula1}.
According to \cite[Theorem 3.1.17]{Grafakos2009}, 
it is enough that the function $F_x(y) = m_{\gamma, B}(y)\, e^{2\pi i x y}$ 
is continuous and integrable, and that there exists $\delta >0$ such that
\begin{equation}\label{Poissoncond}
|F(y)|+|\widehat F(y)|\lesssim (1+|y|)^{-(1+\delta)}, \qquad \forall y \in \mathbb R.
\end{equation}
Since for all $x \in \mathbb T$ the function $F_x$ is continuous and has compact support, 
it is enough to bound $|\widehat F_x(y)| \lesssim (1+|y|)^{1+\delta}$.
We know that $\widehat F_x(y) = B \, (m_{\gamma})^{\vee}(B (x-y) ) = B \, \widehat{m_\gamma}(B(y-x))$, 
so it suffices to study the decay of $\widehat{m_\gamma}$.
Write 
\begin{equation}
\begin{split}
    \widehat{m_\gamma}(y) 
    & = \int_{\R} m_\gamma(\xi) \, e^{-2\pi i \xi y}\, \mathrm d\xi 
    = \int_{-2}^2  |\xi|^\gamma \, \chi(\xi) \, e^{-2\pi i \xi y} \, \mathrm d\xi  \\
    & = \int_0^2 \xi^\gamma \, \chi(\xi)\, e^{-2\pi i \xi y } \, \mathrm d\xi  
    +  \int_0^2 \xi^\gamma \, \chi(\xi)\, e^{2\pi i \xi y } \, \mathrm d\xi.
\end{split}
\end{equation}
The last two integrals are of the same kind, so we only bound the second one. 
First, we can directly bound
\begin{equation}\label{SmallY}
    \left| \int_0^2 \xi^\gamma \, \chi(\xi)\, e^{2\pi i \xi y } \, \mathrm d\xi \right| 
    \leq 2^\gamma, \qquad \forall |y| \leq 1/2.
\end{equation}
On the other hand, for $|y| > 1/2$
we split the integral
\begin{equation*}
\begin{split}
      \int_{0}^2  \xi^\gamma\, \chi(\xi) \, e^{2\pi i \xi y} d\xi 
    & \leq \int_{0}^{1/|y|}  \xi^\gamma \, \chi(\xi)\, e^{2\pi i \xi y} \, d\xi
     + \int_{1/|y|}^2  \xi^\gamma \, \chi(\xi) \, e^{2\pi i \xi y} \, d\xi \\
     & = I + II.
\end{split}
\end{equation*}
For the first term, we directly bound
\begin{equation}\label{Bound_For_I}
    |I| \leq \int_0^{1/|y|} \xi^\gamma \, d\xi = \frac{|y|^{-(1+\gamma)}}{1+\gamma}.
\end{equation}
For the second term, we have oscillations in the phase that we exploit integrating by parts.
Indeed, 
\begin{equation}\label{IntParts1}
\begin{split}
    II & = \int_{1/|y|}^2 \frac{\xi^\gamma \, \chi(\xi)}{2\pi i y} \, \partial_\xi (e^{2\pi i \xi y}) \, d\xi \\
    & = \frac{\xi^\gamma \, \chi(\xi)}{2\pi i y} \, e^{2\pi i \xi y} \Bigg|^2_{1/|y|}
    - \frac{1}{2\pi i y} \int_{1/|y|}^2 \partial_\xi( \xi^\gamma \, \chi(\xi) ) \, e^{2\pi i \xi y} \, d\xi \\
    & = \frac{\chi(1/|y|)}{2\pi i y \, |y|^\gamma} - \frac{1}{2\pi i y} \,  \int_{1/|y|}^2 \partial_\xi( \xi^\gamma \, \chi(\xi) ) \, e^{2\pi i \xi y} \, d\xi,
\end{split}
\end{equation}
and hence
\begin{equation}\label{Bound_For_II}
    |II| \lesssim \frac{1}{|y|^{1+\gamma}} + \frac{1}{|y|} |III|, \qquad \qquad III = \int_{1/|y|}^2 \partial_\xi( \xi^\gamma \, \chi(\xi) ) \, e^{2\pi i \xi y} \, d\xi.
\end{equation}
We integrate $III$ by parts again so that 
\begin{equation}
     III  = \frac{\partial_\xi( \xi^\gamma \, \chi(\xi) )}{2\pi i y}\, e^{2\pi i \xi y} \Bigg|^2_{1/|y|} - \frac{1}{2\pi i y} \int_{1/|y|}^2 \partial_\xi^2(\xi^\gamma\, \chi(\xi))\, e^{2\pi i \xi y }\, d\xi,
\end{equation}
and hence, expanding $\partial_\xi( \xi^\gamma \, \chi(\xi) ) = \gamma \xi^{\gamma - 1} \chi(\xi) + \xi^\gamma \chi'(\xi)$, we get
\begin{equation}
    |III| \lesssim \frac{1}{|y|^\gamma} + \frac{1}{|y|^{1+\gamma}} + \frac{1}{|y|} \int_0^2 |\partial_\xi^2(\xi^\gamma\, \chi(\xi))| \, d\xi.
\end{equation}
If $\gamma \geq 2$, then $|\partial_\xi^2(\xi^\gamma\, \chi(\xi))| \lesssim_\gamma 1$ in $\xi \in (0,2)$, so we get $|III| \lesssim_\gamma 1/|y|$ and therefore, from \eqref{Bound_For_II}, 
\begin{equation}
    |II| \lesssim_\gamma \frac{1}{|y|^{1+\gamma}} + \frac{1}{|y|^2} \lesssim \frac{1}{|y|^2}.
\end{equation}
Joining this with \eqref{Bound_For_I} we get
\begin{equation}\label{BigY}
    \left| \int_0^2 \xi^\gamma \, \chi(\xi)\, e^{2\pi i \xi y } \, dy \right| 
    \leq |I| + |II| 
    \lesssim_\gamma \frac{1}{|y|^2}, \qquad \forall |y| > 1/2.
\end{equation}
Thus, \eqref{SmallY} and \eqref{BigY} yield
\begin{equation}\label{BoundForMultiplier}
    |\widehat{m_\gamma}(y)| \lesssim_\gamma \frac{1}{(1+|y|)^2}, \qquad \forall y \in \mathbb R,
\end{equation}
which in turn implies
\begin{equation}
    |\widehat{F_x}(y) | = B \, | \widehat{m_\gamma}(B(y-x)) | \lesssim \frac{B}{(1 + B|x-y|)^2}
    \leq \frac{B}{(1 + |x-y|)^2}
    \leq  \frac{C_x \, B}{(1+|y|)^2}.
\end{equation}

Thus, we can use the Poisson summation formula as in \eqref{PoissonSummationFormula1}.
Then, by \eqref{PoissonSummationFormula2}, we obtain
\begin{equation}
    \lVert f \rVert_p \leq B^\gamma \, \lVert \widehat{m_\gamma} \rVert_{L^1(\R)} \, \lVert g \rVert_p,
\end{equation}
and since equation \eqref{BoundForMultiplier} implies $\lVert \widehat{m_\gamma} \rVert_{L^1(\R)} \lesssim_\gamma 1$, we get $\lVert f \rVert_p \lesssim_\gamma B^\gamma \, \lVert g \rVert_p,$ and conclude. 
\end{proof}

\begin{cor}\label{CorollaryLuis}
Let $0 < A < B$ possibly vary according to some parameter, and assume that the ratio $B/A$ is bounded by a constant that does not depend on that parameter.
Then, for all $p \geq 1$ and $\gamma \in \mathbb{R}$, 
\begin{equation}\label{lemmaluisforHPfilters}
 \left\| \sum_{n=A}^{B} n^\gamma \, a_n\, e_n \right\|_{p} \simeq_\gamma
  A^\gamma \, \left\| \sum_{n=A}^{B}a_n e_n \right\|_{p} \simeq_\gamma
   B^\gamma \, \left\| \sum_{n=A}^{B}a_n e_n \right\|_{p}.
\end{equation}
\end{cor}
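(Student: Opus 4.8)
The plan is to deduce Corollary~\ref{CorollaryLuis} directly from Lemma~\ref{lemmaluis} by exploiting the extra hypothesis that $B/A$ stays bounded. Write $C_0$ for the uniform bound, so that $A < B \le C_0 A$ throughout. The first thing I would record is the elementary comparison $A^\gamma \simeq_\gamma B^\gamma$, valid for every real $\gamma$: when $\gamma \ge 0$ one has $A^\gamma \le B^\gamma \le C_0^\gamma A^\gamma$, and when $\gamma < 0$ the inequalities reverse, but in both cases the ratio is pinched between $1$ and $C_0^{|\gamma|}$, a constant depending only on $\gamma$ (and the fixed $C_0$). This already settles the second $\simeq_\gamma$ in \eqref{lemmaluisforHPfilters} and reduces the whole statement to proving $\bigl\| \sum_{n=A}^B n^\gamma a_n e_n \bigr\|_p \simeq_\gamma A^\gamma \bigl\| \sum_{n=A}^B a_n e_n \bigr\|_p$.

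For $\gamma \ge 0$ this is exactly the two-sided bound \eqref{gammapos}, and for $\gamma < 0$ it is \eqref{gammaneg} applied with exponent $-\gamma > 0$ (after using $A^{-\gamma} \simeq_\gamma B^{-\gamma}$ to collapse the two outer factors). The only point needing care is that the constants in \eqref{gammapos}--\eqref{gammaneg} a priori depend on $B/A$, whereas the statement demands constants depending on $\gamma$ alone; this is the sole obstacle. It is resolved by re-examining the proof of Lemma~\ref{lemmaluis}, where the $B/A$-dependence enters only through the $L^1(\R)$-norm of the inverse transform of the multiplier $m_\gamma(\xi)=\xi^\gamma\chi(\xi)$, built from the cutoffs \eqref{CutoffLemmaLuis} and \eqref{CutoffLemmaLuis2}. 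Since the hypothesis forces $A/B \in [1/C_0,1)$ and $B/A \in (1,C_0]$, I would fix once and for all single cutoffs valid for every admissible ratio --- for instance $\chi \in \mathcal S(\R)$ equal to $1$ on $[1/C_0,1]$ and supported in $[1/(2C_0),2]$, which meets the requirements of \eqref{CutoffLemmaLuis} for all such $A/B$ because $[A/B,1]\subseteq[1/C_0,1]$ and $0\notin\operatorname{supp}\chi$, and the analogous device for \eqref{CutoffLemmaLuis2}. With these uniform choices the multipliers become fixed Schwartz functions for each $\gamma$, so the relevant $L^1$-norms depend only on $\gamma$ and $C_0$; running the argument of Lemma~\ref{lemmaluis} verbatim then delivers \eqref{gammapos} and \eqref{gammaneg} with constants $\simeq_\gamma$.

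Finally I would dispose of the mild relaxation from $A \ge 1$ in Lemma~\ref{lemmaluis} to $A > 0$ here. The proof of the lemma uses a lower bound on $A$ only to guarantee $\chi(n/B)=1$ for the integers $n \in [A,B]$, and this continues to hold for $0 < A < B$ since then $n/B \in [A/B,1]$ for every such $n$; hence the same argument applies unchanged. The genuinely degenerate possibilities are harmless: if $[A,B]$ contains no integer both sides of \eqref{lemmaluisforHPfilters} vanish, and if it contains a single integer $n_0$ the statement reduces to the identity $n_0^\gamma |a_{n_0}| \simeq_\gamma A^\gamma |a_{n_0}|$, which follows from $A \le n_0 \le B \le C_0 A$. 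Combining these cases with the two equivalences above completes the proof.
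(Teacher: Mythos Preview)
Your proposal is correct and follows essentially the same approach as the paper: the paper's proof simply observes that since $B/A \le C$ for a universal constant $C$, one can choose the plateau of the cutoff $\widetilde\chi$ to be $[1,C]$ rather than $[1,B/A]$ (and analogously for $\chi$), thereby eliminating the $B/A$-dependence, and then notes $A \simeq B$. Your version spells out the same idea in more detail and additionally treats the degenerate cases and the relaxation to $A>0$, but the core argument is identical.
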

\begin{proof}
If $1 < B/A \leq C$ with $C>0$ a universal constant, the dependence in $B/A$ can be dropped from the proof of Lemma~\ref{lemmaluis} by choosing the plateau region of $\widetilde{\chi}$ to be $[1,C]$ instead of $[1,B/A]$. Also, $A \simeq B$ holds independently of the parameters.
\end{proof}

The following lemma will be necessary to work with the structure functions of $R_s$ in Section~\ref{SECTION_StructureFunctions}. 
\begin{lem}\label{Application2}
Let $p \geq 1$, $0 < \ell < 1/2$ and $B \geq 1$. If $s\leq 1$,
\begin{equation}\label{lemmaluisforstrucfunc}
    \Bigg\| \sum_{n=1}^{\sqrt{B}} \frac{\sin(\pi n^2 \ell)}{n^{2s}}\, e_{n^2} \Bigg\|_p
    \lesssim_s
     \ell \, \Bigg\lVert \sum_{n=1}^{\sqrt{B}}  n^{2(1-s)}\, e_{n^2} \Bigg\rVert_p
     +  B^{3-s}\, \ell^3 \, \cosh{(2\pi B \ell)} \,  \Bigg\| \sum_{n=1}^{\sqrt{B}}  e_{n^2} \Bigg\|_p.
\end{equation}
If $1 < s \leq 3$, then 
\begin{equation}
\begin{split}
    \Bigg\| \sum_{n=1}^{\sqrt{B}} \frac{\sin(\pi n^2 \ell)}{n^{2s}} \, e_{n^2} \Bigg\|_p 
    & \lesssim 
    \ell \, \Bigg\| \sum_{n=1}^{\sqrt{B}} \,  \frac{e_{n^2}}{n^{2(s-1)}} \Bigg\|_p 
    + 
    \ell^3 \, \Bigg\| \sum_{n=1}^{\sqrt{B}} \, n^{2(3 - s)} \, e_{n^2} \Bigg\|_p   \\
    & \qquad \qquad \qquad \qquad \qquad \qquad  +
     B^{5-s}\, \ell^5\,  \cosh(2\pi \,  \ell B) \, \Bigg\| \sum_{n=1}^{\sqrt{B}} e_{n^2} \Bigg\|_p.
\end{split}
\end{equation}
\end{lem}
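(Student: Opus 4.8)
The plan is to Taylor-expand the sine around the origin, peel off the low-order polynomial part, and handle the genuinely higher-order remainder with the multiplier machinery from the proof of Lemma~\ref{lemmaluis}. For $s\le 1$ write $\sin(\pi n^2\ell)=\pi n^2\ell+R_n$ with $R_n=\sin(\pi n^2\ell)-\pi n^2\ell$, and for $1<s\le 3$ write $\sin(\pi n^2\ell)=\pi n^2\ell-\tfrac16(\pi n^2\ell)^3+\widetilde R_n$ with $\widetilde R_n$ the associated fifth-order remainder. Dividing by $n^{2s}$ and using the triangle inequality, the linear term contributes exactly $\pi\ell\,\lVert\sum_n n^{2(1-s)}e_{n^2}\rVert_p$ (equivalently $\ell\,\lVert\sum_n e_{n^2}/n^{2(s-1)}\rVert_p$), the first term on each right-hand side, and in the range $1<s\le3$ the cubic term contributes $\tfrac{\pi^3}{6}\ell^3\lVert\sum_n n^{2(3-s)}e_{n^2}\rVert_p$, the second term there. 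It then remains to control the remainder sums.

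For the remainder I would reproduce the convolution identity of Lemma~\ref{lemmaluis}, now reading the frequencies $n^2$ as the summation variable. The key observation is that $G(u)=(\sin u-u)/u^3$ and $\widetilde G(u)=(\sin u-u+u^3/6)/u^5$ are smooth and bounded on all of $\mathbb R$, with bounded derivatives of every order, the apparent singularities at $u=0$ being removable. Since $R_n=\pi^3\ell^3\,G(\pi n^2\ell)\,n^6$, one has $R_n/n^{2s}=\pi^3\ell^3\,G(\pi n^2\ell)\,n^{2(3-s)}$, so that, with $\chi$ the $B$-independent cutoff of~\eqref{CutoffLemmaLuis3},
\[
\sum_{n=1}^{\sqrt B}\frac{R_n}{n^{2s}}\,e_{n^2}=\pi^3\ell^3\,\bigl(\tilde g\ast\tilde H\bigr),\qquad \tilde g=\sum_{n=1}^{\sqrt B}n^{2(3-s)}e_{n^2},\quad \tilde H=\sum_{k\in\mathbb Z}G(\pi k\ell)\,\chi(k/B)\,e_k,
\]
exactly as in~\eqref{AuxiliaryFunction}, because $\chi(n^2/B)=1$ for $n^2\in[1,B]$. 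Young's inequality gives $\lVert\,\cdot\,\rVert_p\le\pi^3\ell^3\,\lVert\tilde g\rVert_p\,\lVert\tilde H\rVert_1$, while Lemma~\ref{lemmaluis} with exponent $\gamma=3-s\ge 2$ (applied in the variable $m=n^2$) yields $\lVert\tilde g\rVert_p\lesssim B^{3-s}\lVert\sum_n e_{n^2}\rVert_p$.

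The whole estimate then rests on $\lVert\tilde H\rVert_1$, which is the step I expect to be the main obstacle. By Poisson summation, as in~\eqref{PoissonSummationFormula1}--\eqref{PoissonSummationFormula2}, $\lVert\tilde H\rVert_1\le\lVert\Xi^\vee\rVert_{L^1(\mathbb R)}$ with $\Xi(\xi)=G(\pi\xi\ell)\chi(\xi/B)$; rescaling $\xi=B\eta$ turns this into $\lVert\Theta_\lambda^\vee\rVert_{L^1(\mathbb R)}$ with $\Theta_\lambda(\eta)=G(\lambda\eta)\chi(\eta)$ and $\lambda=\pi B\ell$. Since $\Theta_\lambda$ is supported in $\{|\eta|\le2\}$ and $\lVert\Theta_\lambda^{(j)}\rVert_2\lesssim_j(1+\lambda)^j$ by the boundedness of the derivatives of $G$, the elementary bound $\lVert F^\vee\rVert_1\lesssim\lVert F\rVert_2+\lVert F''\rVert_2$ gives $\lVert\Theta_\lambda^\vee\rVert_1\lesssim(1+\lambda)^2\le 2\cosh(2\pi B\ell)$. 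Combining the three estimates yields the remainder bound $\lesssim\ell^3B^{3-s}\cosh(2\pi B\ell)\lVert\sum_n e_{n^2}\rVert_p$, the second term for $s\le1$. The case $1<s\le3$ is identical with $G$ replaced by $\widetilde G$, the exponent $3-s$ by $5-s\ge2$, and the power $\ell^3$ by $\ell^5$, producing the third term. The crux throughout is recognizing that the removable-singularity functions $G,\widetilde G$ are smooth with derivatives that grow only polynomially in $\lambda$, so the transform $L^1$-norm—and hence the $\cosh$ factor, itself a loose bound for $(1+\pi B\ell)^2$—stays controlled uniformly in $B$ and $\ell$.
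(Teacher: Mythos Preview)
Your argument is correct, and it takes a genuinely different route from the paper's. The paper expands $\sin(\pi n^2\ell)$ as its \emph{full} Taylor series, applies the triangle inequality term by term to get an infinite sum $\sum_{k}\frac{(\pi\ell)^{2k+1}}{(2k+1)!}\big\|\sum_n n^{2(2k+1-s)}e_{n^2}\big\|_p$, pulls out the first one or two terms, and then for each remaining $k$ invokes Lemma~\ref{lemmaluis} to extract $B^{2k}$ together with a multiplier constant $\|(m_{2k})^\vee\|_{L^1}$. Summing these constants over $k$ is exactly the purpose of the auxiliary Lemma~\ref{lemmaVa}, which produces the $\cosh(2\pi B\ell)$ factor. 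You instead keep the remainder packaged as a single smooth multiplier $G(\pi n^2\ell)$ (respectively $\widetilde G$), exploit that $G$ is entire with bounded derivatives so that $\Theta_\lambda=G(\lambda\,\cdot)\chi$ has $\|\Theta_\lambda^\vee\|_{L^1}\lesssim(1+\lambda)^2$ via the standard two-derivative bound, and bypass Lemma~\ref{lemmaVa} altogether. Your route is shorter and yields the sharper intermediate bound $(1+\pi B\ell)^2$ in place of $\cosh(2\pi B\ell)$; the paper's approach is more modular in that it reuses only the black-box statement of Lemma~\ref{lemmaluis} for monomial multipliers, at the price of an extra summation lemma.
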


To prove it, we need one more auxiliary lemma.
\begin{lem}\label{lemmaVa}
Let $C>0$ and $M\geq1$.
Let also $f\in H^1(\R)$ with support in $[-M,M]$.
Then
\begin{equation*}
\sum_{k = 0}^{\infty}\frac{C^{2k}\|\partial^{2k}\hat f\|_{L^1(\R)}}{(2k+1)!}\lesssim \cosh(2\pi CM) \|f\|_{H^1(\R)}.
\end{equation*}
\end{lem}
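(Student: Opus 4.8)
The plan is to relate the even derivatives of $\widehat f$ to multiplication by powers of $x$ on the physical side, and then to invoke the soft fact that the Fourier transform of an $H^1(\R)$ function lies in $L^1(\R)$ with controlled norm. With the normalization in use, $\partial_\xi \widehat f = \widehat{(-2\pi i x)\,f}$, so iterating gives $\partial^{2k}\widehat f = \widehat{g_k}$ with $g_k(x) = (-1)^k (2\pi)^{2k}\, x^{2k} f(x)$. The embedding I would use is that for every $g \in H^1(\R)$ one has $\lVert \widehat g \rVert_{L^1(\R)} \lesssim \lVert g \rVert_{H^1(\R)}$; this follows from Cauchy--Schwarz after inserting the weight $(1+4\pi^2\xi^2)^{1/2}$, since $\int_\R (1+4\pi^2\xi^2)^{-1}\,d\xi < \infty$ while $\int_\R (1+4\pi^2\xi^2)\,|\widehat g(\xi)|^2\,d\xi = \lVert g \rVert_{H^1(\R)}^2$ by Plancherel and $\widehat{g'}(\xi) = 2\pi i \xi\, \widehat g(\xi)$. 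Applying this to $g = g_k$ reduces everything to estimating $\lVert x^{2k} f \rVert_{H^1(\R)}$.

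The second step is to control $\lVert x^{2k} f \rVert_{H^1(\R)}$ using the support of $f$ and $M \geq 1$. On $[-M,M]$ we have $|x|^{2k} \leq M^{2k}$, so $\lVert x^{2k} f \rVert_{L^2} \leq M^{2k}\lVert f \rVert_{L^2}$. For the derivative the product rule gives $(x^{2k}f)' = 2k\, x^{2k-1} f + x^{2k} f'$, and since $M^{2k-1} \leq M^{2k}$ this yields $\lVert (x^{2k}f)' \rVert_{L^2} \leq (2k+1)\,M^{2k}\lVert f \rVert_{H^1(\R)}$. Hence $\lVert x^{2k} f \rVert_{H^1(\R)} \lesssim (2k+1)\,M^{2k}\,\lVert f \rVert_{H^1(\R)}$, and together with the first step and the constant $(2\pi)^{2k}$ from $g_k$ this gives
\[
\lVert \partial^{2k}\widehat f \rVert_{L^1(\R)} \lesssim (2\pi)^{2k}\,M^{2k}\,(2k+1)\,\lVert f \rVert_{H^1(\R)}.
\]

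The decisive point for closing the argument is that the single factor $(2k+1)$ produced by the Leibniz rule cancels against the factorial, since $(2k+1)/(2k+1)! = 1/(2k)!$. Summing the series I would then get
\[
\sum_{k=0}^\infty \frac{C^{2k}\,\lVert \partial^{2k}\widehat f \rVert_{L^1(\R)}}{(2k+1)!} \lesssim \lVert f \rVert_{H^1(\R)} \sum_{k=0}^\infty \frac{(2\pi C M)^{2k}}{(2k)!} = \cosh(2\pi C M)\,\lVert f \rVert_{H^1(\R)},
\]
using the Taylor series of $\cosh$, which is precisely the claimed estimate.

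I do not expect a genuine obstacle here: once the two soft ingredients (the differentiation identity and the $H^1(\R) \hookrightarrow \mathcal{F}L^1$ bound) are in place, the rest is bookkeeping, and the membership $g_k \in H^1(\R)$ is immediate from the compact support of $f$. The one place demanding care is the dependence on $k$: one must check that differentiating $x^{2k} f$ creates \emph{exactly} one polynomial factor $(2k+1)$ (no worse), so that it cancels $(2k+1)!$ and the remaining series sums to a finite $\cosh$ rather than diverging; and one uses $M \geq 1$ to absorb the lower power $M^{2k-1}$ into $M^{2k}$ so that $M$ enters only through $M^{2k}$, matching the argument of the hyperbolic cosine.
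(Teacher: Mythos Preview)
Your proof is correct and follows the same overall skeleton as the paper's: write $\partial^{2k}\widehat f$ as the Fourier transform of $(2\pi)^{2k}x^{2k}f$ up to sign, bound the $L^1$ norm of this Fourier transform, exploit the support to extract $M^{2k}$, and sum against $(2k+1)!$ to recover the $\cosh$.

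The one genuine difference is in the inequality used to control $\lVert \widehat g\rVert_{L^1}$. The paper first proves the interpolation bound $\lVert g\rVert_{L^1}\lesssim \lVert g\rVert_{L^2}^{1/2}\lVert xg\rVert_{L^2}^{1/2}$ and applies it to $g=(x^{2k}f)^\wedge$, which yields the intermediate estimate $\lVert\partial^{2k}\widehat f\rVert_{L^1}\lesssim \sqrt{2k+1}\,(2\pi M)^{2k}\lVert f\rVert_{H^1}$. You instead use the more standard embedding $\lVert\widehat g\rVert_{L^1}\lesssim\lVert g\rVert_{H^1}$, which gives the slightly weaker factor $(2k+1)$ in place of $\sqrt{2k+1}$. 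Both are harmless, since $(2k+1)/(2k+1)!=1/(2k)!$ sums exactly to $\cosh$, while the paper's $\sqrt{2k+1}/(2k+1)!$ is simply bounded above by the same thing. Your route is a bit more direct (no auxiliary interpolation lemma to prove), at the cost of a cruder $k$-dependence that does not matter for the conclusion.
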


\begin{proof}[Proof of Lemma~\ref{lemmaVa}]
Let us first prove 
\begin{equation}\label{interpoL1}
 \|g\|_{L^1(\R)} \leq 2\sqrt{2}\, \|g\|_{L^2(\R)}^{1/2}\|xg\|_{L^2(\R)}^{1/2}, \qquad \forall g \in L^2(\mathbb{R}) \, \text{ such that } x g \in L^2(\mathbb R). 
\end{equation}
For that, let $R>0$ and $g\in L^2(\mathbb{R})$ a nonzero function. The Cauchy-Schwarz inequality implies that
\begin{equation}
 \lVert g \rVert_{L^1(\R)}=\int_{|x| < R} |g(x)| \,\mathrm{d}x+\int_{|x|>R} \frac{ |x g(x)|}{|x|}\,\mathrm{d}x \leq \sqrt{2R}\, \|g\|_{L^2(\R)}+\sqrt{2/R}\,\|xg\|_{L^2(\R)},
\end{equation}
and choosing $R=\|xg\|_{L^2(\R)}/\|g\|_{L^2(\R)}$ \eqref{interpoL1} follows. 

Let now $M \geq 1$ and $f\in H^1(\R)$ a nonzero function with support in $[-M,M]$. Let us prove
\begin{equation}\label{boundd2k}
 \|\partial^{2k}\widehat f\|_{L^1(\R)} \lesssim  \sqrt{2k+1} \, (2\pi M)^{2k} \, \|f\|_{H^1(\R)}, \qquad \forall k \in \mathbb{N}\cup \{0\}.
\end{equation}
The case $k=0$ immediately follows using \eqref{interpoL1} for $\widehat{f}$ because 
\begin{equation}
 \| \widehat f \|_{L^1(\R)} \lesssim \|\widehat f\|_{L^2(\R)}^{1/2}\|\xi\widehat f\|_{L^2(\R)}^{1/2} \lesssim  \|f\|_{H^1(\R)}.
\end{equation}
For $k>0$, since $\partial^{2k}\widehat{f} = (-2\pi i)^{2k}\,  (x^{2k}\,f)^{\wedge}$ and $2\pi i\,\xi\widehat{f} = (\partial f)^\wedge$, using \eqref{interpoL1} with $(x^{2k}f)^\wedge$ and the Plancherel theorem we have
\begin{align}
 \lVert \partial^{2k}\widehat f \rVert_{L^1(\R)} & \lesssim (2\pi)^{2k}  \lVert x^{2k}f \rVert_{L^2(\R)}^{1/2}\|\partial_x(x^{2k}f)\|_{L^2(\R)}^{1/2} \\
 & \lesssim (2\pi)^{2k} \|x^{2k}f\|_{L^2(\R)}^{1/2} \left( 2k\|x^{2k-1}f\|_{L^2(\R)} + \|x^{2k}f'\|_{L^2(\R)}  \right)^{1/2}.
\end{align}
Since $f$ is supported in $[-M,M]$ and $M\geq1$, we get
\begin{align}
 \|\partial^{2k}\widehat f\|_{L^1(\R)} & \lesssim  (2\pi M)^{2k} \|f\|_{L^2(\R)}^{1/2}\left(1+2k\right)^{1/2}\|f\|_{H^1(\R)}^{1/2} \lesssim  \sqrt{2k+1} \, (2\pi M)^{2k} \, \|f\|_{H^1(\R)} ,
\end{align}
which is \eqref{boundd2k}. Thus, for $C>0$, \eqref{boundd2k} implies that
\begin{equation}
\sum_{k=0}^{\infty}\frac{C^{2k}\|\partial^{2k}\widehat f\|_{L^1(\R)}}{(2k+1)!} \lesssim  \sum_{k=0}^{\infty}\frac{C^{2k}(2\pi M)^{2k}\sqrt{2k+1}}{(2k+1)!} \|f\|_{H^1(\R)}  \lesssim   \cosh(2\pi CM) \|f\|_{H^1(\R)}.
\end{equation}
\end{proof}

With Lemmas~\ref{lemmaluis} and \ref{lemmaVa}, we are ready to prove Lemma~\ref{Application2}.
\begin{proof}[Proof of Lemma~\ref{Application2}]
By the Taylor series expansion of the sine and the triangle inequality, we obtain 
\begin{equation}\label{AuxCor1}
    \begin{split}
        \left\| \sum_{n=1}^{\sqrt{B}} \frac{\sin(\pi n^2 \ell)}{n^{2s}} e_{n^2} \right\|_p
        & =  \left\| \sum_{n=1}^{\sqrt{B}} \left( \sum_{k=0}^{\infty} (-1)^k \, \frac{(\pi n^2 \ell)^{2k+1}}{(2k+1)!}\right) \, \frac{e_{n^2}}{n^{2s}} \right\|_p \\
        %& =  \left\| \sum_{k=0}^\infty (-1)^k\, \frac{(\pi \ell)^{2k+1}}{(2k+1)!}  \left( \sum_{n=1}^B  \, n^{2k + 1 - s}\, \sigma_n\,  e_n \right) \right\|_p \\
        & \leq \sum_{k=0}^{\infty}\frac{(\pi\ell)^{2k+1}}{(2k+1)!} \left\| \sum_{n=1}^{\sqrt{B}}  n^{2(2k + 1 - s)}\, e_{n^2} \right\|_p \\
        & 
        = \pi \ell \, \left\lVert \sum_{n=1}^{B}n^{1-s}\, \sigma_n\, e_{n} \right\rVert_p
        + \sum_{k=0}^{\infty}\frac{(\pi\ell)^{2k+3}}{(2k+3)!} \left\| \sum_{n=1}^{B}  n^{2k + 3 - s}\, \sigma_n\,  e_{n} \right\|_p,
    \end{split}
\end{equation}
where $\sigma_n$ is the characteristic function of the square integers.
If $s \leq 1$, then $ 3-s \geq 2 $, 
so by Lemma~\ref{lemmaluis} we take $n^{2(3-s)}$ out of the second $L^p$ norm to get
\begin{equation}
    \left\| \sum_{n=1}^{\sqrt{B}} \frac{\sin(\pi n^2 \ell)}{n^{2s}} \, e_{n^2} \right\|_p \lesssim_s 
    \ell \, \left\lVert \sum_{n=1}^{B}  n^{1-s}\, \sigma_n \, e_{n} \right\rVert_p
    +
    B^{3-s}\, \sum_{k=0}^{\infty}\frac{(\pi\ell)^{2k+3}}{(2k+3)!}\, \left\| \sum_{n=1}^{B}  n^{2k}\, \sigma_n \, e_{n} \right\|_p.
\end{equation}
We bring $n^{2k}$ outside too, but now we need to keep control of the constants that depend on $k$. 
For that, we see in the proof of Lemma~\ref{lemmaluis}, in \eqref{IntermediateStepLemmaLuis2}, 
that we get the bound 
\begin{equation}
 %\left\| \sum_{n=1}^{\sqrt{B}}  n^{4k}\,  e_{n^2} \right\|_p = 
 \left\| \sum_{n=1}^B  n^{2k}\sigma_n e_n \right\|_p \leq B^{2k} \left\| (m_{2k})^\vee \right\|_{L^1(\R)}\left\| \sum_{n=1}^B \sigma_n e_n \right\|_p, 
\end{equation}
where $m_{2k}(\xi) = \xi^{2k} \, \chi(\xi)$ and $\chi \in \mathcal S(\mathbb R)$ defined in \eqref{CutoffLemmaLuis3}.  Write $ (m_{2k})^\vee = (2\pi i )^{-2k}\partial^{2k} \check\chi$ so that
%Consequently, since $H_{2k} = \mathcal{F}^{-1}\left( \xi^{2k}\chi \right) = (2\pi i )^{-2k}\partial^{2k}(\mathcal F^{-1} \chi_{(0,1)})$, we get 
\begin{equation}
\begin{split}
    \left\| \sum_{n=1}^{\sqrt{B}} \frac{\sin(\pi n^2 \ell)}{n^{2s}}\,  e_{n^2} \right\|_p 
    &  \lesssim_s
    \ell \, \left\lVert \sum_{n=1}^{B}  n^{1-s}\, \sigma_n \, e_{n} \right\rVert_p \\
    & \qquad \qquad  +
    B^{3-s}\, \ell^3 \,  \Bigg\| \sum_{n=1}^B \sigma_n e_n \Bigg\|_p \,  \sum_{k=0}^{\infty}  \left(\frac{B \ell}{2}\right)^{2k} \frac{\| \partial^{2k}\check\chi \|_{L^1(\R)}}{(2k+3)!}. 
\end{split}
 \end{equation}
Applying Lemma~\ref{lemmaVa} to $\chi_-(x) = \chi(-x)$ with $M=2$, we obtain
\begin{equation}
    \left\| \sum_{n=1}^{\sqrt{B}} \frac{\sin(\pi n^2 \ell)}{n^{2s}}\, e_{n^2} \right\|_p
    \lesssim_s
     \ell \, \Bigg\lVert \sum_{n=1}^{\sqrt{B}}  n^{2(1-s)}\, e_{n^2} \Bigg\rVert_p
     + \left\| \chi_- \right\|_{H^1(\R)} \, B^{3-s}\, \ell^3 \, \cosh{(2\pi B \ell)} \,  \Bigg\| \sum_{n=1}^{\sqrt{B}}  e_{n^2} \Bigg\|_p.
\end{equation}

If $1 < s \leq 3$, 
we take one more term from the sum in \eqref{AuxCor1} and write
\begin{equation}
\begin{split}
    \Bigg\| \sum_{n=1}^{\sqrt{B}} \frac{\sin(\pi n^2 \ell)}{n^{2s}} \, e_{n^2} \Bigg\|_p
    & 
    \lesssim 
    \ell \, \Bigg\| \sum_{n=1}^{\sqrt{B}} \,  \frac{e_{n^2}}{n^{2(s-1)}} \Bigg\|_p 
    + 
    \ell^3 \, \Bigg\| \sum_{n=1}^{\sqrt{B}} \, n^{2(3 - s)} \, e_{n^2} \Bigg\|_p  
    \\
    & \qquad \qquad 
    +
    \sum_{k=0}^\infty \frac{(\pi \ell)^{2k+5}}{(2k+5)!} \, \Bigg\lVert  \sum_{n=1}^B n^{2k + 5 - s} \, \sigma_n \, e_n  \Bigg\rVert_p.
\end{split}
\end{equation}
For the last term,
since now $2k+5-s \geq 2$, we can do like when $s\leq 1$ and get
\begin{equation}
    \begin{split}
    \sum_{k=0}^{\infty}\frac{(\pi\ell)^{2k+5}}{(2k+5)!} \Bigg\| \sum_{n=1}^B  n^{2k + 5-s}\sigma_n e_n \Bigg\|_p 
    & 
    \leq B^{5-s}\, \ell^5 \,  \Bigg\| \sum_{n=1}^{\sqrt{B}} e_{n^2} \Bigg\|_p  \, \sum_{k=0}^{\infty} \left( \frac{B \ell}{2} \right)^{2k} \frac{ \lVert \partial^{2k} \check{\chi} \rVert_{L^1(\mathbb R)} }{(2k+5)!} 
    \\
    &
    \lesssim \left\| \chi_- \right\|_{H^1(\R)}\, B^{5-s}\, \ell^5\,  \cosh(2\pi \,  \ell B) \, \left\| \sum_{n=1}^{\sqrt{B}} e_{n^2} \right\|_p,
    \end{split}
\end{equation}
and the lemma is proved.
\end{proof}

\begin{rmk}
We do not write the expressions for the cases $s>3$ in Lemma~\ref{Application2},
but they can be obtained iterating the proof.
\end{rmk}

\section{Estimates on \texorpdfstring{$L^p$}{lp} norms}\label{SECTION_LpNorms}

We begin by recalling an important estimate on the $L^p$ norms of the functions 
\begin{equation}\label{ZalcwasserFunction}
    K_N(x) = \sum_{n=1}^N e^{2\pi i n^2 x}, \qquad N \in \mathbb N, \quad N \gg 1,
\end{equation}
due to Zalcwasser \cite[p.28]{Zalcwasser}.
\begin{thm}[Zalcwasser]\label{TheoremZalcwasser}
Let $p > 0$ and $N \in \mathbb N$. Then, 
\begin{equation}\label{Zalcwasser}
    \left\lVert K_N \right\rVert_p \simeq_p \psi_p(N) = \left\{ \begin{array}{ll}
        N^{1/2}, & p < 4, \\
        N^{1/2}\, \left( \log N \right)^{1/4},  & p=4, \\
        N^{1-2/p}, & p>4. 
    \end{array}
    \right.
\end{equation}
\end{thm}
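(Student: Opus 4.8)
The plan is to bypass the crude moment interpolation that sees only a few $L^p$ values and instead control $K_N$ pointwise via the Hardy--Littlewood circle method, which delivers the matching upper and lower bounds at once and explains the trichotomy; this necessarily imports external number-theoretic input (Gauss sums), since Zalcwasser's theorem is itself being recalled rather than deduced from the lemmas above. Two anchors are free: orthogonality of $\{e_{n^2}\}_{n=1}^N$ gives $\lVert K_N \rVert_2 = N^{1/2}$ exactly, while the triangle inequality gives the pointwise bound $\lVert K_N \rVert_\infty = |K_N(0)| = N$. Since $\mathbb T$ has unit measure, $p \mapsto \lVert K_N \rVert_p$ is nondecreasing, so $\lVert K_N \rVert_p \le \lVert K_N \rVert_2 = N^{1/2}$ for $p \le 2$; the range $p < 2$ then needs only a matching lower bound, which I postpone. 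The substance is the range $p \ge 2$.

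For $p \ge 2$ I would carry out a Dirichlet/Farey dissection: every $x \in \mathbb T$ is written as $x = a/q + \beta$ with $(a,q)=1$, $q \le N$, $|\beta| \le 1/(qN)$. On each such arc the Weyl--Gauss-sum approximation
\begin{equation*}
 K_N\Big(\frac aq + \beta\Big) = \frac{S(a,q)}{q}\int_0^N e^{2\pi i \beta t^2}\,dt + E_{a,q}(\beta), \qquad S(a,q) = \sum_{n=0}^{q-1} e^{2\pi i a n^2/q},
\end{equation*}
holds, where the classical Gauss-sum bound gives $|S(a,q)| \lesssim q^{1/2}$ and the Fresnel integral satisfies $\big|\int_0^N e^{2\pi i \beta t^2}\,dt\big| \simeq \min(N,|\beta|^{-1/2})$. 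Hence $|K_N(a/q+\beta)| \lesssim q^{-1/2}\min(N,|\beta|^{-1/2})$ up to the error, and integrating $|K_N|^p$ over the arc (using $\int_{|\beta|\le 1/(qN)} \min(N,|\beta|^{-1/2})^p\,d\beta \simeq N^{p-2}$ for $p>2$) and summing over the $\phi(q)\simeq q$ admissible residues yields
\begin{equation*}
 \lVert K_N \rVert_p^p \simeq N^{p-2}\sum_{q\le N} q^{1-p/2}.
\end{equation*}
The arithmetic sum is then read off directly: it converges for $p>4$ (giving $N^{p-2}$, i.e. $N^{1-2/p}$), equals $\simeq \log N$ for $p=4$ (giving $N^{1/2}(\log N)^{1/4}$), and equals $\simeq N^{2-p/2}$ for $2<p<4$ (giving $N^{p/2}$, i.e. $N^{1/2}$). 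Selecting a single representative arc (e.g. $q=1$, $|\beta|\lesssim N^{-2}$ together with a dyadic band in $\beta$) provides the matching lower bounds.

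Finally I would close the range $p<2$ by reverse Hölder: fixing any $q_0 \in (2,4)$, for which the previous step already gives $\lVert K_N \rVert_{q_0} \simeq N^{1/2}$, interpolation of $L^p$ norms yields $N^{1/2} = \lVert K_N \rVert_2 \le \lVert K_N \rVert_p^{1-\theta}\lVert K_N \rVert_{q_0}^{\theta}$ with $\theta \in (0,1)$ determined by $1/2 = (1-\theta)/p + \theta/q_0$, whence $\lVert K_N \rVert_p \gtrsim N^{1/2}$; combined with the easy upper bound this settles $p<2$. As an independent check at the critical exponent, $\lVert K_N \rVert_4^4$ equals the number of solutions of $a^2+b^2=c^2+d^2$ in $\{1,\dots,N\}$, which is $\simeq N^2 \log N$ because $\sum_{m \lesssim N^2} r_2(m)^2 \simeq N^2 \log N$, with $r_2$ the sum-of-two-squares counting function (the diagonal alone contributing $\simeq N^2$). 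The main obstacle is making the circle-method heuristic rigorous: one must show that the error terms $E_{a,q}(\beta)$ remain negligible after being raised to the $p$-th power, integrated over all arcs, and summed over $q \le N$, and that this holds with constants controlled uniformly enough to survive the transition at $p=4$, where the governing sum diverges only logarithmically and is therefore most sensitive to error accumulation.
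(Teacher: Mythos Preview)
The paper does not prove this theorem: it is stated as a classical result and attributed with a citation to Zalcwasser \cite[p.~28]{Zalcwasser}, with no argument given. So there is no ``paper's own proof'' to compare against; your proposal goes strictly beyond what the paper does.

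That said, your sketch is the standard route and is correct in outline. The circle-method dissection with the Gauss-sum/Fresnel approximation is precisely how one obtains the pointwise bound $|K_N(a/q+\beta)| \lesssim q^{-1/2}\min(N,|\beta|^{-1/2})$, and your arithmetic on $\sum_{q\le N} q^{1-p/2}$ correctly reproduces the trichotomy. Two small points worth tightening: (i) the Farey arcs of radius $1/(qN)$ overlap, so for the lower bound one should restrict to disjoint sub-arcs (or simply to prime $q$, where $\phi(q)\simeq q$ holds exactly and $|S(a,q)|=\sqrt q$); the heuristic $\phi(q)\simeq q$ is only true on average and should be replaced by $\phi(q)\le q$ for the upper bound and a restriction to primes for the lower. (ii) As you yourself flag, the genuine work is bounding the error $E_{a,q}(\beta)$ in $L^p$ after summing over all arcs; this is where the classical proof spends its effort, and near $p=4$ the margin is only logarithmic. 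Your reverse-H\"older argument for $p<2$ and the $r_2$-based cross-check at $p=4$ are both valid.
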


Thinking of high-pass filters, from Zalcwasser's theorem~\ref{TheoremZalcwasser} and Lemma~\ref{lemmaluis} we deduce the following behavior for Littlewood-Paley blocks of $R_s$. 

\begin{prop}\label{PropDeltak}
Let $p > 0$, $s \in \mathbb R$ and $A, k \in \mathbb N$. Then, 
\begin{equation}\label{EstimationLPPiece}
\left\lVert  \sum_{n=A^k}^{A^{k+1}-1} \frac{e^{2\pi i n^2 x}}{n^{2s}} \right\rVert_p \simeq_p \frac{\lVert K_{A^k} \rVert_p}{A^{2ks}} \simeq_p \left\{ \begin{array}{ll}
    A^{k(1/2 - 2s)}, & p < 4,  \\
    A^{k(1/2 - 2s)}\, k^{1/4},  & p=4, \\
    A^{k(1 - 2/p - 2s)}, & p > 4.
\end{array}
\right.
\end{equation}
\end{prop}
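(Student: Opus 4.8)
The plan is to separate the extraction of the coefficient $n^{-2s}$ from the estimation of the pure exponential sum, and only at the end invoke Zalcwasser's theorem. First I would pass to the frequency variable $m=n^2$, writing
\[
\sum_{n=A^k}^{A^{k+1}-1}\frac{e_{n^2}}{n^{2s}}=\sum_{m=A^{2k}}^{(A^{k+1}-1)^2}\frac{\sigma_m}{m^s}\,e_m,
\]
where $\sigma_m$ is the characteristic function of the square integers. The endpoints $A^{2k}$ and $(A^{k+1}-1)^2$ have ratio bounded by $A^2$, \emph{independently of $k$}, so Corollary~\ref{CorollaryLuis} applies with exponent $-s$ and coefficients $a_m=\sigma_m$ (this is the step that requires $p\geq1$; the range $0<p<1$ lies in $p<4$ and can be absorbed afterwards). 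It lets me pull the factor $m^{-s}\simeq_{s,A} A^{-2ks}$ out of the norm, with constants uniform in $k$, reducing the whole claim to the two-sided estimate
\[
\Big\lVert T_k\Big\rVert_p\simeq_p\lVert K_{A^k}\rVert_p,\qquad T_k:=\sum_{n=A^k}^{A^{k+1}-1}e_{n^2},
\]
after which Theorem~\ref{TheoremZalcwasser} supplies the explicit values. Here I would repeatedly use $\psi_p(A^{k+1})\simeq_{p,A}\psi_p(A^k)$, since passing from $A^k$ to $A^{k+1}$ changes a power (and a power of a logarithm) only by a bounded factor, together with $\log A^k\simeq_A k$.

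For the upper bound I write $T_k=K_{A^{k+1}-1}-K_{A^k-1}$ and apply the triangle inequality together with Zalcwasser's theorem to each piece, obtaining $\lVert T_k\rVert_p\lesssim_p\psi_p(A^{k+1})+\psi_p(A^k)\simeq_{p,A}\psi_p(A^k)\simeq_p\lVert K_{A^k}\rVert_p$, valid for every $p$.

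The lower bound splits by ranges. The second moment is computed exactly, $\lVert T_k\rVert_2=(A^{k+1}-A^k)^{1/2}\simeq_A A^{k/2}$. For $2\leq p<4$ this already gives $\lVert T_k\rVert_p\geq\lVert T_k\rVert_2\gtrsim_A A^{k/2}$; for $0<p<2$ I interpolate, choosing any $q\in(2,4)$ and writing $\tfrac12=\tfrac\theta p+\tfrac{1-\theta}q$, so that $\lVert T_k\rVert_2\leq\lVert T_k\rVert_p^\theta\lVert T_k\rVert_q^{1-\theta}$ combined with the already-proved upper bound $\lVert T_k\rVert_q\lesssim A^{k/2}$ yields $\lVert T_k\rVert_p\gtrsim A^{k/2}$. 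Since $\psi_p(A^k)\simeq A^{k/2}$ for $p<4$, this matches. For $p>4$ I exploit the concentration near the origin: on the arc $|x|\lesssim_A A^{-2(k+1)}$ every phase $n^2x$ with $n<A^{k+1}$ stays within a small constant of $0$, so $|T_k(x)|\gtrsim A^k$ there, and integrating gives $\lVert T_k\rVert_p^p\gtrsim A^{kp}\cdot A^{-2(k+1)}\simeq_A A^{k(p-2)}$, i.e. $\lVert T_k\rVert_p\gtrsim_A A^{k(1-2/p)}=\psi_p(A^k)$.

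The remaining case $p=4$ is the crux, and I expect it to be the main obstacle, since neither the $L^2$ bound nor the origin peak reproduces the logarithmic factor $k^{1/4}$. Here I would expand the fourth moment as an additive-energy count,
\[
\lVert T_k\rVert_4^4=\#\{(n_1,n_2,n_3,n_4)\in I^4:\ n_1^2+n_2^2=n_3^2+n_4^2\},\qquad I=[A^k,A^{k+1}),
\]
which equals $\sum_N r_I(N)^2$ with $r_I(N)=\#\{(n_1,n_2)\in I^2:\ n_1^2+n_2^2=N\}$. The diagonal contributes the main term $\simeq|I|^2\simeq A^{2k}$, and the logarithmic gain must come from the genuine second moment of the number of representations as a sum of two squares restricted to the block, namely $\sum_N r_I(N)^2\gtrsim|I|^2\log|I|\simeq A^{2k}k$; this is a classical but delicate estimate, a localized form of $\sum_{N\leq X}r_2(N)^2\simeq X\log X$. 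A self-contained alternative is a major-arc analysis of $T_k$: at each rational $a/q$ with $q\lesssim|I|^{2/3}$ one has $|T_k(a/q+\beta)|\gtrsim|I|/\sqrt q$ on an arc of width $\simeq|I|^{-2}$, and these arcs are disjoint, so summing their contributions produces $\sum_{q}\phi(q)\,|I|^2/q^2\simeq|I|^2\log|I|$, the logarithm arising from $\sum_q 1/q$. Either route gives $\lVert T_k\rVert_4\gtrsim A^{k/2}k^{1/4}$, matching the upper bound and, after restoring the factor $A^{-2ks}$, completing the proof.
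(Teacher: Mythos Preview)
Your argument is correct, and for the upper bound and the extraction of $n^{-2s}$ via Corollary~\ref{CorollaryLuis} it coincides with the paper's. The difference is entirely in the lower bound for $\lVert T_k\rVert_p$. You split into ranges and, in the hardest case $p=4$, propose to recompute the additive energy of $\{n^2:n\in[A^k,A^{k+1})\}$ or to run a major-arc analysis on $T_k$ directly; this is essentially re-proving the lower half of Zalcwasser's theorem for the block, which is feasible but laborious. The paper instead observes that $T_k = K_{A^{k+1}-1}-K_{A^k-1}$ and applies the \emph{reverse} triangle inequality:
\[
\lVert T_k\rVert_p \geq \big|\,\lVert K_{A^{k+1}-1}\rVert_p - \lVert K_{A^k-1}\rVert_p\,\big|
\geq c_p\,\psi_p(A^{k+1}-1) - C_p\,\psi_p(A^k-1),
\]
and since $\psi_p(A^{k+1})/\psi_p(A^k)\geq A^{1/2}$, taking $A$ large enough (depending only on the Zalcwasser constants $c_p,C_p$; see Remark~\ref{RemarkLittlewoodPaley}) makes the right-hand side $\gtrsim\psi_p(A^k)$ for every $p\geq1$ at once, including $p=4$ with its logarithm. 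This one-line trick bypasses your entire case analysis; the price is that the Littlewood--Paley parameter $A$ must be chosen large, which is harmless for the applications. Your route, by contrast, would work for any fixed $A\geq2$, at the cost of redoing the $L^4$ number theory.
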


\begin{proof}
Assume first that $p\geq 1$. Since $(A^{k+1}-1 )/A^k \leq A$, which is independent of $k$, by Corollary~\ref{CorollaryLuis} we write
\begin{equation}\label{BeforeTriangleInequality}
    \left\lVert \sum_{n=A^k}^{A^{k+1}-1} \frac{e^{2\pi i n^2 x}}{n^{2s}} \right\rVert_p \simeq \frac{1}{A^{2ks}} \, \left\lVert \sum_{n=A^k}^{A^{k+1}-1} e^{2\pi i n^2 x} \right\rVert_p = \frac{1}{A^{2ks}} \, \left\lVert K_{A^{k+1}-1} - K_{A^k-1} \right\rVert_p 
\end{equation}
By the triangle inequality and Theorem~\ref{TheoremZalcwasser},  \eqref{BeforeTriangleInequality} is bounded from above by 
\begin{equation}\label{UpperBound}
    \frac{1}{A^{2ks}} \, \left( \left\lVert K_{A^{k+1}-1}  \right\rVert_p + \left\lVert K_{A^k-1} \right\rVert_p  \right) \lesssim  \frac{1}{A^{2ks}} \,  \left\lVert K_{A^k} \right\rVert_p, 
\end{equation}
and from below (see Remark~\ref{RemarkLittlewoodPaley}) by 
\begin{equation}\label{LowerBound}
    \frac{1}{A^{2ks}} \, \left| \left\lVert K_{A^{k+1}-1}  \right\rVert_p - \left\lVert K_{A^k-1} \right\rVert_p  \right| \gtrsim \frac{1}{A^{2ks}} \, \left\lVert K_{A^k} \right\rVert_p, 
\end{equation}
so the statement follows.

Once the result holds for $p \geq 1$, let us prove it for $0<p<1$. We do it by interpolation using H\"older's inequality as follows. For simplicity, call $\Delta_k(x) = \sum_{n=A^k}^{A^{k+1}-1} e^{2\pi i n^2 x}/n^{2s}$. Then, 
\begin{equation}\label{ExtrapolationByHolder}
    \left\lVert \Delta_k \right\rVert_p^p =  \int_{\mathbb T} \left| \Delta_k (x) \right|^p \, dx \, \, \,  \lesssim  \, \, \left( \int_{\mathbb T} \left| \Delta_k (x) \right|^{p(2/p)}\, dx \right)^{p/2} = \lVert \Delta_k \rVert_2^p, \qquad \forall p \in (0,2).
\end{equation}
We already computed $\lVert \Delta_k \rVert_2$ above, so we get $ \left\lVert \Delta_k \right\rVert_p \lesssim A^{k(1/2-2s)}$. For the lower bound, we interpolate 2 between $p$ and 3. Indeed, if $p<2<3$, 
\begin{equation}\label{Interpolation1}
    \left\lVert \Delta_k \right\rVert_2^2 = \int_{\mathbb T} \left| \Delta_k \right|^2 \leq \left( \int_{\mathbb T} \left| \Delta_k \right|^{a\theta} \right)^{1/\theta} \, \left( \int_{\mathbb T} \left| \Delta_k \right|^{b\theta'} \right)^{1/\theta'} = \lVert \Delta_k \rVert_p^{p/\theta}\,\lVert \Delta_k \rVert_3^{3/\theta'},
\end{equation}
where $a+b=2$, $a\theta=p$, $b\theta' = 3$, $1/\theta + 1/\theta'=1$ and $\theta >1$. This implies $\theta = 3-p>1$, so by the result for $p\geq 1$ we get
\begin{equation}\label{Interpolation2}
    \lVert \Delta_k \rVert_p \geq \frac{\left\lVert \Delta_k \right\rVert_2^{2\theta/p}}{\left\lVert \Delta_k \right\rVert_3^{3\theta/(p\theta')}} \simeq \left(A^{k(1/2-2s)} \right)^{(2\theta - 3\theta/\theta')/p} = A^{k(1/2-2s)}.
\end{equation}
\end{proof}
\begin{rmk}\label{RemarkLittlewoodPaley}
To be precise, for \eqref{LowerBound} to hold we need to take the Littlewood-Paley parameter $A$ large enough. Indeed, if $0 < c_p < C_p$ are the underlying constants in Theorem~\ref{TheoremZalcwasser}, then we have
\begin{equation}
\begin{split}
    \left\lVert K_{A^{k+1}-1}  \right\rVert_p - \left\lVert K_{A^k-1} \right\rVert_p & \geq c_p \psi_p(A^{k+1})/2 - C_p \psi_p(A^k)  = \psi_p(A^k) \left( \frac{c_p}{2}  \frac{\psi_p(A^{k+1})}{\psi_p(A^k)} - C_p \right) \\
    & \geq \psi_p(A^k) \left( \frac{c_p}{2} A^{1/2} - C_p \right),
\end{split}
\end{equation}
so we need to have $A = A_p$ large enough such that $c_p A^{1/2}/2 - C_p \geq 1$. 
\end{rmk}

In the setting of the structure functions, we will need to compute the behavior of the low-pass filters of the functions $\sum_{n=1}^\infty n^{-2s}e^{2\pi i n^2 x}$. The following theorem can be seen as a generalization of Zalcwasser's Theorem~\ref{TheoremZalcwasser}.

\begin{thm}\label{TheoremNewZalcwasser}
Let $s \in \mathbb R$, $p > 0$ and $N \in \mathbb N$. Then,
\begin{itemize}
    \item If $s < 1/4$, 
    \begin{equation}
        \left\lVert \sum_{n=1}^N \frac{e^{2\pi i n^2 x}}{n^{2s}}  \right\rVert_p \simeq_{s,p} \left\{  \begin{array}{ll}
            N^{1/2 - 2s}, & p < 4,  \\
            N^{1/2 - 2s}\, (\log N )^{1/4}, & p = 4, \\
            N^{1 - 2/p - 2s}, & p>4,
        \end{array}
        \right.
    \end{equation}
    
    \item If $s=1/4$, 
    \begin{equation}
        \left\lVert \sum_{n=1}^N \frac{e^{2\pi i n^2 x}}{n^{1/2}}  \right\rVert_p \simeq_{p} \left\{  \begin{array}{ll}
            (\log N)^{1/2}, & p < 4,  \\
            N^{1/2 - 2/p}, & p > 4,
        \end{array}
        \right.
    \end{equation}
    and when $p=4$,
    \begin{equation}
        (\log N)^{1/2} \lesssim  \left\lVert \sum_{n=1}^N \frac{e^{2\pi i n^2 x}}{n^{1/2}}  \right\rVert_4  \lesssim (\log N)^{3/4}. 
    \end{equation}
    
    \item If $1/4 < s < 1/2$, 
    \begin{equation}
        \left\lVert \sum_{n=1}^N \frac{e^{2\pi i n^2 x}}{n^{2s}}  \right\rVert_p \simeq_{s,p} \left\{  \begin{array}{ll}
            1, & p < \frac{2}{1-2s},  \\
            N^{1 - 2/p - 2s}, & p > \frac{2}{1-2s},
        \end{array}
        \right.
    \end{equation}
    and when $p=2/(1-2s)$,
    \begin{equation}
        (\log N)^{1/p} \lesssim_s  \left\lVert \sum_{n=1}^N \frac{e^{2\pi i n^2 x}}{n^{2s}}  \right\rVert_{p}  \lesssim_s (\log N)^{1/2}, \qquad  p=\frac{2}{1-2s}.
    \end{equation}

    \item If $s \geq 1/2$, 
    \begin{equation}
        \left\lVert \sum_{n=1}^N \frac{e^{2\pi i n^2 x}}{n^{2s}}  \right\rVert_p \simeq_{s,p} 1, \qquad \forall p > 0.
    \end{equation}
\end{itemize}
\end{thm}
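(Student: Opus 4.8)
Fix a large integer $A$ (as in Remark~\ref{RemarkLittlewoodPaley}), set $K=\lfloor \log_A N\rfloor$, write $S_N(x)=\sum_{n=1}^N n^{-2s}e_{n^2}$, and decompose $S_N=\sum_{k=0}^{K}\Delta_k$ up to one incomplete top block, where $\Delta_k=\sum_{n=A^k}^{A^{k+1}-1}n^{-2s}e_{n^2}$. Since the frequencies of $\Delta_k$ lie in the disjoint dyadic-type range $[A^{2k},A^{2(k+1)})$, the blocks constitute a genuine Littlewood--Paley decomposition, and Proposition~\ref{PropDeltak} gives $\|\Delta_k\|_p\simeq a_k$, where $a_k=A^{k(1/2-2s)}$ for $p<4$, $a_k=A^{k(1/2-2s)}k^{1/4}$ for $p=4$, and $a_k=A^{k(1-2/p-2s)}$ for $p>4$. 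The four regimes of $s$ in the statement correspond exactly to the sign of the exponent $1/2-2s$ (resp. $1-2/p-2s$), which dictates whether $(a_k)_k$ is geometrically increasing, constant, or geometrically decreasing in $k$. The plan is to reduce every case to summing this sequence, once the lossy triangle inequality has been replaced by orthogonality-sensitive estimates.

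The core of the argument is the following dichotomy for $p\geq 2$. I would use the Littlewood--Paley equivalence $\|S_N\|_p\simeq_p\big\|(\sum_k|\Delta_k|^2)^{1/2}\big\|_p$ together with Minkowski's inequality in $L^{p/2}$ to obtain the upper bound $\|S_N\|_p\lesssim_p (\sum_k\|\Delta_k\|_p^2)^{1/2}$, and with the pointwise inclusion $\ell^2\hookrightarrow\ell^p$ to obtain the lower bound $\|S_N\|_p\gtrsim_p(\sum_k\|\Delta_k\|_p^p)^{1/p}$. Combined with the exact value $\|S_N\|_2^2=\sum_{n=1}^N n^{-4s}$ from Plancherel and the trivial bound $\|S_N\|_p\geq\|S_N\|_1\geq|\widehat{S_N}(1)|=1$ for $p\geq 1$, these reduce each case to a geometric (or near-geometric) series in $a_k$: when the exponent is nonzero the sum is dominated by its top or bottom term and yields the stated power of $N$ or the constant $1$, while when the exponent vanishes (which happens precisely at the critical pairs) all $a_k$ are comparable and the number $K+1\simeq\log N$ of blocks produces the logarithms. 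This is exactly what separates the sharp $(\log N)^{1/2}$ coming from the $\ell^2$ upper bound from the $(\log N)^{1/p}$ coming from the $\ell^p$ lower bound at the thresholds $p=4,\ s=1/4$ and $p=2/(1-2s),\ 1/4<s<1/2$.

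For $0<p<2$ the Littlewood--Paley machinery is unavailable, so I would use two softer tools: the monotonicity $\|S_N\|_p\leq\|S_N\|_2$ on the probability space $\mathbb T$ gives the correct upper bound directly from the $L^2$ computation, while the matching lower bound comes from interpolating $L^2$ between $L^p$ and some $L^q$ with $2<q<4$, exactly as in \eqref{Interpolation1}--\eqref{Interpolation2} in the proof of Proposition~\ref{PropDeltak}. Since $\|S_N\|_q\simeq\|S_N\|_2$ for every $q<4$ in all the relevant regimes, this recovers $\|S_N\|_p\gtrsim\|S_N\|_2$ and closes the $p<2$ cases.

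The main obstacle is the sharpness at the two critical exponents. The triangle inequality is genuinely lossy there: for $2<p<4$ with $s=1/4$ it would give $\log N$ rather than the correct $(\log N)^{1/2}$, and even interpolating between $L^2$ and the borderline $L^4$ estimate overshoots, producing $(\log N)^{1-1/p}$; only the $\ell^2$-valued square-function estimate extracts the right power. Dually, at the threshold $p=2/(1-2s)$ with $1/4<s<1/2$ the lower bound $(\log N)^{1/p}$ exceeds both the constant $L^2$ bound and any single top block, and is produced solely by the $\ell^p$ square-function estimate. The residual gaps between $(\log N)^{1/p}$ and $(\log N)^{1/2}$ that remain in the statement are precisely the open $\Lambda(p)$-problem for the squares, so they are not expected to close here; the remaining task is only to carry out the two-sided bounds cleanly in each regime and to check that the incomplete top block never dominates, which follows by applying the estimates of Proposition~\ref{PropDeltak} to the partial range $[A^K,N]$ (whose length ratio is bounded by $A$) via Corollary~\ref{CorollaryLuis}.
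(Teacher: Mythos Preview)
Your approach is correct and essentially identical to the paper's: Littlewood--Paley decomposition, block estimates from Proposition~\ref{PropDeltak}, the $\ell^2$-Minkowski upper bound and $\ell^p$-square-function lower bound for $p\geq 2$, the exact $L^2$ computation via Plancherel, and H\"older interpolation for $0<p<2$. One small correction to your exposition: at the threshold $p=4$, $s=1/4$ the block sizes are $a_k\simeq k^{1/4}$ rather than all comparable, so the $\ell^2$ upper bound is $\big(\sum_{k\leq K} k^{1/2}\big)^{1/2}\simeq(\log N)^{3/4}$ and the $\ell^4$ lower bound is $\big(\sum_{k\leq K} k\big)^{1/4}\simeq(\log N)^{1/2}$, which are precisely the bounds in the statement; your machinery produces this once you carry the $k^{1/4}$ factor through.
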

\begin{conj}
It seems reasonable to conjecture that 
\begin{equation}
     \left\lVert \sum_{n=1}^N \frac{e^{2\pi i n^2 x}}{n^{1/2}}  \right\rVert_4  \simeq (\log N)^{3/4}. 
\end{equation}
\end{conj}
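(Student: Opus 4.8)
The plan is to reduce the conjecture to an arithmetic counting problem by expanding the $L^4$ norm exactly, and then to concentrate all effort on the lower bound, since the matching upper bound $\lVert \sum_{n\le N} n^{-1/2} e_{n^2} \rVert_4 \lesssim (\log N)^{3/4}$ is already contained in Theorem~\ref{TheoremNewZalcwasser}. Writing $F_N = \sum_{n=1}^N n^{-1/2} e_{n^2}$ and expanding $\lVert F_N \rVert_4^4 = \int_{\mathbb T} F_N^2\,\overline{F_N}^2$, the orthogonality of the exponentials collapses the integral onto the quadruples with $a^2+b^2=c^2+d^2$, giving the identity
\[
\Big\lVert \sum_{n=1}^N \frac{e_{n^2}}{n^{1/2}} \Big\rVert_4^4 = \sum_{\substack{a^2+b^2=c^2+d^2\\ 1\le a,b,c,d\le N}} \frac{1}{\sqrt{abcd}} = \sum_\nu R_N(\nu)^2, \qquad R_N(\nu) = \sum_{\substack{a^2+b^2=\nu\\ a,b\le N}} \frac{1}{\sqrt{ab}}.
\]
Thus the conjecture is equivalent to the weighted second-moment estimate $\sum_\nu R_N(\nu)^2 \simeq (\log N)^3$ for the representation function of sums of two squares.

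To fix the target I would first re-prove the (known) upper bound in this language. The elementary inequality $\frac{1}{\sqrt{abcd}} \le \frac12\big(\frac{1}{ab}+\frac{1}{cd}\big)$, combined with the symmetry of the constraint under $(a,b)\leftrightarrow(c,d)$, reduces the sum to $\sum_{a,b\le N} r_2(a^2+b^2)/(ab)$, where $r_2$ counts representations as a sum of two squares. Writing $r_2(\nu)=4\sum_{k\mid\nu}\chi_4(k)$ with $\chi_4$ the nontrivial character modulo $4$ and interchanging the order of summation, the term $k=1$ contributes $\big(\sum_{a\le N} a^{-1}\big)^2 \simeq (\log N)^2$, while summing the congruence densities of $k\mid a^2+b^2$ against $\chi_4(k)$ over $k\le 2N^2$ produces, through the truncated Euler product $\prod_{p\equiv 1(4),\,p\le N^2}(1+2/p) \simeq \log N$, exactly one further logarithmic factor, reaching $(\log N)^3$.

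The hard part is the lower bound $\sum_\nu R_N(\nu)^2 \gtrsim (\log N)^3$. Every smooth or averaged treatment stops at $(\log N)^2$: replacing $R_N(\nu)$ by its equidistribution-heuristic value $\simeq r_2(\nu)\,\nu^{-1/2}$ (valid when the lattice points on the circle $a^2+b^2=\nu$ are spread uniformly in angle, since the angular integral $\int_0^{\pi/2}(\cos\theta\sin\theta)^{-1/2}\,d\theta$ converges) gives $\sum_\nu R_N(\nu)^2 \simeq \sum_{\nu\le N^2} r_2(\nu)^2/\nu \simeq (\log N)^2$ by partial summation from $\sum_{\nu\le X} r_2(\nu)^2 \simeq X\log X$. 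The missing logarithm is therefore genuinely arithmetic, coming from the erratically large values of $r_2(\nu)$ at integers $\nu$ with many prime factors $\equiv 1\pmod 4$, together with the excess mass that $(ab)^{-1/2}$ places on representations with one small coordinate. My plan would be to isolate the \emph{balanced} contribution, restricting to quadruples with $a\simeq c$ and $b\simeq d$, where $\sqrt{abcd}\simeq ab$ so that the AM-GM step above is reversible up to constants; the lower bound would then follow if one shows that a positive proportion of the representations of each $\nu$ lie in the same dyadic angular sector, so that the balanced count recovers a constant multiple of $\sum_{a,b} r_2(a^2+b^2)/(ab) \simeq (\log N)^3$.

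The main obstacle is precisely this equidistribution-in-arcs statement, weighted by $(ab)^{-1}$ and summed over $\nu\le N^2$: one must know that the integers carrying the anomalously large values of $r_2$ distribute their lattice points across sectors of the circle rather than concentrating in a thin arc, uniformly enough to survive the weighting. This is a delicate lattice-points-on-circles problem, and I expect the difficulty to concentrate at $\nu$ close to $N^2$ — equivalently, at the large moduli $k$ near $N^2$ in the divisor expansion, where the condition $k\mid a^2+b^2$ ceases to equidistribute over dyadic blocks weighted by $1/a$. Controlling this boundary regime, or else building by hand a sufficiently rich explicit family of $\nu$ (products of many primes $\equiv 1 \pmod 4$ of comparable size) whose balanced representations can be counted directly from the Gaussian-integer factorization, is where the real work lies, and is presumably why the statement is left as a conjecture.
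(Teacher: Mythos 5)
You should note at the outset that the statement you are attacking is labelled a \emph{conjecture} in the paper: the authors prove only the two-sided bound $(\log N)^{1/2} \lesssim \lVert \sum_{n\le N} n^{-1/2} e^{2\pi i n^2 x} \rVert_4 \lesssim (\log N)^{3/4}$ (Theorem~\ref{TheoremNewZalcwasser} with $s=1/4$, $p=4$, via Littlewood--Paley, Minkowski and Zalcwasser's estimate for the upper bound, and the $L^2$ norm for the lower one), and they leave the true exponent open. So there is no proof in the paper to compare against, and your proposal, as you yourself concede, does not close the gap either. What you do contribute is correct and genuinely different in route from anything in the paper: the Plancherel expansion $\lVert \sum_{n\le N} n^{-1/2}e^{2\pi i n^2x}\rVert_4^4 = \sum_\nu R_N(\nu)^2$ is exact, so the conjecture is indeed \emph{equivalent} to the arithmetic second-moment statement $\sum_\nu R_N(\nu)^2 \simeq (\log N)^3$; your AM--GM plus divisor-expansion rederivation of the upper bound $(\log N)^3$ is sound in outline (modulo the truncation issues at large moduli $k$, which you correctly flag); and your observation that the angular-equidistribution heuristic caps the count at $(\log N)^2$ --- which is precisely the fourth power of the paper's \emph{proven} lower bound $(\log N)^{1/2}$ --- correctly locates where the difficulty sits. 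This arithmetic reformulation is dual to the paper's analytic one and is, in my view, the sharper lens on the problem.

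The genuine gap is the closing mechanism of your lower-bound program, and it is worse than unproven: as stated it conflicts with known facts exactly in the regime that carries the mass. The sum $\sum_{a,b\le N} r_2(a^2+b^2)/(ab) \simeq (\log N)^3$ spreads essentially uniformly over the $\simeq (\log N)^2$ dyadic boxes $a\sim 2^i$, $b\sim 2^j$, each contributing $\simeq \log N$; for the balanced restriction $a\simeq c$, $b\simeq d$ to recover a constant multiple of it, you would need the typical $\nu$ carrying the average of $r_2$ (those with many prime factors $\equiv 1 \bmod 4$) to place a positive proportion of their representations in a \emph{single} dyadic sector, uniformly over sectors of all eccentricities. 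But for strongly unbalanced sectors (small coordinate $\lesssim \nu^{1/4-\epsilon}$, i.e.\ arcs of length $\lesssim R^{1/2-\epsilon}$ on the circle of radius $R=\sqrt{\nu}$) the Cilleruelo--C\'ordoba theorem on lattice points in short arcs caps the number of representations in the sector at $O_\epsilon(1)$, killing roughly half of the boxes outright; and for $\nu$ with $r_2(\nu)$ large, results of Erd\H{o}s--Hall and K\'atai--K\"ornyei type give angular equidistribution for almost all such $\nu$ --- which is exactly the negation of your clustering hypothesis for the integers that dominate the average. So your two heuristics sandwich the truth between $(\log N)^2$ and $(\log N)^3$ without deciding it, and if anything the equidistribution side, which matches the paper's proven lower bound, looks at least as plausible as the conjectured $(\log N)^3$; the alternative you mention (an explicit family of $\nu$ built from many split primes of comparable size, with balanced representations counted through the Gaussian factorization) is precisely what would be needed, and is left without a construction or a count. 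In short: the reduction and the upper bound are right and useful, but the lower-bound plan as described would fail, and the conjecture remains exactly as open as the paper leaves it.
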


\begin{proof}[Proof of Theorem~\ref{TheoremNewZalcwasser}]
We prove the theorem classifying in $p$ instead of in $s$. The proof is based in the Littlewood-Paley theorem, Minkowski's integral inequality, Zalcwasser's Theorem~\ref{TheoremZalcwasser} and Proposition~\ref{PropDeltak}. Let $N \in \mathbb N$. 
Decompose the function as
\begin{equation}\label{LittlewoodPaleyDecompositionInProof}
    \sum_{n=1}^N \frac{e^{2\pi i n^2 x}}{n^{2s}} 
    = \sum_{k=0}^{k(N)-1} \Delta_k(x) + \widetilde\Delta_{k(N)}(x), 
\end{equation}
where we define $k(N)$ as the only integer that satisfies $ A^{k(N)} \leq N < A^{k(N)+1} $ and $\Delta_k(x)$ are the Littlewood-Paley blocks 
\begin{equation}
    \Delta_k(x) = \sum_{n=A^k}^{A^{k+1}-1} \frac{e^{2\pi i n^2 x}}{n^{2s}} \qquad \forall 1 \leq k < k(N), \qquad \quad 
    \widetilde\Delta_{k(N)}(x) = \sum_{n=A^{k(N)}}^{N} \frac{e^{2\pi i n^2 x}}{n^{2s}}.
\end{equation}
Here we are taking $A \in \mathbb N$ large enough according to Remark~\ref{RemarkLittlewoodPaley}.
In particular, we have $ k(N) \simeq \log N$, since by definition we have $ \frac{\log N}{\log A} - 1 < k(N) \leq \frac{\log N}{\log A}$.
Thus, by the Littlewood-Paley theorem, for $1 < p < \infty$ we have
\begin{equation}\label{LittlewoodPaleyInProof}
 \left\lVert \sum_{n=1}^N \frac{e^{2\pi i n^2 x}}{n^{2s}} \right\rVert_p \simeq_p \left\lVert \left(  \sum_{k=0}^{k(N) - 1} |\Delta_k(x)|^2 + |\widetilde\Delta_{k(N)}(x)|^2  \right)^{1/2} \right\rVert_p.
\end{equation}
If $p\geq 2$, Minkowski's integral inequality allows us to bound \eqref{LittlewoodPaleyInProof} from above in terms of $\lVert \Delta_k \rVert_p$. Thus, we get 
\begin{equation}\label{UpperBoundAfterMinkowski}
   \left\lVert \sum_{n=1}^N \frac{e^{2\pi i n^2 x}}{n^{2s}} \right\rVert_p
   \lesssim \left( \sum_{k=0}^{k(N)-1} \lVert \Delta_k \rVert_p^2 + \lVert \widetilde\Delta_{k(N)} \rVert_p^2  \right)^{1/2} 
   \lesssim_p \left( \sum_{k=0}^{k(N)}  \frac{\lVert K_{A^k} \rVert_p^2 }{A^{4ks}} \right)^{1/2}.
\end{equation}
To get the last equality we used Proposition~\ref{PropDeltak}. 
Indeed, for $k < k(N)$ we have $\lVert \Delta_k \rVert_p \simeq_p \lVert K_{A_k} \rVert_p / A^{2ks}$.
For $k = k(N)$, we can bound
\begin{equation}
    \lVert \widetilde\Delta_{k(N)} \rVert_p 
    = \lVert K_N - K_{A^{k(N)}-1} \rVert_p 
    \leq \lVert K_N \rVert_p + \rVert K_{A^{k(N)}-1} \rVert_p 
    \simeq \rVert K_{A^{k(N)}} \rVert_p
\end{equation}
because $N \simeq A^{k(N)}$. 
Now, we estimate the norm $\lVert K_{A^k} \rVert_p$ using Theorem~\ref{TheoremZalcwasser}, for which we need to split cases.
\begin{itemize}
    \item $2\leq p<4$: In this case,  $\lVert K_{A^k} \rVert_p^2 \simeq A^{k} $, so recalling that $k(N) \simeq \log N$, we bound \eqref{UpperBoundAfterMinkowski} from above to deduce
    \begin{equation}\label{UpperBoundByLP}
        \left\lVert \sum_{n=1}^N \frac{e^{2\pi i n^2 x}}{n^{2s}} \right\rVert_p \lesssim_p \left(  \sum_{k=0}^{k(N)}   \frac{1}{A^{k(4s-1)}}  \right)^{1/2} \lesssim \left\{ \begin{array}{ll}
            N^{1/2 - 2s}, &  s < 1/4, \\
            (\log N)^{1/2}, &  s=1/4, \\
            1, &  s > 1/4.
        \end{array}
        \right.
    \end{equation}
    This is optimal because H\"older's inequality and Plancherel's theorem suffice to check
    \begin{equation}\label{LowerBoundByL2}
        \left\lVert \sum_{n=1}^N \frac{e^{2\pi i n^2 x}}{n^{2s}} \right\rVert_p \gtrsim \left\lVert \sum_{n=1}^N \frac{e^{2\pi i n^2 x}}{n^{2s}} \right\rVert_2 = \left( \sum_{n=1}^N \frac{1}{n^{4s}} \right)^{1/2} \simeq \left\{ \begin{array}{ll}
            N^{1/2 - 2s}, &  s < 1/4, \\
            (\log N)^{1/2}, &  s=1/4, \\
            1, &  s > 1/4.
        \end{array}
        \right.
    \end{equation}
    
    \item  $p=4$: Theorem~\ref{TheoremZalcwasser} gives $\lVert K_{A^k} \rVert_4^2 \simeq_p A^k \left( \log A^k \right)^{1/2} \simeq k^{1/2}\, A^k $ in this case, so bounding \eqref{UpperBoundAfterMinkowski} we get
    \begin{equation}\label{UpperBoundNewZalcwasserp4}
        \left\lVert \sum_{n=1}^N \frac{e^{2\pi i n^2 x}}{n^{2s}} \right\rVert_4 \lesssim \left(  \sum_{k=0}^{k(N)}   \frac{k^{1/2}}{A^{k(4s-1)}}   \right)^{1/2} \lesssim \left\{ \begin{array}{ll}
            N^{1/2 - 2s} (\log N)^{1/4}, &  s < 1/4, \\
            (\log N)^{3/4}, &  s=1/4, \\
            1, &  s > 1/4.
        \end{array}
        \right.
    \end{equation}
    For the lower bound, the inclusion $\ell^2 \subset \ell^\infty$ together with the Littlewood-Paley decomposition in \eqref{LittlewoodPaleyInProof} and Proposition~\ref{PropDeltak} allow us to bound 
    \begin{equation}\label{LowerBoundInNewZalcwasser}
        \left\lVert \sum_{n=1}^N \frac{e^{2\pi i n^2 x}}{n^{2s}} \right\rVert_4 \gtrsim \left\lVert \Delta_{k(N)} \right\rVert_4 \simeq N^{1/2-2s} (\log N )^{1/4},
    \end{equation}
    which according to \eqref{UpperBoundNewZalcwasserp4} is optimal when $s<1/4$. For $s=1/4$, a better but still non-optimal lower bound $(\log N)^{1/2}$ is obtained by the $L^2$ estimate in \eqref{LowerBoundByL2}. 
    For $s>1/4$, the $L^2$ estimate in \eqref{LowerBoundByL2} gives the optimal lower bound 1.

    \item $p > 4$: Now $\lVert K_{A^k} \rVert_p^2 \simeq_p A^{k(2-4/p)} $ from Theorem~\ref{TheoremZalcwasser}, so bound \eqref{UpperBoundAfterMinkowski} with that to get
    \begin{equation}\label{UpperBoundNewZalcwasserpGreaterThan4}
        \left\lVert \sum_{n=1}^N \frac{e^{2\pi i n^2 x}}{n^{2s}} \right\rVert_p \lesssim_p \left(  \sum_{k=0}^{k(N)}   A^{k(2-4/p - 4s)}   \right)^{1/2} \lesssim \left\{ \begin{array}{ll}
            N^{1 - 2/p - 2s}, &  s < 1/2 - 1/p, \\
            (\log N)^{1/2}, &  s=1/2 - 1/p , \\
            1, &  s > 1/2 - 1/p.
        \end{array}
        \right.
    \end{equation}
    To get the lower bounds, as in \eqref{LowerBoundInNewZalcwasser}, from \eqref{LittlewoodPaleyInProof} and Proposition~\ref{PropDeltak} we get
    \begin{equation}\label{LowerBoundInNewZalcwasser2}
        \left\lVert \sum_{n=1}^N \frac{e^{2\pi i n^2 x}}{n^{2s}} \right\rVert_p \gtrsim_p \left\lVert \Delta_{k(N)} \right\rVert_p \simeq N^{1 - 2/p - 2s},
    \end{equation}
    which in view of \eqref{UpperBoundNewZalcwasserpGreaterThan4} is optimal for $s < 1/2 - 1/p$. 
    For $s > 1/2 - 1/p$, the $L^2$ estimate in \eqref{LowerBoundByL2} gives the optimal lower bound 1. 
    For the critical $s=1/2-1/p$, both \eqref{LowerBoundInNewZalcwasser2} 
    and the $L^2$ estimate give the lower bound 1. A better estimate can be obtained if in the Littlewood-Paley decomposition \eqref{LittlewoodPaleyInProof} we use the inclusion $\ell^2 \subset \ell^p$. Indeed, 
    \begin{equation}
    \begin{split}
     \left\lVert \sum_{n=1}^N \frac{e^{2\pi i n^2 x}}{n^{2s}}  \right\rVert_p  
     & \simeq_p \left\lVert \left( \sum_{k=0}^{k(N)-1} |\Delta_k(x)|^2 + |\widetilde\Delta_{k(N)}(x)|^2  \right)^{1/2} \right\rVert_p \\
     & \geq \left\lVert \left( \sum_{k=0}^{k(N)-1} |\Delta_k(x)|^p \right)^{1/p} \right\rVert_p \\
     & = \left( \sum_{k=0}^{k(N)-1} \lVert \Delta_k \rVert_p^p  \right)^{1/p}. 
     \end{split}
    \end{equation}
    By Proposition~\ref{PropDeltak} with $p>4$ and $s=1/2-1/p$, we have $\lVert \Delta_k \rVert_p \simeq 1$, so the above is bounded below by $(\log N)^{1/p}$. In view of \eqref{UpperBoundNewZalcwasserpGreaterThan4}, though, this still leaves the behavior of the critical case open.
    
\end{itemize}

We are missing the case $0<p<2$. Like in \eqref{ExtrapolationByHolder}, H\"older's inequality directly gives
\begin{equation}
     \left\lVert \sum_{n=1}^N \frac{e^{2\pi i n^2 x}}{n^{2s}}  \right\rVert_p \, \, \,  \lesssim  \, \, \,  \left\lVert \sum_{n=1}^N \frac{e^{2\pi i n^2 x}}{n^{2s}}  \right\rVert_2.
\end{equation}
For the lower bound, we do like in \eqref{Interpolation1} and \eqref{Interpolation2} with $\theta = 3-p > 1$ and get 
\begin{equation}
    \left\lVert \sum_{n=1}^N \frac{e^{2\pi i n^2 x}}{n^{2s}}  \right\rVert_p \geq \frac{\left\lVert \sum_{n=1}^N \frac{e^{2\pi i n^2 x}}{n^{2s}}  \right\rVert_2^{2\theta/p}}{\left\lVert \sum_{n=1}^N \frac{e^{2\pi i n^2 x}}{n^{2s}}  \right\rVert_3^{3\theta/(p\theta')}} \simeq \left\lVert \sum_{n=1}^N \frac{e^{2\pi i n^2 x}}{n^{2s}}  \right\rVert_2
\end{equation}
because $2\theta/p-3\theta/(p\theta')=1$, and we saw in \eqref{UpperBoundByLP} and \eqref{LowerBoundByL2} that the $L^p$ norms for all $2 \leq p < 4$ are all equivalent. Thus, \eqref{UpperBoundByLP} and \eqref{LowerBoundByL2} are also valid for $0<p<2$.

Summing up, we have proved  
 \begin{equation}
     \left\lVert \sum_{n=1}^N \frac{ e^{2\pi i n^2 x}}{n^{2s}} \right\rVert_p \simeq_p \left\{ \begin{array}{ll}
         N^{1/2 - 2s}, & s < 1/4, \\
         (\log N)^{1/2}, & s=1/4, \\
         1, & s > 1/4,
     \end{array}
     \right.
     \qquad \text{ for } 0 < p < 4,
 \end{equation}
 
 \begin{equation}
     \left\lVert \sum_{n=1}^N \frac{ e^{2\pi i n^2 x}}{n^{2s}} \right\rVert_4 \simeq \left\{ \begin{array}{ll}
         N^{1/2 - 2s} \, (\log N )^{1/4}, & s < 1/4, \\
         1, & s > 1/4,
     \end{array}
     \right.
 \end{equation}
 
 \begin{equation}
     (\log N )^{1/2} \lesssim \left\lVert \sum_{n=1}^N \frac{ e^{2\pi i n^2 x}}{n^{2s}} \right\rVert_4 \lesssim  (\log N)^{3/4}, \qquad s = 1/4,  \\
 \end{equation}
 and 
  \begin{equation}
     \left\lVert \sum_{n=1}^N \frac{ e^{2\pi i n^2 x}}{n^{2s}} \right\rVert_p \simeq \left\{ \begin{array}{ll}
         N^{1 - 2/p - 2s}, & s < 1/2 - 1/p, \\
         1, & s > 1/2-1/p,
     \end{array}
     \right.
     \qquad \text{ for }  p > 4,
 \end{equation}
 
 \begin{equation}
     (\log N )^{1/p} \lesssim \left\lVert \sum_{n=1}^N \frac{ e^{2\pi i n^2 x}}{n^{2s}} \right\rVert_p \lesssim  (\log N)^{1/2}, \qquad s = 1/2 - 1/p, \quad  \text{ for } p > 4,
 \end{equation}

which is equivalent to the statement of the theorem.
\end{proof}

From Theorem~\ref{TheoremNewZalcwasser}, we deduce the following corollary for the integrability of $R_s$.
\begin{cor}
Let $1 \leq p < \infty$ and $R_s$ be defined by \eqref{RiemannGeneralization}. Then,
\begin{itemize}
    \item if $s \leq 1/4$, then $R_s\notin L^p(\mathbb T)$;
    \item if $1/4<s<1/2$, then $R_s\in L^p(\mathbb T)$ if and only if $p<2/(1-2s),$ even if $R_s$ is not absolutely convergent;
    \item  if $s\geq 1/2$, then $R_s\in L^p(\mathbb T)$.
\end{itemize}
\end{cor}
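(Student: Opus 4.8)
The plan is to regard $R_s$ as the periodic distribution whose only nonzero Fourier coefficients are $\widehat{R_s}(n^2) = n^{-2s}$ for $n \ge 1$, and to transfer $L^p$-membership to the partial sums $S_N = \sum_{n=1}^N n^{-2s}\, e_{n^2}$, whose norms are exactly what Theorem~\ref{TheoremNewZalcwasser} computes. The key observation is that $S_N = (R_s)_{\le N^2}$ is precisely the partial Fourier sum of $R_s$ truncated at frequency $N^2$: the spectrum of $R_s$ is $\{n^2\}$, and no perfect square lies strictly between $N^2$ and $(N+1)^2$. Thus the whole statement reduces to deciding, in each regime, whether the sharp partial sums $S_N = (R_s)_{\le N^2}$ control, or are controlled by, the putative function $R_s$.

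For the positive statements (that $R_s \in L^p$ when $1/4 < s < 1/2$ with $p < 2/(1-2s)$, and when $s \ge 1/2$ for all $p$) I would argue by Fatou's lemma. Whenever $s > 1/4$ the coefficients $(n^{-2s})$ are square-summable, so $S_N \to R_s$ in $L^2$ and hence a.e. along a subsequence $(S_{N_k})$. In exactly the stated ranges Theorem~\ref{TheoremNewZalcwasser} gives $\sup_N \lVert S_N \rVert_p < \infty$, so Fatou yields $\lVert R_s \rVert_p \le \liminf_k \lVert S_{N_k} \rVert_p < \infty$ and therefore $R_s \in L^p$. For $s > 1/2$ absolute convergence already gives continuity, but the Fatou argument covers the endpoint $s = 1/2$ uniformly.

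For the negative statements with $1 < p < \infty$ (that is, $s \le 1/4$ for every such $p$, and $1/4 < s < 1/2$ with $p \ge 2/(1-2s)$) I would invoke the classical boundedness of the partial Fourier sum operator on $L^p$, $1<p<\infty$ (M.~Riesz): if $R_s \in L^p$ then $\lVert S_N \rVert_p = \lVert (R_s)_{\le N^2} \rVert_p \le C_p \lVert R_s \rVert_p$ uniformly in $N$, which contradicts $\lVert S_N \rVert_p \to \infty$ from Theorem~\ref{TheoremNewZalcwasser}. (For $p \ge 2$ one may instead note directly that $\sum_n n^{-4s} = \infty$ when $s \le 1/4$, so $R_s \notin L^2 \supseteq L^p$ by Plancherel.) Note that at the threshold $p = 2/(1-2s)$ the lower bound $\lVert S_N\rVert_p \gtrsim (\log N)^{1/p}$ still diverges, so this case is excluded as well, matching the strict inequality in the statement.

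The genuine obstacle is the endpoint $p = 1$ with $s \le 1/4$, where both M.~Riesz and the Littlewood--Paley theorem fail; this is where I expect the real work. Here I would replace the sharp truncation by a smooth summability kernel $\Phi_N$ with $\widehat{\Phi_N} \equiv 1$ on $[-N^2, N^2]$ and $\operatorname{supp}\widehat{\Phi_N} \subset [-(N+1)^2,(N+1)^2]$, so that $\Phi_N \ast R_s = S_N$ exactly because the transition band $(N^2,(N+1)^2)$ contains no square. When $s < 1/4$ the transition has width $\simeq N$ at frequency $\simeq N^2$, giving the smoothed-Dirichlet bound $\lVert \Phi_N \rVert_1 \lesssim \log N$; then $R_s \in L^1$ would force $N^{1/2-2s} \simeq \lVert S_N \rVert_1 \le \lVert \Phi_N\rVert_1 \lVert R_s\rVert_1 \lesssim \log N\,\lVert R_s\rVert_1$, impossible for large $N$. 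For the critical $s = 1/4$ the divergence $\lVert S_N\rVert_1 \simeq (\log N)^{1/2}$ is too slow to beat $\log N$, so instead I would use a de la Vall\'ee Poussin--type kernel transitioning over a full scale, $\widehat{\Phi_N} \equiv 1$ on $[-(N/2)^2,(N/2)^2]$ and supported in $[-N^2,N^2]$, which has $\lVert \Phi_N\rVert_1 = O(1)$. Writing $\Phi_N \ast R_s = S_{N/2} + E_N$ with $E_N = \Phi_N \ast (S_N - S_{N/2})$, Corollary~\ref{CorollaryLuis} and Zalcwasser's Theorem~\ref{TheoremZalcwasser} give $\lVert S_N - S_{N/2}\rVert_1 \lesssim N^{-1/2}\lVert K_N - K_{N/2}\rVert_1 = O(1)$, hence $\lVert E_N\rVert_1 \le \lVert\Phi_N\rVert_1\lVert S_N - S_{N/2}\rVert_1 = O(1)$, while $\lVert S_{N/2}\rVert_1 \simeq (\log N)^{1/2}\to\infty$; so $\lVert \Phi_N \ast R_s\rVert_1 \ge \lVert S_{N/2}\rVert_1 - \lVert E_N\rVert_1 \to \infty$, contradicting $\lVert \Phi_N \ast R_s\rVert_1 \le \lVert\Phi_N\rVert_1\lVert R_s\rVert_1 < \infty$. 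The one point I would still need to make fully rigorous is the kernel bound $\lVert \Phi_N\rVert_1 \lesssim \log(\text{frequency}/\text{transition width})$, which I would obtain by writing $\Phi_N$ as a modulated, smoothly truncated Dirichlet kernel and estimating it directly.
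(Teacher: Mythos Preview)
The paper does not actually supply a proof of this corollary: it merely records it as an immediate consequence of Theorem~\ref{TheoremNewZalcwasser} and moves on. Your proposal is therefore not competing against a given argument but filling in the omitted details, and it does so correctly.

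Your scheme---Fatou for the positive assertions and M.~Riesz boundedness of partial sums on $L^p$, $1<p<\infty$, for the negative ones---is exactly the natural way to read the corollary off the growth estimates of Theorem~\ref{TheoremNewZalcwasser}, and it goes through without difficulty. The only part that requires genuine work beyond the theorem is the endpoint $p=1$ for $s\le 1/4$, which you correctly isolate. Your two-kernel argument is sound: for the narrow-transition kernel $\Phi_N$ (plateau on $[-N^2,N^2]$, support in $[-(N+1)^2,(N+1)^2]$) the bound $\lVert\Phi_N\rVert_{L^1(\mathbb T)}\lesssim\log N$ does hold, and can be obtained exactly as in the proof of Lemma~\ref{lemmaluis} by passing via Poisson summation to $\lVert\check g\rVert_{L^1(\mathbb R)}$ with $g=\widehat{\Phi_N}$ and using one and two integrations by parts to get $|\check g(x)|\lesssim\min(N^2,\,|x|^{-1},\,(N x^2)^{-1})$; integrating this in $x$ produces the $\log N$. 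For the critical case $s=1/4$, your de~la~Vall\'ee~Poussin variant (plateau on $[-(N/2)^2,(N/2)^2]$, support in $[-N^2,N^2]$) has bounded $L^1$ norm, and your estimate $\lVert S_N-S_{N/2}\rVert_1 = O(1)$ via Corollary~\ref{CorollaryLuis} and Theorem~\ref{TheoremZalcwasser} is correct, so the contradiction with $\lVert S_{N/2}\rVert_1\simeq(\log N)^{1/2}$ goes through. The separation into these two sub-cases is in fact necessary: the wide-transition kernel alone does not suffice for $s<1/4$, since there $\lVert S_N-S_{N/2}\rVert_1\simeq N^{1/2-2s}$ is of the same order as $\lVert S_{N/2}\rVert_1$.
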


\section{High-pass filters}\label{SECTION_HighPassFilters}

We compute the $L^p$ norms of the high-pass filters of $R_s$, complementing Theorem~\ref{TheoremNewZalcwasser}. Theorem \ref{Theorem_FlatnessHP} follows immediately.

\begin{thm}\label{TheoremHighPassFilters}
Let $0<p<\infty$, $s > 1/4$ if $p \leq 4$, and  $s > 1/2-1/p$ if $p > 4$, and $R_s$ defined by \eqref{RiemannGeneralization}. Then, for large enough $N \in \mathbb N$, we have
\begin{equation}\label{eqthm2}
 \lVert (R_s)_{\geq N} \rVert_{L^p} 
\simeq_{s,p} \begin{cases}
 N^{1/4-s} , & \text{ if } p < 4, \\
N^{1/4-s} \,  (\log N)^{1/4}  , & \text{ if } p = 4, \\
N^{1/2 - 1/p - s}, & \text{ if } p > 4.
\end{cases} 
\end{equation}
The estimate holds for $p=\infty$ interpreting $1/\infty = 0$.
\end{thm}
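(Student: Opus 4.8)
The plan is to mirror the proof of Theorem~\ref{TheoremNewZalcwasser}, the essential difference being that $(R_s)_{\geq N}$ is a \emph{tail} of the Fourier series rather than a partial sum. Since the only nonzero frequencies of $R_s$ are the squares $n^2$, the condition $|m| \geq N$ on frequencies becomes $n \geq \sqrt N$, so that
\[
(R_s)_{\geq N}(x) = \sum_{n \geq \sqrt N} \frac{e^{2\pi i n^2 x}}{n^{2s}}.
\]
I would fix $A \in \mathbb N$ large enough for Remark~\ref{RemarkLittlewoodPaley} and let $k_0 = k_0(N)$ be the integer with $A^{k_0} \leq \sqrt N < A^{k_0+1}$, so that $k_0 \simeq \log N$. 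Then I split $(R_s)_{\geq N}$ into a partial leading block $\widetilde\Delta_{k_0}(x) = \sum_{\sqrt N \leq n < A^{k_0+1}} e^{2\pi i n^2 x}/n^{2s}$ together with the full Littlewood--Paley blocks $\Delta_k$, $k > k_0$, from Proposition~\ref{PropDeltak}.

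For the upper bound when $p \geq 2$, I would apply the Littlewood--Paley theorem and Minkowski's integral inequality exactly as in \eqref{UpperBoundAfterMinkowski} to reduce matters to $\big(\sum_{k \geq k_0}\lVert\Delta_k\rVert_p^2\big)^{1/2}$, where the partial block $\widetilde\Delta_{k_0}$ is controlled by $\lVert K_{A^{k_0}}\rVert_p/A^{2k_0 s}$ via Corollary~\ref{CorollaryLuis} and the triangle inequality, just as the term $\widetilde\Delta_{k(N)}$ was handled there. Inserting Proposition~\ref{PropDeltak} turns each case into a geometric series in $k$, and the crucial observation is that the hypotheses $s > 1/4$ (for $p \leq 4$) and $s > 1/2 - 1/p$ (for $p > 4$) are \emph{exactly} what make the exponents $1-4s$, resp.\ $2 - 4/p - 4s$, negative; the tail series then converge and are dominated by their lowest-index term at $k \simeq k_0 \simeq \log N$. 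Evaluating at $A^{k_0} \simeq \sqrt N$ yields $N^{1/4-s}$ for $p<4$, the factor $(\log N)^{1/4}$ at $p=4$ (the $k^{1/4}$ in Proposition~\ref{PropDeltak} turns the sum into $\sum_k A^{k(1-4s)}k^{1/2}$), and $N^{1/2 - 1/p - s}$ for $p > 4$.

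For the matching lower bounds I would argue as in Theorem~\ref{TheoremNewZalcwasser}. When $p < 4$ I use $\lVert (R_s)_{\geq N}\rVert_p \gtrsim \lVert (R_s)_{\geq N}\rVert_2$ together with the exact computation $\lVert(R_s)_{\geq N}\rVert_2^2 = \sum_{n \geq \sqrt N} n^{-4s} \simeq N^{1/2 - 2s}$, convergent since $4s>1$; the range $0<p<2$ follows (both bounds) from the Hölder/interpolation extrapolation of \eqref{ExtrapolationByHolder}--\eqref{Interpolation2}, all $L^p$ norms for $0<p<4$ being comparable. When $p \geq 4$, the square function of \eqref{LittlewoodPaleyInProof} dominates any single block through $\ell^2 \supseteq \ell^\infty$, so $\lVert (R_s)_{\geq N}\rVert_p \gtrsim \lVert \Delta_{k_0+1}\rVert_p$, which by Proposition~\ref{PropDeltak} is precisely $N^{1/4-s}(\log N)^{1/4}$ at $p=4$ and $N^{1/2-1/p-s}$ for $p > 4$. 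The endpoint $p=\infty$ is immediate: the triangle inequality gives $\lVert(R_s)_{\geq N}\rVert_\infty \lesssim \sum_{n \geq \sqrt N} n^{-2s} \simeq N^{1/2-s}$ (here $s>1/2$), and evaluating at $x=0$ gives the same lower bound.

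I expect the only real subtlety — rather than a genuine obstacle, since the heavy lifting is in Proposition~\ref{PropDeltak} and Theorem~\ref{TheoremNewZalcwasser} — to be the bookkeeping around the partial leading block $\widetilde\Delta_{k_0}$: checking that restricting the lowest block to $n \geq \sqrt N$ does not change the order of its $L^p$ norm, and confirming that the geometric series are summed in the correct direction so that the dominant contribution now comes from the \emph{lowest} frequencies $n \simeq \sqrt N$. This reversal is the structural difference from the low-pass computation of Theorem~\ref{TheoremNewZalcwasser}, where the dominant block was instead the highest-frequency one.
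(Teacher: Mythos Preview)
Your proposal is correct and follows essentially the same route as the paper: Littlewood--Paley decomposition of the tail, control of the partial leading block via Corollary~\ref{CorollaryLuis}, summation of a geometric series whose convergence is forced by the hypotheses on $s$, and a single-block lower bound via the square function (with the $0<p<2$ range handled by the H\"older/interpolation trick of \eqref{ExtrapolationByHolder}--\eqref{Interpolation2}). The only tactical difference is that the paper obtains the upper bound for $p\geq 1$ directly from the plain triangle inequality (an $\ell^1$ sum of $\lVert\Delta_k\rVert_p$, see \eqref{SplitInThreeTerms}) rather than from Littlewood--Paley plus Minkowski (your $\ell^2$ sum), which is slightly simpler and avoids having to treat $1\leq p<2$ separately for the upper bound; and for the lower bound the paper uses the single-block estimate \eqref{LowerBoundWithLP} uniformly for all $p>1$ rather than switching to an $L^2$ comparison for $2\leq p<4$.
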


\begin{proof}
Like in \eqref{LittlewoodPaleyDecompositionInProof}, decompose $R_s$ as
\begin{equation}
    R_s(x) = \sum_{n=1}^\infty \frac{e^{2\pi i n^2 x}}{n^{2s}} = \sum_{k=1}^{\infty} \left( \sum_{n=A^k}^{A^{k+1}-1} \frac{e^{2\pi i n^2 x}}{n^{2s}} \right)  = \sum_{k=1}^{\infty} \Delta_k(x).
\end{equation}
That means that for $N \in \mathbb N$ and for $k(N) \in \mathbb N$ such that $A^{k(N)} \leq N < A^{k(N)+1}$, following the definition of high-pass filters in \eqref{HighPassFilterDefinition} we can write
\begin{equation}\label{LittlewoodPaleyForHighPassFilter}
    (R_s)_{\geq N}(x) = \sum_{n=\sqrt{N}}^\infty \frac{e^{2\pi i n^2 x}}{n^{2s}} = \widetilde{\Delta}_{k(\sqrt{N})}(x) + \sum_{k=k(\sqrt{N})+1}^\infty \Delta_k(x),
\end{equation}
where $\widetilde{\Delta}_{k(\sqrt{N})}(x) = \sum_{n=\sqrt{N}}^{A^{k(\sqrt{N})+1}-1} n^{-2s}e^{2\pi i n^2 x}$. In view of Proposition~\ref{PropDeltak}, given that $A^{k(\sqrt{N})} \simeq \sqrt{N}$ and $k(\sqrt{N}) \simeq \log N$, it will be enough to prove
\begin{equation}\label{HighPassFilterIsLP}
 \lVert (R_s)_{\geq N} \rVert_p  \simeq_p   \lVert \Delta_{k(\sqrt{N})+1} \rVert_p. 
\end{equation}
First, for any $p \geq 1$ the triangle inequality in \eqref{LittlewoodPaleyForHighPassFilter} implies
\begin{equation}\label{SplitInThreeTerms}
 \lVert (R_s)_{\geq N} \rVert_{L^p}  \leq \lVert \widetilde{\Delta}_{k(\sqrt{N})}  \rVert_p + \lVert \Delta_{k(\sqrt{N}) + 1}  \rVert_p + \sum_{k = k(\sqrt{N})+2}^\infty \lVert \Delta_k  \rVert_p.
\end{equation} 
We are going to see that all three terms in \eqref{SplitInThreeTerms} are bounded from above by $\lVert \Delta_{k(\sqrt{N})+1} \rVert_p$.
Since $A^{k(\sqrt{N})+1}/\sqrt{N} \leq A^{k(\sqrt{N})+1}/A^{k(\sqrt{N})} = A$, by Lemma~\ref{lemmaluis}, the definition of $K_N$ in \eqref{ZalcwasserFunction} and Theorem~\ref{TheoremZalcwasser} we can write
\begin{equation}\label{FirstBound}
\begin{split}
    \left\lVert \widetilde{\Delta}_{k(\sqrt{N})}  \right\rVert_p & \simeq_{s,p} \frac{1}{A^{2k(\sqrt{N})s}} \, \left\lVert \sum_{n=\sqrt{N}}^{A^{k(N)+1}-1} e^{2\pi i n^2 x} \right\rVert_p \leq \frac{1}{A^{2k(\sqrt{N})s}} \, \left( \left\lVert K_{A^{k(\sqrt{N})+1}-1} \right\rVert_p + \left\lVert K_{\sqrt{N}} \right\rVert_p \right) \\
    & \lesssim_{s,p} \frac{1}{A^{2k(\sqrt{N})s}} \lVert K_{A^{k(\sqrt{N})}}  \rVert_p  .
\end{split}
\end{equation}
According to Proposition~\ref{PropDeltak}, this is precisely the behavior of $\lVert \Delta_{k(\sqrt{N})+1} \rVert_p$. Regarding the third term in \eqref{SplitInThreeTerms}, we separate cases of $p$:
\begin{itemize}
    \item $1\leq p<4$: by Proposition~\ref{PropDeltak} we know that $\lVert \Delta_k \rVert_p \simeq_p A^{k(1/2-2s)}$, so as long as $s > 1/4$ the series is convergent and we have 
    \begin{equation}
        \sum_{k=k(\sqrt{N}) + 2}^\infty \lVert \Delta_k \rVert_p \simeq_{p}  \sum_{k=k(\sqrt{N}) + 2}^\infty A^{k(1/2-2s)} \simeq_s A^{k(\sqrt{N})(1/2-2s)},
    \end{equation}
    which according to Proposition~\ref{PropDeltak} is like \eqref{FirstBound}.
    
    \item $p>4$: from Proposition~\ref{PropDeltak} we get $\lVert \Delta_k \rVert_p \simeq_p A^{k(1 - 2/p - 2s)}$, so as long as $s > 1/2 - 1/p$ the series is convergent and we have 
     \begin{equation}
        \sum_{k=k(\sqrt{N}) + 2}^\infty \lVert \Delta_k \rVert_p \simeq_{p}  \sum_{k=k(\sqrt{N}) + 2}^\infty A^{k(1 - 2/p - 2s)} \simeq_{p,s} A^{k(\sqrt{N})(1 - 2/p - 2s)},
    \end{equation}
    Again, by Proposition~\ref{PropDeltak} this is the same behavior as in \eqref{FirstBound}.
    
    \item $p=4$: in this case, Proposition~\ref{PropDeltak} gives  $\lVert \Delta_k \rVert_p \simeq_p k^{1/4}\, A^{k(1/2 - 2s)}$, so for $s>1/4$, 
    \begin{equation}
        \sum_{k=k(\sqrt{N}) + 2}^\infty \lVert \Delta_k \rVert_4 \simeq  \sum_{k=k(\sqrt{N}) + 2}^\infty k^{1/4}\, A^{k(1/2 - 2s)} \leq \left( \sum_{k=k(\sqrt{N}) + 2}^\infty k\, r^k \right)^{1/4} \, \left( \sum_{k=k(\sqrt{N}) + 2}^\infty  r^k \right)^{3/4}
    \end{equation}
    where we call $r=A^{1/2 - 2s} < 1$. The second sum is $\simeq r^{k(\sqrt{N})}$. For the first sum, we differentiate the power series and write
    \begin{equation}
    \begin{split}
        \sum_{k=k(\sqrt{N}) + 2}^\infty k\, r^k & \leq \sum_{k=k(\sqrt{N}) + 2}^\infty (k+1)\, r^k = \frac{d}{dr} \sum_{k=k(\sqrt{N}) + 2}^\infty r^k = \frac{d}{dr} \frac{r^{k(\sqrt{N})+2}}{1-r} \\
        & \lesssim_s k(\sqrt{N}) r^{k(\sqrt{N})},
    \end{split}
    \end{equation}
    where the last inequality is possible if $N$ is large enough. Thus, 
    \begin{equation}
        \sum_{k=k(\sqrt{N}) + 2}^\infty \lVert \Delta_k \rVert_4 \lesssim k(\sqrt{N})^{1/4} A^{k(\sqrt{N})(1/2 - 2s)},
    \end{equation}
    which by Theorem~\ref{TheoremZalcwasser} is the same as \eqref{FirstBound}.
\end{itemize}

Regarding the lower bound, if $p>1$ the Littlewood-Paley theorem directly gives 
\begin{equation}\label{LowerBoundWithLP}
    \lVert (R_s)_{\geq N} \rVert_{L^p} \simeq_p \left\lVert \left( \left| \widetilde{\Delta}_{k(\sqrt{N})}  \right|^2 +  \sum_{k = k(\sqrt{N}) + 1}^\infty{\left| \Delta_k \right|^2} \right)^{1/2}  \right\rVert_{L^p} \geq \left\lVert  \Delta_{k(\sqrt{N})+1}   \right\rVert_{L^p},
\end{equation}
thus proving \eqref{HighPassFilterIsLP} and the theorem for every $p>1$. To get the result for $p=1$, and also for $0<p<1$, one can use H\"older's inequality and proceed like in \eqref{ExtrapolationByHolder}, \eqref{Interpolation1} and \eqref{Interpolation2}. \end{proof}

\section{Structure Functions and \texorpdfstring{$R_s$}{Rs}}\label{SECTION_StructureFunctions}

To prove Theorem \ref{Theorem_FlatnessSF} it suffices to compute the asymptotic behavior of the structure functions of $R_s$. For simplicity, let us denote $S_{R_s,p}(\ell)$ simply by $S_{s,p}(\ell)$. To compare with the results for high-pass filters in Theorem~\ref{TheoremHighPassFilters}, we write the theorem for $S_{s,p}^{1/p}$ which scales as $\lVert (R_s)_{\geq N} \rVert_p$.

\begin{thm}\label{TheoremStructureFunctions}
Let $0 < p < \infty$. Let $s > 1/4$ if $p \leq 4$ and $s> 1/2 - 1/p$ if $p>4$. For $0 < \ell \ll 1 $ small enough, 
\begin{itemize}
    \item if $s < 5/4$, 
        \begin{equation}
            S_{s,p}(\ell)^{1/p} \simeq \left\{ \begin{array}{ll}
                \ell^{s - 1/4 }, & p < 4,  \\
                \ell^{s - 1/4} \, \log(\ell^{-1})^{1/4}, & p = 4, \\
                \ell^{s + 1/p - 1/2}, & p > 4.
            \end{array}
            \right.
        \end{equation}
        
    \item if $s=5/4$, 
        \begin{equation}
            S_{5/4,p}(\ell)^{1/p} \simeq \left\{ \begin{array}{ll}
                \ell \, \log(\ell^{-1})^{1/2}, & p < 4,  \\
                \ell^{3/4 + 1/p}, & p > 4,
            \end{array}
            \right.
        \end{equation}
        and when $p=4$,
        \begin{equation}
            \ell\, \log(\ell^{-1})^{1/2} \lesssim S_{5/4,4}(\ell)^{1/4} \lesssim \ell\,\log(\ell^{-1})^{3/4}.
        \end{equation}
        
    \item if $5/4 < s < 3/2$:
        \begin{equation}
            S_{s,p}(\ell)^{1/p} \simeq \left\{ \begin{array}{ll}
                \ell , & p < 2/(3-2s),  \\
                \ell^{s + 1/p - 1/2}, & p > 2/(3-2s),
            \end{array}
            \right.
        \end{equation}
        and when $p=2/(3-2s)$,
        \begin{equation}
            \ell \lesssim S_{s,p}(\ell)^{1/p} \lesssim \ell\, \log(\ell^{-1})^{1/2}, \qquad p = \frac{2}{3-2s}.
        \end{equation}
        Moreover, if $p=2/(3-2s)$ is an even integer, then we have 
        \begin{equation}
            \ell\log(\ell)^{1/p} \lesssim S_{s,p}(\ell)^{1/p} \lesssim \ell\, \log(\ell^{-1})^{1/2}.
        \end{equation}
    
    \item if $s \geq 3/2$:
        \begin{equation}
            S_{s,p}(\ell)^{1/p} \simeq \ell, \qquad \forall p > 0.
        \end{equation}

\end{itemize}
\end{thm}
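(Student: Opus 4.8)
The plan is to reduce the structure function to the $L^p$ norm of a single exponential sum and then split at the critical frequency $n\simeq\ell^{-1/2}$. Factoring $e^{2\pi i n^2\ell}-1 = 2i\,\sin(\pi n^2\ell)\,e^{\pi i n^2\ell}$ and using translation invariance of the $L^p(\T)$ norm, I would first write
\[
S_{s,p}(\ell)^{1/p} = \lVert R_s(\cdot+\ell) - R_s \rVert_p = 2\,\Bigl\lVert \sum_{n=1}^\infty \frac{\sin(\pi n^2\ell)}{n^{2s}}\, e_{n^2} \Bigr\rVert_p,
\]
so the whole theorem reduces to estimating this norm. The scale $n\simeq\ell^{-1/2}$ is where $n^2\ell\simeq1$: below it the sine is linear, $\sin(\pi n^2\ell)\simeq\pi n^2\ell$, and above it the sine merely oscillates with $|\sin|\le1$.

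For the upper bound I would split the sum at $M=\lfloor\ell^{-1/2}\rfloor$. On the low-frequency part $n\le M$ I apply Lemma~\ref{Application2} with $B\simeq\ell^{-1}$, so that $\cosh(2\pi B\ell)\simeq1$; this produces a leading term $\ell\lVert\sum_{n\le M}n^{2(1-s)}e_{n^2}\rVert_p$ together with remainders of size $\ell^s\lVert K_M\rVert_p$ (and higher-order analogues for larger $s$, obtained by iterating the lemma). Recognizing $\sum_{n\le M}n^{2(1-s)}e_{n^2}$ as a low-pass filter of $R_{s-1}$, I evaluate the leading term with Theorem~\ref{TheoremNewZalcwasser} at the shifted parameter $s-1$ and $N=M\simeq\ell^{-1/2}$, and the remainder with Zalcwasser's Theorem~\ref{TheoremZalcwasser}. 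For the high-frequency tail $n>M$ I bound $|\sin|\le1$ and observe that, up to the translation $e^{\pm\pi i n^2\ell}$, it is a pair of high-pass filters $(R_s)_{\ge M^2}$, whose norm is given by Theorem~\ref{TheoremHighPassFilters} at $N=M^2\simeq\ell^{-1}$. Taking the larger of the two contributions yields all stated upper bounds; the regime changes at $s=5/4$ and $p=2/(3-2s)$ appear exactly because they are where the low-pass exponent $s-1$ crosses the thresholds $1/4$ and $1/2-1/p$ of Theorem~\ref{TheoremNewZalcwasser}.

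For the lower bounds the robust tool is the leading Taylor term of the sine. I would isolate a single Littlewood-Paley block $\Delta_k$ with $A^k\simeq c\,\ell^{-1/2}$ sitting just below the critical frequency, with $c$ small, so that $n^2\ell\le c^2$ there and $\sin(\pi n^2\ell)$ is comparable to $\pi n^2\ell$. Writing $\lVert\Delta_k\rVert_p\ge\pi\ell\lVert\sum n^{2(1-s)}e_{n^2}\rVert_p-(\text{higher Taylor terms})$ and noting the correction is smaller by powers of $c^2$, the main term survives and, evaluated by Lemma~\ref{lemmaluis} and Theorem~\ref{TheoremZalcwasser}, gives exactly $\ell^s\psi_p(\ell^{-1/2})$; this reproduces the sharp power and the $p=4$ logarithm in all the high-frequency-dominated regimes. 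When the low frequencies dominate ($s=5/4$, and $5/4<s<3/2$ with $p<2/(3-2s)$) the same leading-Taylor computation applied to the whole block $n\le M$ yields the matching lower bound through Theorem~\ref{TheoremNewZalcwasser} at $s-1$. For $2\le p<4$ one may instead simply use $\lVert\cdot\rVert_p\ge\lVert\cdot\rVert_2=(\sum_n \sin^2(\pi n^2\ell)/n^{4s})^{1/2}$, and for $0<p<2$ transfer this to small $p$ by the Hölder interpolation trick used in Proposition~\ref{PropDeltak} and Theorem~\ref{TheoremNewZalcwasser}. Finally, the range $s\ge3/2$ I would treat separately and cleanly: since $R_s'=2\pi i\,R_{s-1}$ is then absolutely convergent, $R_s\in C^1$ and $R_s(x+\ell)-R_s(x)=\int_0^\ell R_s'(x+t)\,dt$ gives $S_{s,p}(\ell)^{1/p}\simeq\ell\lVert R_s'\rVert_p\simeq\ell$ directly, the lower bound needing one extra difference to absorb the second-order remainder.

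The \emph{main obstacle} is the lower bound for $p>4$: the oscillation of $\sin(\pi n^2\ell)$ forbids factoring the coefficient out of the $L^p$ norm via Lemma~\ref{lemmaluis}, and the $L^2$ bound is strictly too weak there. Choosing the block strictly below $\ell^{-1/2}$, where the sine is monotone and comparable to $\pi n^2\ell$, is precisely what makes the reverse triangle inequality on the Taylor series usable, and ensuring the constant $c$ is small enough that the higher Taylor corrections are genuinely negligible is the technical heart. A secondary difficulty is the set of critical exponents ($s=5/4$ with $p=4$, and $p=2/(3-2s)$), where the upper and lower estimates inherit the logarithmic gap already present at the critical cases of Theorems~\ref{TheoremZalcwasser} and~\ref{TheoremNewZalcwasser}, so only the one-sided bounds with a $(\log\ell^{-1})$-gap can be claimed, exactly as in the statement.
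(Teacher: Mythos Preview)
Your upper-bound strategy is essentially the paper's: split at $n\simeq\ell^{-1/2}$, bound the tail via Theorem~\ref{TheoremHighPassFilters}, and the low block via Lemma~\ref{Application2} feeding into Theorem~\ref{TheoremNewZalcwasser} at the shifted parameter $s-1$.

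For the lower bounds you take a genuinely different route. The paper's engine is a Plancherel positivity argument available only at even $p$: for $p=2k$, expanding $\lVert I_s(\ell)\rVert_{2k}^{2k}$ as a sum of squares and using that $\sin(\pi n^2\ell)>0$ for $n^2\le 1/(2\ell)$ yields $\lVert(I_s)_{\le(p\ell)^{-1}}\rVert_p\le\lVert I_s\rVert_p$ directly, with no Taylor remainder to control. All non-even $p$ are then recovered by H\"older interpolation between consecutive even integers, using the already-established upper bounds as the second ingredient. Your single-Littlewood--Paley-block argument (together with the whole-low-block variant in the low-frequency-dominated regime) is more direct and works uniformly for $p>1$, at the price of checking that the higher Taylor corrections on the chosen block are dominated by the leading term. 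It also carries a hidden bonus: if you push the whole-low-block variant through at the critical line $p=2/(3-2s)$, the critical lower bound of Theorem~\ref{TheoremNewZalcwasser} at parameter $s-1$ would give $S_{s,p}(\ell)^{1/p}\gtrsim\ell(\log\ell^{-1})^{1/p}$ for \emph{every} such $p$, whereas the paper's interpolation method only captures this for even $p$ and explicitly leaves the non-even case as an expectation.

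Two small points to tighten: (i) passing from $\lVert I_s\rVert_p$ to $\lVert(I_s)_{\le M}\rVert_p$ in the low-frequency-dominated regime needs the reverse triangle inequality together with the already-proved smallness of $\lVert(I_s)_{>M}\rVert_p$, which you should state explicitly; (ii) at $s=3/2$ exactly, $R_{s-1}=R_{1/2}=\sum e_{n^2}/n$ is \emph{not} absolutely convergent, so ``$R_s\in C^1$'' is too strong there---but your mean-value argument survives because $R_{1/2}\in L^p(\T)$ for all $p<\infty$ and translation is continuous in $L^p$.
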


In the rest of the section we prove Theorem~\ref{TheoremStructureFunctions}. First, using the Plancherel theorem, we compute the asymptotic behavior for even $p$. Then, using H\"older's inequality, we solve the case $s\geq3/2$ for all $p$.
For $s<3/2$, we first compute the upper bounds by using Lemma \ref{Application2} and Theorem \ref{TheoremNewZalcwasser}. Finally, using those upper bounds and the cases when $p \in 2\mathbb N$, we obtain the lower bounds interpolating with H\"older's inequality.

Recalling the definition of structure functions in \eqref{StructureFunction}, we first write
\begin{equation}
    S_{s,p}(\ell) ^{1/p}
    %= \int_{\mathbb T} \left| R_s(x+\ell) - R_s(x) \right|^p\, dx 
    = \left( \int_{\mathbb T} \left| R_s(x+\ell/2) - R_s(x-\ell/2) \right|^p\, dx \right)^{1/p}
    = \lVert I_s(\cdot, \ell) \rVert_p,
\end{equation}
where
\begin{equation}\label{DefI_s}
    I_s(x,\ell) =  R_s(x+\ell/2) - R_s(x-\ell/2) = 2i\, \sum_{n=1}^\infty \frac{\sin (\pi n^2 \ell)}{n^{2s}}\, e^{2\pi i n^2 x}.
\end{equation}
From now on, we denote \eqref{DefI_s} just by $I_s(\ell)$, and we drop the contribution of the factor $2i$. The main idea to obtain upper bounds for $S_{s,p}$ is to split $I_s$ in high and low frequencies, that is, for some $M > 1$, 
\begin{equation}\label{UpperAndLowerFilters}
    I_s(\ell)  = \sum_{n=1}^{M^{1/2}} \frac{\sin (\pi n^2 \ell)}{n^{2s}}\, e^{2\pi i n^2 x} + \sum_{n>M^{1/2}} \frac{\sin (\pi n^2 \ell)}{n^{2s}}\, e^{2\pi i n^2 x}  =  \left( I_s\right)_{\leq M}(\ell)   + \left(  I_s \right)_{>M} (\ell),
\end{equation}
for which we used the notation of filters in \eqref{HighPassFilterDefinition}. The triangle inequality immediately gives the upper bound
\begin{equation}\label{UpperBoundWithPlancherel}
    S_{s,p}(\ell)^{1/p} = \lVert I_s(\ell) \rVert_p \leq  \lVert \left( I_s\right)_{\leq M}(\ell)  \rVert_p +  \lVert \left( I_{s}\right)_{> M}(\ell)  \rVert_p, \qquad \forall M > 1, \quad \forall p \geq 1.
\end{equation}
Moreover, by \eqref{DefI_s} and Theorem~\ref{TheoremHighPassFilters} we can bound the high-pass filter by
\begin{equation}\label{BoundingHighPassFilter}
    \lVert \left( I_s\right)_{> M}(\ell)  \rVert_p 
    %& = \lVert \left( R_s(\cdot + \ell/2) - R_s(\cdot - \ell/2)  \right)_{> (p\ell)^{-1/2}}  \rVert_p \\
     \leq \lVert \left( R_s(\cdot + \ell/2)  \right)_{> M} \rVert_p  + \lVert \left(  R_s(\cdot - \ell/2)  \right)_{> M} \rVert_p  = 2\, \lVert \left( R_s  \right)_{> M} \rVert_p, \quad  \forall p \geq 1. 
\end{equation}
Lower bounds, on the contrary, are more delicate. By choosing $M = 1/(p\ell)$ and using the Plancherel theorem, we will show
\begin{equation}
    S_{s,p}(\ell)^{1/p} \geq \lVert \left( I_s\right)_{\leq (p\ell)^{-1}}(\ell)  \rVert_p, \qquad \forall p \in 2\mathbb N,
\end{equation}
which will allow us to compute $S_{s,p}(\ell)$ for $p \in 2\mathbb N$. The lower bounds for the rest of $p$ will follow by interpolation from $p \in 2\mathbb N$ and the upper bounds from \eqref{UpperBoundWithPlancherel}.

\subsection{Upper and lower bounds for \texorpdfstring{$p \in 2 \mathbb N$}{p} and every \texorpdfstring{$s$}{s}}

Let $p = 2k$ with $k \in \mathbb N$. Split $I_{s}(\ell)$ like in \eqref{UpperAndLowerFilters} with $M = (p\ell)^{-1}$. We are going to show that  
\begin{equation}\label{LowerBoundWithPlancherel}
    \lVert \left( I_s\right)_{\leq (p\ell)^{-1}}(\ell)  \rVert_p \leq S_{s,p}(\ell)^{1/p}  \leq  \lVert \left( I_s\right)_{\leq (p\ell)^{-1}}(\ell)  \rVert_p +  \lVert \left( I_{s}\right)_{> (p\ell)^{-1}}(\ell)  \rVert_p.
\end{equation}
Like in \eqref{UpperBoundWithPlancherel}, the upper bound comes from the triangle inequality. We prove the lower bound and compute $\lVert \left( I_s\right)_{\leq (p\ell)^{-1}}(\ell)  \rVert_p$ using the Plancherel theorem. Call
\begin{equation}
    a_n = \frac{\sin (\pi n \ell)}{n^{s}}, \qquad I_s(\ell) = \sum_{n=1}^\infty a_{n^2}\, e_{n^2} 
\end{equation}
so that 
\begin{equation}\label{CompleteSeries}
    \left\lVert I_s(\ell) \right\rVert_p^p = \left\lVert \left( \sum_{n=1}^\infty a_{n^2}\, e_{n^2} \right)^k  \right\rVert_2^2 = \sum_{n=1}^\infty |b_n|^2, \qquad \text{ where } \quad
    b_n = \sum_{n_1^2+\ldots + n_k^2 = n} a_{n_1^2} \ldots a_{n_k^2}.
    %\qquad \left( \sum_{n=1}^\infty a_{n^2}\, e_{n^2} \right)^k = \sum_{n=1}^\infty b_n \, e_n.
\end{equation}
In the same way, let 
\begin{equation}\label{FilterSeries}
    \left\lVert (I_s)_{\leq (p\ell)^{-1}}(\ell) \right\rVert_p^p 
    %= \left\lVert \left( \sum_{n=1}^{(p\ell)^{-1/2}} a_n\, e_n \right)^k  \right\rVert_2^2
    = \sum_{n=1}^{k/(p\ell)} |\tilde b_n|^2, \qquad \text{ where } \quad 
    \tilde b_n = \sum_{\substack{n_1^2 + \ldots + n_k^2 = n \\ n_i  \leq (p\ell)^{-1/2}, \forall i}} a_{n_1^2} \ldots a_{n_k^2}.
    % \qquad \left( \sum_{n=1}^{(p\ell)^{-1/2}} a_{n^2}\, e_{n^2} \right)^k = \sum_{n=1}^{k/(p\ell)} \tilde b_n \, e_n.
\end{equation}
Observe that
\begin{equation}\label{SineIsBounded}
    \frac12 \leq \frac{\sin(\pi n^2 \ell)}{ \pi n^2 \ell} \leq 1, \qquad \text{ whenever } \quad  n^2 \leq \frac{k}{p\ell} = \frac{1}{2\ell},
\end{equation}
which implies that $a_{n^2}>0$
%$b_n, \tilde b_n >0$ 
for every $n^2 \leq k/(p\ell)$. Thus, 
\begin{equation}
    \tilde b_n = \sum_{\substack{n_1^2 + \ldots + n_k^2 = n \\ n_i \leq (p\ell)^{-1/2}, \forall i}} a_{n_1^2} \ldots a_{n_k^2} \leq \sum_{n_1^2 + \ldots + n_k^2 = n } a_{n_1^2} \ldots a_{n_k^2} = b_n, \quad \forall n \leq k/(p\ell),
\end{equation}
and hence
\begin{equation}
    \left\lVert (I_s)_{\leq (p\ell)^{-1}}(\ell) \right\rVert_p^p = \sum_{n=1}^{k/(p\ell)} \tilde b_n^2 \leq \sum_{n=1}^{k/(p\ell)} b_n^2 \leq \left\lVert I_s(\ell) \right\rVert_p^p,
\end{equation}
which proves \eqref{LowerBoundWithPlancherel}.
What is more, from \eqref{SineIsBounded} we have $a_{n^2} \simeq \ell /n^{2(s-1)}$ for every $n^2 \leq k/(p\ell)$, so in view of \eqref{CompleteSeries} and  \eqref{FilterSeries}, we have
\begin{equation}\label{LowPassFilterToNewZalcwasser}
    \lVert \left( I_{s}\right)_{\leq (p\ell)^{-1}}(\ell)  \rVert_p^p =  \left\lVert \sum_{n=1}^{(p\ell)^{-1/2}} a_{n^2}\,e_{n^2}  \right\rVert_p^p \simeq  \ell^p \, \left\lVert  \sum_{n=1}^{(p\ell)^{-1/2}}  \frac{e_{n^2}}{n^{2(s-1)}}  \right\rVert_p^p,
\end{equation}
which we know from Theorem~\ref{TheoremNewZalcwasser}:

\begin{itemize}
    \item For $p=2$, Theorem~\ref{TheoremNewZalcwasser} gives
    \begin{equation}
        \lVert \left( I_{s}\right)_{\leq (2\ell)^{-1}}(\ell)  \rVert_2 \simeq \left\{ \begin{array}{ll}
            \ell^{s-1/4}, & \text{ if } s < 5/4, \\
            \ell \, \log(\ell^{-1})^{1/2}, & \text{ if } s=5/4, \\
            \ell & \text{ if } s > 5/4.
        \end{array}
        \right.
    \end{equation}
    Since by Theorem~\ref{TheoremHighPassFilters} we have $\lVert \left( R_s  \right)_{> (2\ell)^{-1}} \rVert_2 \simeq \ell^{s-1/4}$ for every $s > 1/4$, in view of \eqref{LowerBoundWithPlancherel} and  \eqref{BoundingHighPassFilter} we get
    \begin{equation}\label{SFwhenPis2}
        S_{s,2}(\ell)^{1/2} \simeq \lVert \left( I_{s}\right)_{\leq (2\ell)^{-1}}(\ell)  \rVert_2 \simeq \left\{ \begin{array}{ll}
            \ell^{s-1/4}, & \text{ if } s < 5/4, \\
            \ell \, \log(\ell^{-1})^{1/2}, & \text{ if } s=5/4, \\
            \ell & \text{ if } s > 5/4.
        \end{array}
        \right.
    \end{equation}

    \item For $p=4$, Theorem~\ref{TheoremNewZalcwasser} gives
    \begin{equation}
        \lVert \left( I_{s}\right)_{\leq (4\ell)^{-1}}(\ell)  \rVert_4  \simeq_s 
        \left\{ \begin{array}{ll}
            \ell^{s - 1/4}\, \log(\ell^{-1})^{1/4} & \text{ if } s < 5/4, \\
            \ell & \text{ if } s > 5/4,
        \end{array}
        \right.
    \end{equation}
    while Theorem~\ref{TheoremHighPassFilters} and  \eqref{BoundingHighPassFilter} imply that
    \begin{equation}\label{SFHighPassFilterWhenP4}
        \lVert \left( I_s\right)_{> (4\ell)^{-1}}(\ell)  \rVert_4 
        %\lesssim \lVert \left( R_s  \right)_{> (4\ell)^{-1}} \rVert_4 
        \lesssim_s \ell^{s - 1/4}\, \log(\ell^{-1})^{1/4}, \qquad \forall s > 1/4.
    \end{equation}
    Since $s > 5/4$ implies $\ell^{s-1/4} \ll \ell$ when $\ell \ll 1$, from \eqref{LowerBoundWithPlancherel} we get
    \begin{equation}
    S_{s,4}(\ell)^{1/4} \simeq_s \lVert \left( I_{s}\right)_{\leq (4\ell)^{-1}}(\ell)  \rVert_4  \simeq_s
       \left\{ \begin{array}{ll}
            \ell^{s-1/4}\, \log(\ell^{-1})^{1/4} & \text{ if } s < 5/4, \\
            \ell & \text{ if } s > 5/4. 
        \end{array}
        \right.
    \end{equation}
    Regarding $s=5/4$, \eqref{LowPassFilterToNewZalcwasser} and Theorem~\ref{TheoremNewZalcwasser} give
    \begin{equation}
        \ell\, \log(\ell^{-1})^{1/2} \lesssim \left\lVert ( I_{5/4} )_{\leq (4\ell)^{-1}}(\ell)  \right\rVert_4 \lesssim \ell\,  \log(\ell^{-1})^{3/4}, 
    \end{equation}
    and since from \eqref{SFHighPassFilterWhenP4} we have $ \lVert \left( I_s\right)_{> (4\ell)^{-1}}(\ell)  \rVert_4 \lesssim \ell\, \log(\ell^{-1})^{1/4} $, we get
    \begin{equation}
        \ell\, \log(\ell^{-1})^{1/2} \lesssim S_{5/4, 4}(\ell)^{1/4} \lesssim \ell\, \log(\ell^{-1})^{3/4}. 
    \end{equation}

    \item For $p =6, 8, 10, \ldots$, \eqref{LowPassFilterToNewZalcwasser} and Theorem~\ref{TheoremNewZalcwasser} give
    \begin{equation}
        \lVert \left( I_{s}\right)_{\leq (p\ell)^{-1}}(\ell)  \rVert_p 
        %\simeq \ell^p \, \left\lVert  \sum_{n=1}^M  \frac{\sigma_n}{n^{s-1}}\,e_n  \right\rVert_p^p 
        \simeq_{s,p} 
        \left\{ \begin{array}{ll}
            \ell^{s  + 1/p - 1/2 }, & \text{ if } s < 3/2 - 1/p, \\
            \ell, & \text{ if } s > 3/2 - 1/p,
        \end{array}
        \right.
    \end{equation}
    and from \eqref{BoundingHighPassFilter} and  Theorem~\ref{TheoremHighPassFilters} we get
    \begin{equation}\label{SFHighPassFilterWhenPGreaterThan4}
        \lVert \left( I_s\right)_{> (p\ell)^{-1}}(\ell)  \rVert_p \lesssim_{s,p}
        %\lVert \left( R_s  \right)_{> (p\ell)^{-1}} \rVert_p \simeq_{p,s} 
        \ell^{s + 1/p  - 1/2}, \qquad \forall s > 1/2 - 1/p.
    \end{equation}
    Thus, since $s > 3/2 - 1/p$ implies $\ell^{s + 1/p - 1/2} \ll \ell$ when $\ell \ll 1$, from \eqref{LowerBoundWithPlancherel} we get
    \begin{equation}
    S_{s,p}(\ell)^{1/p} \simeq_{s,p} \lVert \left( I_{s}\right)_{\leq (p\ell)^{-1}}(\ell)  \rVert_p  \simeq_{s,p}
       \left\{ \begin{array}{ll}
            \ell^{s + 1/p - 1/2}, & \text{ if } s < 3/2 - 1/p, \\
            \ell, & \text{ if } s > 3/2 - 1/p. 
        \end{array}
        \right.
    \end{equation}
    When $s = 3/2 - 1/p$, \eqref{LowPassFilterToNewZalcwasser} and Theorem~\ref{TheoremNewZalcwasser} gives
    \begin{equation}
        \ell \, \log(\ell^{-1})^{1/p} \lesssim_p \left\lVert  ( I_{3/2-1/p} )_{\leq (p\ell)^{-1}}(\ell)  \right\rVert_p \lesssim_p \ell\,  \log(\ell^{-1})^{1/2}, 
    \end{equation}
    and from \eqref{SFHighPassFilterWhenPGreaterThan4} we have $ \lVert ( I_{3/2-1/p} )_{> (p\ell)^{-1}}(\ell)  \rVert_p \lesssim \ell $, so 
    \begin{equation}\label{even p and critical s}
        \ell \, \log(\ell^{-1})^{1/p} \lesssim_p S_{3/2-1/p,p}(\ell)^{1/p} \lesssim_p \ell\, \log(\ell^{-1})^{1/2}. 
    \end{equation}
\end{itemize}

Thus the theorem is established for $p \in 2\mathbb N$ and for all $s$.

\subsection{Upper and lower bounds for \texorpdfstring{$s \geq 3/2$}{s32}}

When $s \geq 3/2$, we just saw that $S_{s,p}(\ell)^{1/p} \simeq \ell$ for every $p \in 2\mathbb N$. 
Thus, the result for general $p$ follows by interpolation in a very similar way as we did in \eqref{ExtrapolationByHolder}, \eqref{Interpolation1} and \eqref{Interpolation2}. 

Let $k \in \mathbb N$ such that  $2k < p < 2k+2$. For the upper bound, we interpolate for $p$ between $2k$ and $2k+2$ with H\"older's inequality by writing
\begin{equation}
    S_{s,p}(\ell) = \int |I_s(\ell)|^p  \leq \left( \int|I_s(\ell)|^{a\theta} \right)^{1/\theta} \, \left( \int |I_s(\ell)|^{b\theta'} \right)^{1/\theta'},
\end{equation}
where $a+b = p$, $a\theta = 2k$, $b\theta' = 2k+2$, $1/\theta + 1/\theta' = 1$. This implies $\theta = 2/(2k+2-p)>1$. Thus, since we know $S_{s,2k}$ and $S_{s,2k+2}$ when $k\geq 1$, we get
\begin{equation}
    S_{s,p}(\ell) \leq S_{s,2k}(\ell)^{1/\theta}\, S_{s,2k+2}(\ell)^{1/\theta'} \simeq_{s,p} \ell^{2k(2k+2-p)/2}\, \ell^{(2k+2)(p-2k)/2} = \ell^p
\end{equation}
for all $p > 2$. Like in \eqref{ExtrapolationByHolder}, the case $p<2$ easily follows from H\"older's inequality because
\begin{equation}\label{RepetitionOfExtrapolation}
    S_{s,p}(\ell) = \int |I_s(\ell)|^p \leq \left( \int |I_s(\ell)|^{p \, \frac{2}{p}}  \right)^{p/2} = S_{s,2}(\ell)^{p/2} \simeq_s  \ell^p. 
\end{equation}
For the lower bound, similar to what we did in \eqref{Interpolation1} and \eqref{Interpolation2}, we interpolate for $2k+2$ between $p$ and $2k+4$, that is, 
\begin{equation}\label{InterpolationGeneralWithHolder}
    S_{s,2k+2}(\ell) = \int |I_s(\ell)|^{2k+2} \leq \left( \int |I_s(\ell)|^{a\theta}  \right)^{1/\theta} \, \left( \int |I_s(\ell)|^{b\theta'}  \right)^{1/\theta'},
\end{equation}
where now $2k+2 = a+b$, $a\theta = p$, $b\theta' = 2k+4$, and also $1/\theta + 1/\theta'=1$. This implies $\theta = (2k+4-p)/2 > 1$. Then, 
\begin{equation}\label{InterpolationGeneral}
    S_{s,2k+2}(\ell) \leq S_{s,p}(\ell)^{1/\theta}\, S_{s,2k+4}^{1/\theta'} \quad \Longrightarrow \quad S_{s,p}(\ell) \geq S_{s,2k+2}^{\theta}\, S_{s,2k+4}^{-\theta / \theta'},
\end{equation}
and since we know the behavior of $S_{s,2k+2}$ and $S_{s,2k+4}$ as long as $k \geq 0$, we get 
\begin{equation}
    S_{s,p}(\ell) \geq \ell^{(2k+2)\theta} \, \ell^{(2k+4)(1-\theta)} = \ell^{2k+4 - 2\theta} = \ell^p, \qquad \forall p > 0.
\end{equation}
In short, we have proved that $S_{s,p}(\ell)^{1/p} \simeq_{s,p} \ell$ for every $p > 0$, concluding the case $s\geq 3/2$.

\begin{rmk}
Interpolation works for $s \geq 3/2$ because we have the same expression for all $p$. That does not happen when $s < 3/2$, where different ranges of $p$ give different expressions for $S_{s,p}$. Moreover, either in $p=4$ or in $p = 2/(3-2s)$ extra logarithmic terms appear. To solve this problem, we first compute the upper bounds using \eqref{UpperBoundWithPlancherel}.
\end{rmk}

\subsection{Upper bounds for \texorpdfstring{$s<3/2$}{s32} and all \texorpdfstring{$p$}{p}} \label{Subsection_SFUpperBoundsByDirectComputation}
We work with \eqref{UpperBoundWithPlancherel} with $M = \ell^{-1}$. Assume first that $p \geq 1$. From \eqref{BoundingHighPassFilter} and Theorem~\ref{TheoremHighPassFilters} we directly get
\begin{equation}\label{SFHighPassFilterUpperBound}
     \lVert \left(I_s\right)_{> \ell^{-1}}(\ell) \rVert_p \lesssim_{s,p} \left\{
    \begin{array}{ll}
         \ell^{s - 1/4}, & \text{ if } p < 4,   \\
        \ell^{s - 1/4}\, \log(\ell^{-1})^{1/4},  & \text{ if } p=4, \\
        \ell^{s + 1/p - 1/2}, & \text{ if } p >4.
    \end{array}
    \right.
\end{equation}
To deal with $\lVert \left(I_s\right)_{\leq \ell^{-1}} (\ell) \rVert_p$, we use Lemma~\ref{Application2} with $B=\ell^{-1}$. 
We have two cases:
\begin{itemize}
    \item For $s \leq 1$, it gives
    \begin{equation}
    \lVert \left(I_s\right)_{\leq \ell^{-1}} (\ell) \rVert_p  
    = 
    \left\lVert \sum_{n=1}^{\ell^{-1/2}} \frac{\sin(\pi n^2 \ell)}{n^{2s}} \, e_{n^2} \right\rVert_p 
    \lesssim_s  
    \ell \, \Bigg\lVert \sum_{n=1}^{\ell^{-1/2}} n^{2(1-s)} \, e_{n^2} \Bigg\rVert_p
    +
    \ell^{s} \,  \Bigg\| \sum_{n=1}^{\ell^{-1/2}} e_{n^2} \Bigg\|_{L^p}.
    \end{equation}
    Using Theorem~\ref{TheoremNewZalcwasser} to bound the $L^p$ norms, 
    we get the same bound as for $\left(I_s\right)_{> \ell^{-1}}$ in \eqref{SFHighPassFilterUpperBound}. Thus, 
    \begin{equation}\label{SFLowPassFilterUpperBound}
     S_{s,p}(\ell)^{1/p}  \lesssim_{s,p} \left\{
    \begin{array}{ll}
         \ell^{ s - 1/4}, & \text{ if } p < 4,   \\
        \ell^{s - 1/4}\, \log(\ell^{-1})^{1/4},  & \text{ if } p=4, \\
        \ell^{s + 1/p - 1/2}, & \text{ if } p >4,
    \end{array}
    \right.
    \qquad \text{ for } s \leq 1.
    \end{equation}
    
    \item For $1<s<3/2$, Lemma~\ref{Application2} gives
        \begin{equation}
            \lVert \left(I_s\right)_{\leq \ell^{-1}} (\ell) \rVert_p  
            %= 
            %\left\lVert \sum_{n=1}^{\ell^{-1/2}} \frac{\sin(\pi n^2 \ell)}{n^{2s}} \, e_{n^2} \right\rVert_p 
            \lesssim_s 
            \ell \, \Bigg\| \sum_{n=1}^{\ell^{-1/2}} \,  \frac{e_{n^2}}{n^{2(s-1)}} \Bigg\|_{L^p} 
            +
            \ell^3 \, \Bigg\| \sum_{n=1}^{\ell^{-1/2}} \,  n^{2(3-s)}\, e_{n^2} \Bigg\|_{L^p} 
            +
            \ell^{s} \, \Bigg\| \sum_{n=1}^{\ell^{-1/2}}  e_{n^2} \Bigg\|_{L^p}.
        \end{equation}
        We further split cases:
        \begin{itemize}
            \item $1 < s < 5/4$: When $p < 4$, from Theorem~\ref{TheoremNewZalcwasser} we get
            \begin{equation}
                \lVert \left(I_s\right)_{\leq \ell^{-1}} (\ell) \rVert_p 
                \lesssim_{p,s} 
                \ell \, \ell^{s-5/4} +  \ell^3 \, \ell^{-1/4 + (s-3)}  + \ell^{s} \, \ell^{-1/4} 
                \simeq
                \ell^{ s - 1/4}.
            \end{equation}
            When $p = 4$, we get 
            \begin{equation}
            \begin{split}
                \lVert \left(I_s\right)_{\leq \ell^{-1}} (\ell) \rVert_4 
                & \lesssim_{s} 
                \left(\ell \, \ell^{s-5/4} +  \ell^3 \, \ell^{-1/4 + (s-3)}  + \ell^{s} \, \ell^{-1/4} \right) \, \log(\ell^{-1})^{1/4}
                 \\
                &
                \simeq 
                \ell^{s - 1/4}\, \log(\ell^{-1})^{1/4}.
            \end{split}
            \end{equation}
            When $p > 4$, we get 
            \begin{equation}
            \begin{split}
                \lVert \left(I_s\right)_{\leq \ell^{-1}} (\ell) \rVert_p 
                & \lesssim_{p,s} 
                \ell \, \ell^{s - 3/2 + 1/p} +   \ell^3 \, \ell^{-1/2 + 1/p + (s-3)}   + \ell^{s} \, \ell^{-1/2+1/p} \\
                & \simeq 
                \ell^{s + 1/p - 1/2}.
            \end{split}
            \end{equation}
            The bounds are the same as in \eqref{SFHighPassFilterUpperBound}, so the behavior matches the case $s \leq 1$:
            \begin{equation}\label{SFUpperBoundSSmaller14}
                S_{s,p}(\ell)^{1/p}  \lesssim_{s,p} \left\{
                \begin{array}{ll}
                    \ell^{s - 1/4}, & \text{ if } p < 4,   \\
                    \ell^{s - 1/4}\, \log(\ell^{-1})^{1/4},  & \text{ if } p = 4, \\
                    \ell^{s + 1/p - 1/2}, & \text{ if } p > 4,
                \end{array}
                \right.
                \qquad \text{ for } s < 5/4.
            \end{equation}
            
            \item $s=5/4$:  When $p < 4$, from Theorem~\ref{TheoremNewZalcwasser} we get
            \begin{equation}
                \lVert \left(I_s\right)_{\leq \ell^{-1}} (\ell) \rVert_p 
                \lesssim_{p}
                \ell \, \log(\ell^{-1})^{1/2} +  \ell^3 \, \ell^{ -1/4 + (5/4-3) } + \ell^{5/4} \, \ell^{-1/4}
                \simeq 
                \ell \, \log(\ell^{-1})^{1/2}.
            \end{equation}
            When $p = 4$, we get 
            \begin{equation}
            \begin{split}
                \lVert \left(I_s\right)_{\leq \ell^{-1}} (\ell) \rVert_4 
                & \lesssim
                \ell\,   \log(\ell^{-1})^{3/4} +  \ell^3 \, \ell^{ -1/4 + (5/4-3)} \, \log(\ell^{-1})^{1/4}  +  \ell^{5/4}\, \ell^{-1/4}\, \log(\ell^{-1})^{1/4} \\
                & \simeq
                \ell \, \log(\ell^{-1})^{3/4}.
            \end{split}
            \end{equation}
            When $p > 4$, we get 
            \begin{equation}
                \lVert \left(I_s\right)_{\leq \ell^{-1}} (\ell) \rVert_p 
                \lesssim_{p}
                \ell \, \ell^{-1/4 + 1/p} +  \ell^3 \, \ell^{-1/2 + 1/p + (5/4-3)}  +  \ell^{5/4} \, \ell^{-1/2+1/p} 
                \simeq
                \ell^{3/4 + 1/p }.
            \end{equation}
            The bounds for the high-pass filter in \eqref{SFHighPassFilterUpperBound} are smaller in all cases, so we get 
            \begin{equation}\label{SFUpperBoundS14}
                S_{s,p}(\ell)^{1/p} \lesssim_p \left\{ 
                \begin{array}{ll}
                    \ell \, \log(\ell^{-1})^{1/2} & \text{ if } p < 4,  \\
                    \ell \, \log(\ell^{-1})^{3/4} & \text{ if } p = 4, \\
                    \ell^{3/4 + 1/p }, & \text{ if } p > 4.
                \end{array}
                \right.
                \qquad \text{ when } s = 5/4. 
            \end{equation}
            
            \item $5/4 < s < 3/2$: In this case,  for $p < 4$ Theorem~\ref{TheoremNewZalcwasser} gives 
            \begin{equation}
                \lVert \left(I_s\right)_{\leq \ell^{-1}} (\ell) \rVert_p 
                \lesssim_{s,p}
                \ell +  \ell^3 \, \ell^{-1/4 + (s-3)}  +  \ell^{s} \, \ell^{-1/4} 
                \simeq 
                \ell + \ell^{ s - 1/4} \simeq \ell,
            \end{equation}
            because $ s - 1/4 > 1$. 
            When $p = 4$, we get 
            \begin{equation}
                \lVert \left(I_s\right)_{\leq \ell^{-1}} (\ell) \rVert_4 
                \lesssim_s
                \ell +  \ell^3 \, \ell^{-1/4 + (s-3)} \, \log{(\ell^{-1})}^{1/4}  + \ell^{s}\, \ell^{-1/4}\, \log(\ell^{-1})^{1/4} 
                \simeq
                \ell.
            \end{equation}
            When $4 < p < 2/(3-2s)$, which corresponds to $s > 3/2 - 1/p$, we get 
            \begin{equation}
                \lVert \left(I_s\right)_{\leq \ell^{-1}} (\ell) \rVert_p 
                \lesssim_{s,p}
                \ell +  \ell^3 \, \ell^{-1/2 + 1/p + (s-3)}  + \ell^{s} \, \ell^{-1/2+1/p} 
                \simeq
                \ell + \ell^{s + 1/p - 1/2}
                \simeq
                \ell
            \end{equation}
            because $s + 1/p - 1/2 > 1$. When $p = 2/(3-2s)$, 
            \begin{equation}
                \lVert \left(I_s\right)_{\leq \ell^{-1}} (\ell) \rVert_p 
                \lesssim_{s,p} 
                \ell \, \log(\ell^{-1})^{1/2} +  \ell^3 \, \ell^{-1/2 + 1/p + (s-3)}   +  \ell^{s} \, \ell^{-1/2 + 1/p}
                \simeq 
                \ell\, \log(\ell^{-1})^{1/2}. 
            \end{equation}
            Finally, when $p > 2/(3-2s)$, 
            \begin{equation}
                \lVert \left(I_s\right)_{\leq \ell^{-1}} (\ell) \rVert_p 
                \lesssim_{s,p}
                \ell \, \ell^{s - 3/2 + 1/p} +  \ell^3 \, \ell^{-1/2 + 1/p + (s-3)}  +  \ell^{s} \, \ell^{-1/2 + 1/p} 
                \simeq 
                \ell^{s + 1/p - 1/2} . 
            \end{equation}
            The bounds for the high-pass filter in \eqref{SFHighPassFilterUpperBound} are smaller in all cases, so 
            \begin{equation}\label{SFUpperBoundsSBetween14And12}
                S_{s,p}(\ell)^{1/p} \lesssim_{s,p} \left\{ 
                \begin{array}{ll}
                    \ell & \text{ if } p < 2/(1-2s),  \\
                    \ell \, \log(\ell^{-1})^{1/2} & \text{ if } p = 2/(3-2s), \\
                    \ell^{s + 1/p - 1/2 }, & \text{ if } p > 2/(3-2s).
                \end{array}
                \right.
                \qquad \text{ when } 5/4 < s < 3/2. 
            \end{equation}
        \end{itemize}
\end{itemize}
Upper bounds are thus established for all $s< 3/2$ and $p \geq 1$. For $0<p<1$, like in \eqref{RepetitionOfExtrapolation} H\"older's inequality implies $S_{s,p}(\ell)^{1/p} \leq S_{s,2}(\ell)^{1/2}$, which gives the desired bound for all $s < 3/2$.

\subsection{Lower bounds for all \texorpdfstring{$s < 3/2$}{s32} and all \texorpdfstring{$p$}{p}}

Let $p>0$ not even, so there exists $k \in \mathbb N \cup\{0\}$ such that $2k < p < 2k+2$. We separate cases in $s$:  

\begin{itemize}
    \item $s < 5/4$: Like in \eqref{InterpolationGeneralWithHolder} and \eqref{InterpolationGeneral}, since we know $S_{s,2k+2}$ and $S_{s,2k+4}$, we interpolate for $2k+2$ between $p$ and $2k+4$. However, to avoid the logarithm when $2k+2=4$ in \eqref{SFLowPassFilterUpperBound} and \eqref{SFUpperBoundSSmaller14}, we work with $2k+2 \geq 6$, so $p > 2k \geq 4$. In this case, with $\theta = (2k+4-p)/2$, 
we get
    \begin{equation}\label{InterpolationBetweenEvenCases}
        \begin{split}
        S_{s,p}(\ell) & \geq S_{s,2k+2}^{\theta}\, S_{s,2k+4}^{-\theta/\theta'} \simeq \ell^{(1-(2k+2)/2 + (2k+2)s)\theta}\, \ell^{(1-(2k+4)/2 + (2k+4)s)\,(1-\theta)} \\
        & %= \ell^{\theta (1-2s) - k - 1 +(2k+4)s} 
        = \ell^{ps + 1 - p/2}, \qquad \qquad \qquad  \text{ for } p > 4,
        \end{split}
    \end{equation}
    which matches the upper bound in \eqref{SFUpperBoundSSmaller14}. For $2 < p < 4$,  H\"older's inequality gives
    \begin{equation}\label{TrivialHolderForLowerBound}
        S_{s,2}(\ell) = \int |I_s(\ell)|^2 \leq \left( \int |I_s(\ell)|^{2 \, \frac{p}{2}} \right)^{2/p} 
        %\, \left( \int 1 \right)^{(1/(p/2)')} 
        = S_{s,p}(\ell)^{2/p},
    \end{equation}
    and hence
    \begin{equation}
        S_{s,p}(\ell)^{1/p} \geq S_{s,2}(\ell)^{1/2} \simeq_s \ell^{s-1/4}, \qquad 2 < p < 4.
    \end{equation}
    This matches the upper bound in \eqref{SFUpperBoundSSmaller14}. 
    For $p<2$, we interpolate for 2 between $p$ and 3 like in \eqref{Interpolation1} and \eqref{Interpolation2}. Indeed, with $\theta=3-p$, using \eqref{SFwhenPis2} and the upper bound for $S_{s,3}(\ell)$ in \eqref{SFUpperBoundSSmaller14}, we get
    \begin{equation}\label{InterpolatingBetween2And3_Version2}
        S_{s,p}(\ell) \geq \frac{S_{s,2}(\ell)^\theta}{S_{s,3}(\ell)^{\theta/\theta'}} \gtrsim \frac{\ell^{(2s-1/2)\theta}}{\ell^{(3s-3/4)(\theta - 1)}} = \ell^{ps - p/4}, \qquad 0 < p < 2.
    \end{equation}
    This coincides with the upper bound in \eqref{SFUpperBoundSSmaller14}, so we conclude the case $s<5/4$.

    \item $s=5/4$: 
    Like in \eqref{InterpolationBetweenEvenCases}, for $p>4$ we get
    \begin{equation}
        \begin{split}
        S_{5/4,p}(\ell) & \geq S_{5/4,2k+2}^\theta \, S_{5/4,2k+4}(\ell)^{-\theta/\theta'} \simeq_p \ell^{(1 + 3(k+1)/2)\theta}\, \ell^{(1 + 3(k+2)/2)(1-\theta)} \\
        & = \ell^{ -3\theta/2 + 1 + 3(k+2)/2  } = \ell^{1 + 3p/4}, \qquad \qquad p > 4,
        \end{split}
    \end{equation}
    which is equal to the upper bound in \eqref{SFUpperBoundS14}. For $2 < p < 4$, we do like in \eqref{TrivialHolderForLowerBound} to get
    \begin{equation}\label{TrivialHolderLowerBound2}
        S_{5/4,p}(\ell)^{1/p} \geq S_{5/4,2}(\ell)^{1/2} \simeq \ell\, \log(\ell^{-1})^{1/2}, \qquad 2 < p < 4,
    \end{equation}
    which matches the upper bound in \eqref{SFUpperBoundS14}. For $p<2$, like in \eqref{InterpolatingBetween2And3_Version2}, using \eqref{SFUpperBoundS14} we get
    \begin{equation}
        S_{5/4,p}(\ell) \geq \frac{S_{5/4,2}(\ell)^{3-p}}{S_{5/4,3}(\ell)^{2-p}} \gtrsim \ell^p \log( \ell^{-1})^{p/2}, \qquad 0 < p < 2,
    \end{equation}
    which is optimal according to \eqref{SFUpperBoundS14}. This concludes the case $s=5/4$.

    \item $5/4 < s < 3/2$: 
    If $2k+2 > p > 2/(3-2s)$, we do like in \eqref{InterpolationGeneralWithHolder} and \eqref{InterpolationGeneral} with $\theta=(2k+4-p)/2$ and we use the result for $p \in 2\mathbb N$ to get
    \begin{equation}
        \begin{split}
        S_{s,p}(\ell) & \geq S_{s,2k+2}(\ell)^\theta \, S_{s,2k+4}(\ell)^{-\theta/\theta'} \simeq_{s,p} \ell^{(1-(k+1)+(2k+2)s)\theta}\, \ell^{(1-(k+2)+(2k+4)s)(1-\theta)} \\ 
        & = \ell^{\theta (1-2s) - k - 1 + (2k+4)s} = \ell^{ps + 1 - p/2}, \qquad p > 2/(3-2s),
        \end{split}
    \end{equation}
    which is the same as the upper bound in \eqref{SFUpperBoundsSBetween14And12}.
    For $2 < p \leq 2/(3-2s)$, using H\"older's inequality like in \eqref{TrivialHolderForLowerBound} gives  
    \begin{equation}\label{SFLowerBoundSBetween14And12ForSmallP}
        S_{s,p}(\ell)^{1/p} \geq S_{s,2}(\ell)^{1/2} \simeq_s \ell, \qquad \qquad 2 < p \leq 2/(3-2s).
    \end{equation}
    This matches the upper bound in \eqref{SFUpperBoundsSBetween14And12} when $p < 2/(3-2s)$.
    For $p<2$, doing like in \eqref{InterpolatingBetween2And3_Version2} we use \eqref{SFwhenPis2} and \eqref{SFUpperBoundsSBetween14And12} to get
    \begin{equation}
        S_{s,p}(\ell) \geq \frac{S_{s,2}(\ell)^{3-p}}{S_{s,3}(\ell)^{2-p}} \gtrsim \ell^p, \qquad \qquad 0 < p < 2,
    \end{equation}
    which according to the upper bound in \eqref{SFUpperBoundsSBetween14And12} is optimal.
    Finally, in the case of the critical $p=2/(3-2s)$, joining \eqref{SFLowerBoundSBetween14And12ForSmallP} and the upper bound in \eqref{SFUpperBoundsSBetween14And12} we get
    \begin{equation}
        \ell \lesssim_p S_{p,s}(\ell)^{1/p} \lesssim \ell\, \log(\ell^{-1})^{1/2}, \qquad \text{ if } p = 2/(3-2s). %\qquad \text{ for } \frac54 < s < \frac32.
    \end{equation}
    Observe that if $p=2/(3-2s)$ is an even number, in \eqref{even p and critical s} we got a better lower bound $\ell \log(\ell^{-1})^{1/p}$. We expect this logarithmic correction for the lower bound to hold also for non-even $p$.

\end{itemize}
This concludes the proof of Theorem~\ref{TheoremStructureFunctions}.

\section{The multifractal formalism for \texorpdfstring{$R_s$}{Rs}}\label{SECTION_MultifractalFormalism}

Last, we prove Theorem~\ref{Theorem_MultifractalFormalism} and Proposition~\ref{Proposition_MultifractalEta} in Section~\ref{SECTION_Definitions_And_Results}, that is, that the functions $R_s$ satisfy the multifractal formalism. We begin with Proposition~\ref{Proposition_MultifractalEta}, which is a direct consequence of Proposition~\ref{PropDeltak}. Recall that 
\begin{equation}
    \eta(p) = \sup\{ \sigma \mid R_s \in B^{\sigma/p}_{p,\infty}  \}.
\end{equation}
Besov spaces can be described in the Fourier space using the Littlewood-Paley decomposition. As earlier, let $A$ be the Littlewood-Paley annuli parameter. For $k \in \mathbb N$, let us denote by $P_k f$ the Littlewood-Paley projections of $f$ onto $[A^k,A^{k+1})$.
Then,
\begin{equation}
    f \in B^\sigma_{p,\infty} \quad \Longleftrightarrow \quad \sup_{k \in \mathbb N}  A^{k\sigma}\, \lVert P_k f \rVert_{L^p} < \infty.
\end{equation}
\begin{rmk}
The blocks $P_kR_s$ here correspond to the Littlewood-Paley decomposition with parameter $A$. On the other hand, the blocks $\Delta_k$ that we defined in \eqref{LittlewoodPaleyDecompositionInProof} and used in Section~\ref{SECTION_LpNorms} correspond to the decomposition with parameter $A^{1/2}$.
\end{rmk}

\begin{proof}[Proof of Proposition~\ref{Proposition_MultifractalEta}]
Observe first that 
\begin{equation}
    P_k R_s(x) = \sum_{n= A^{k/2}}^{ (A^{k+1}-1)^{1/2}} \frac{e^{2\pi i n^2 x}}{n^{2s}}, 
\end{equation}
so we need to use Proposition~\ref{PropDeltak} with $A^{1/2}$. Then, we have 
\begin{equation}\label{LP_BesovBound}
    A^{k\sigma/p} \, \lVert P_k R_s \rVert_p \simeq  A^{k(\sigma/p + 1/4 - s)}, \qquad \text{ if } p < 4,
\end{equation}
which is bounded in $k$ if and only if $\sigma /p + 1/4 - s \leq 0$, or what is the same, $\sigma \leq p(s - 1/4)$. Thus, $\eta_s(p) = p(s-1/4)$ when $p<4$. The same result is valid for $p=4$ because an extra factor $k^{1/4}$ does not alter the boundedness of \eqref{LP_BesovBound}. On the other hand, also by Proposition~\ref{PropDeltak},  
\begin{equation}
    A^{k\sigma/p} \, \lVert P_k R_s \rVert_p \simeq  A^{k(\sigma/p + 1/2 - 1/p - s)}, \qquad \text{ if } p > 4,
\end{equation}
which remains bounded only when $\sigma \leq 1 + ps - p/2$. Thus, $\eta_s(p) = 1 + p(s - 1/2 )$ when $p>4$.
\end{proof}

Once we know the value of $\eta_s(p)$, we are ready to prove Theorem~\ref{Theorem_MultifractalFormalism}.
\begin{proof}[Proof of Theorem~\ref{Theorem_MultifractalFormalism}]
We need to compute the Legendre transform of $\eta_s$, which we denote by $\eta_s^*$,
and check that it coincides with $d_s$ in \eqref{SpectrumOfSingularitiesOfRs}. Write 
\begin{equation}
    \eta_s^*(\alpha) = \inf_{p>0}\{ \alpha p - \eta(p) + 1 \} = 1 - \sup_{p>0}\{ \eta_s(p) - \alpha p \}.
\end{equation}
Geometrically, we need to compute the supreme distance between the graph of $\eta_s$ and lines $p \mapsto \alpha p$. If $\alpha \geq s-1/4$, then 
$\eta_s(p) - \alpha p  \leq 0$. The equality holds at $p=0$, so $\eta_s^*(\alpha) = 1$. On the other hand, if $\alpha < s-1/2$, then 
$\eta_s(p) - \alpha p \geq 0$ and diverges as $p \to \infty$. Thus, $\eta_s^*(\alpha) = 1 - \infty = -\infty$. If $\alpha = s-1/2$, then $\alpha p$ is still below $\eta_s(p)$, but the two graphs become parallel when $p>4$. The maximum distance is attained at any $p\geq 4$, so $\eta^*(\alpha) = 1-(\eta_s(4)-4\alpha) =0$. Last, if $s-1/2 < \alpha < s-1/4$, then $p \mapsto \alpha p$ crosses the graph of $\eta_s(p)$ and $\eta_s(p) - \alpha p >0$ for small $p$. The maximum is reached at $p=4$, at the point where the slope of $\eta_s(p)$ changes, so $\eta_s^*(\alpha) = 1 - (\eta_s(4) - 4\alpha) = 4\alpha - 4s + 2$. Thus, 
\begin{equation}
    \eta_s^*(\alpha) = \left\{ \begin{array}{ll}
        -\infty, & \alpha < s-1/2, \\
        4\alpha - 4s + 2, & s-1/2 \leq \alpha \leq s-1/4, \\
        1, & \alpha > s-1/4,
        \end{array} \right.
\end{equation}
which matches the spectrum of singularities $d_s(\alpha)$ in \eqref{SpectrumOfSingularitiesOfRs} when $\alpha \leq s-1/4$.
 
\end{proof}

\section*{Acknowledgements}
We would like to acknowledge Alexandre Boritchev and Luis Vega for their support. Many thanks also to Arthur Vavasseur for his aid in some of the technical results in Section~\ref{SECTION_AuxiliaryResults}, to Ioannis Parissis and Luz Roncal for a  discussion on the results of Section~\ref{SECTION_LpNorms} and to Felipe Ponce-Vanegas for discussions on the analytical adaptation of intermittency. 

Daniel Eceizabarrena is supported by the Simons  Foundation  Collaboration  Grant  on  Wave Turbulence (Nahmod’s Award ID 651469). 
Victor Vilaça Da Rocha is supported by NSF grant DMS1800241.

\bibliographystyle{acm}
\bibliography{EV}

\begin{thebibliography}{10}

\bibitem{BanicaVega2020}
{\sc Banica, V., and Vega, L.}
\newblock Riemann's non-differentiable function and the binormal curvature
  flow.
\newblock {\em Preprint, arXiv: 2007.07184\/} (2020).

\bibitem{BarralSeuret}
{\sc Barral, J., and Seuret, S.}
\newblock Besov spaces in multifractal environment and the {F}risch-{P}arisi
  conjecture.
\newblock {\em Preprint, arXiv: 2007.00971\/} (2020).

\bibitem{BoritchevEceizabarrenaVilacaDaRocha}
{\sc Boritchev, A., Eceizabarrena, D., and Da~Rocha, V.~V.}
\newblock Intermittency of {R}iemann's non-differentiable function through the
  fourth-order flatness.
\newblock {\em J. Math. Phys. 62}, 9 (2021), Paper No. 093101, 14.

\bibitem{Bourgain1989}
{\sc Bourgain, J.}
\newblock On {$\Lambda(p)$}-subsets of squares.
\newblock {\em Israel J. Math. 67}, 3 (1989), 291--311.

\bibitem{Bourgain1993Part1}
{\sc Bourgain, J.}
\newblock Fourier transform restriction phenomena for certain lattice subsets
  and applications to nonlinear evolution equations. {I}. {S}chr\"{o}dinger
  equations.
\newblock {\em Geom. Funct. Anal. 3}, 2 (1993), 107--156.

\bibitem{BrunPumir}
{\sc Brun, C., and Pumir, A.}
\newblock Statistics of {F}ourier modes in a turbulent flow.
\newblock {\em Phys. Rev. E 63\/} (2001), 056313.

\bibitem{ChamizoCordoba1999}
{\sc Chamizo, F., and C\'{o}rdoba, A.}
\newblock Differentiability and dimension of some fractal {F}ourier series.
\newblock {\em Adv. Math. 142}, 2 (1999), 335--354.

\bibitem{ChamizoUbis2007}
{\sc Chamizo, F., and Ubis, A.}
\newblock Some {F}ourier series with gaps.
\newblock {\em J. Anal. Math. 101\/} (2007), 179--197.

\bibitem{ChamizoUbis2014}
{\sc Chamizo, F., and Ubis, A.}
\newblock Multifractal behavior of polynomial {F}ourier series.
\newblock {\em Adv. Math. 250\/} (2014), 1--34.

\bibitem{ChevillardEtAl2012}
{\sc Chevillard, L., Castaing, B., Arneodo, A., Lévêque, E., Pinton, J.-F.,
  and Roux, S.~G.}
\newblock A phenomenological theory of {E}ulerian and {L}agrangian velocity
  fluctuations in turbulent flows.
\newblock {\em C. R. Phys. 13\/} (2012), 899--928.

\bibitem{ChevillardEtAl}
{\sc Chevillard, L., Mazellier, N., Poulain, C., Gagne, Y., and Baudet, C.}
\newblock Statistics of {F}ourier modes of velocity and vorticity in turbulent
  flows: intermittency and long-range correlations.
\newblock {\em Phys. Rev. Lett. 95\/} (2005), 200203.

\bibitem{DaubechiesLagarias}
{\sc Daubechies, I., and Lagarias, J.~C.}
\newblock On the thermodynamic formalism for multifractal functions.
\newblock vol.~6. 1994, pp.~1033--1070.
\newblock Special issue dedicated to Elliott H. Lieb.

\bibitem{DelaHozVega}
{\sc de~la Hoz, F., and Vega, L.}
\newblock Vortex filament equation for a regular polygon.
\newblock {\em Nonlinearity 27}, 12 (2014), 3031--3057.

\bibitem{Eceizabarrena2019}
{\sc Eceizabarrena, D.}
\newblock Some geometric properties of {R}iemann's non-differentiable function.
\newblock {\em C. R. Math. Acad. Sci. Paris 357}, 11-12 (2019), 846--850.

\bibitem{Eceizabarrena2020}
{\sc Eceizabarrena, D.}
\newblock Geometric differentiability of {R}iemann's non-differentiable
  function.
\newblock {\em Adv. Math. 366\/} (2020), 107091, 39.

\bibitem{Eceizabarrena1}
{\sc Eceizabarrena, D.}
\newblock On the {H}ausdorff dimension of {R}iemann's non-differentiable
  function.
\newblock {\em Trans. Amer. Math. Soc. 374}, 11 (2021), 7679--7713.

\bibitem{Eyink}
{\sc Eyink, G.~L.}
\newblock Besov spaces and the multifractal hypothesis.
\newblock vol.~78. 1995, pp.~353--375.
\newblock Papers dedicated to the memory of Lars Onsager.

\bibitem{Frisch}
{\sc Frisch, U.}
\newblock {\em Turbulence}.
\newblock Cambridge University Press, Cambridge, 1995.
\newblock The legacy of A. N. Kolmogorov.

\bibitem{FrischParisi}
{\sc Frisch, U., and Parisi, G.}
\newblock On the singularity structure of fully developed turbulence.
\newblock In {\em Proc. Int. Summer School Phys. Enrico Fermi\/} (1985),
  pp.~84--88.
\newblock Appendix to `Fully developed turbulence and intermittency', by U.
  Frisch.

\bibitem{Grafakos2009}
{\sc Grafakos, L.}
\newblock {\em Classical {F}ourier analysis}, second~ed., vol.~249 of {\em
  Graduate Texts in Mathematics}.
\newblock Springer, 2008.

\bibitem{Jaffard1992}
{\sc Jaffard, S.}
\newblock Sur la dimension de {H}ausdorff des points singuliers d'une fonction.
\newblock {\em C. R. Acad. Sci. Paris S\'{e}r. I Math. 314}, 1 (1992), 31--36.

\bibitem{Jaffard1996}
{\sc Jaffard, S.}
\newblock The spectrum of singularities of {R}iemann's function.
\newblock {\em Rev. Mat. Iberoamericana 12}, 2 (1996), 441--460.

\bibitem{Jaffard1}
{\sc Jaffard, S.}
\newblock Multifractal formalism for functions. {I}. {R}esults valid for all
  functions.
\newblock {\em SIAM J. Math. Anal. 28}, 4 (1997), 944--970.

\bibitem{Jaffard2}
{\sc Jaffard, S.}
\newblock Multifractal formalism for functions. {II}. {S}elf-similar functions.
\newblock {\em SIAM J. Math. Anal. 28}, 4 (1997), 971--998.

\bibitem{Rudin}
{\sc Rudin, W.}
\newblock Trigonometric series with gaps.
\newblock {\em J. Math. Mech. 9\/} (1960), 203--227.

\bibitem{Westfall}
{\sc Westfall, P.~H.}
\newblock Kurtosis as peakedness, 1905--2014. {\it {r}.{i}.{p}.}
\newblock {\em Amer. Statist. 68}, 3 (2014), 191--195.

\bibitem{Zalcwasser}
{\sc Zalcwasser, Z.}
\newblock Sur les polynomes associ\'es aux fonctions modulaires $\theta$.
\newblock {\em Studia Math. 7\/} (1938), 16--35.

\end{thebibliography}

\end{document}